\documentclass[aps,pre,reprint,amsmath,amssymb,longbibliography]{revtex4-2}
\usepackage[T1]{fontenc}
\usepackage{graphicx,bm,booktabs,needspace}
\usepackage{tikz}
\usetikzlibrary{arrows.meta,positioning,fit}
\usepackage{xcolor}
\usepackage{amsthm}
\usepackage{mathrsfs}
\usepackage[hidelinks]{hyperref}

\newtheorem{theorem}{Proposition}
\newtheorem{proposition}[theorem]{Proposition}
\newcommand{\E}{\mathbb E}
\newcommand{\one}{\bm 1}
\newcommand{\T}{\mathsf T}
\newcommand{\dd}{\,\mathrm d}

\newcommand{\refl}{\mathrm{refl}}

\newcommand{\gr}{\operatorname{gr}}

\makeatletter
\newcommand{\labelalias}[2]{\@ifundefined{r@#2}{}{%
\protected@write\@auxout{}{\string\newlabel{#1}{\csname r@#2\endcsname}}}}
\makeatother

\begin{document}
\title{Protected domains and the cost of cooperation on weighted networks}
\author{Abhishek Chowdhury}
\email{achowdhury@iitbbs.ac.in}
\affiliation{School of Basic Sciences, Indian Institute of Technology Bhubaneswar, India}
\begin{abstract}

Contacts that protect a cooperative group can also make its founding sites
difficult for newborn mutants to reach. We study this tension in a donation
game with benefit $b$ and cost $c$ on finite weighted networks. Reciprocal
contacts determine averaged payoffs and route copying from sources selected
globally with a payoff-dependent bias. We compare takeover probabilities of
a cooperator and a defector, each introduced singly into a population of
the opposite type. Exact linear response about neutral drift and
elimination of fast configurations identify protected domains as collective
competitors. Under uniform introduction, the sharp weak-selection threshold
infima are $b/c=3$ on the four-cycle, $7$ on the five-site path and $1$ on
every fixed path with at least six sites. Introduction through newborn
offspring eliminates this advantage on the five-site path. On six sites it
preserves the infimum one but raises the minimum contact contrast, the
largest weight divided by the smallest, needed for threshold $1+\epsilon$
from $656\epsilon^{-2}$ to $4050\epsilon^{-3}$ at leading order. Global
polynomial bounds force every six-site design approaching threshold one
into the domain hierarchy. Algebraic relations among contact ratios control
optimization over all positive weights and establish exact reflection
symmetry of the optimizer at sufficiently large contrast. For uniform
introduction, the geometry determines absorption and fixation times and
tolerance to contact errors. Nonlinear selection narrows the favorable
window, making the fixation-probability difference small near the minimum
contrast. A sharp variance--time bound quantifies the observation cost when
the site- and type-specific terminal fixation probabilities are
independently unknown. The same contacts govern collective persistence,
access by mutation and the time to resolve the invasion advantage.
\end{abstract}
\maketitle

\section{Introduction}
\label{sec:intro}

Cooperation separates individual advantage from collective success.
In the donation game, a cooperator pays a cost $c$ to confer a benefit
$b$ on a partner, while a defector avoids that cost. A group of
cooperators can benefit mainly one another, but a rare cooperator must
reach a favorable location, establish a persistent group and spread
beyond it. Its chance of fixation, the eventual takeover of the
population, depends on all three stages~\cite{AllenNowak2014,McAvoyRaoHauert2021}. We ask which
microscopic contacts sustain this collective competitor and how its
advantage depends on the range of contact strengths, the time available
for invasion and the precision with which the contacts are controlled.

For a physicist, this is a nonequilibrium ordering and first-passage
problem. Binary sites copy one another at configuration-dependent
rates until all sites agree; either aligned state is absorbing.
With no selection bias, the process belongs to the voter--invasion
family and has a description in terms of kinetic spins and coalescing
ancestral walks~\cite{SoodAntalRedner2008,BaronchelliCastellanoPastorSatorras2011,Schutz2001}.
The mean occupation alone misses the mechanism. For the uniform
single-introduction experiment, that mean stays constant at neutrality
while correlations evolve. Their full history determines the response
of the absorbing-state probabilities to selection. 

We compare the takeover probabilities of a cooperator introduced into
a defecting population and a defector introduced into a cooperating
population, using the same introduction mechanism in both experiments.
Uniform introduction selects a site independently of its replacement
rate; mutation in a newborn offspring selects the recipient of a copying
event. These mechanisms probe the same network differently. Under equal,
sufficiently rare introductions, the fixation comparison also determines
which consensus population is more abundant over long
times~\cite{McAvoyAllen2021}. It connects the fate of one individual
to population persistence without assuming that a cooperative group
was already present.

We use the donation game with positive reciprocal weights on a fixed
graph. A source is chosen globally according to its fecundity, the relative
propensity to become the copying source, 
  and then
copies its trait to a locally chosen recipient. The same normalized
kernel averages the benefit and routes replacement. Both conventions
matter. Source-first Birth--death updating differs from choosing a
vacancy first; averaged encounters differ from accumulated payoffs or
a donor's allocation of a fixed production budget
\cite{OhtsukiEtAl2006,McAvoyAllenNowak2020,McAvoyRaoHauert2021}.
For this averaged source-first game, isothermal graphs with equal site
strengths cannot favor cooperation at weak selection.
Ref.~\cite[Discussion]{AllenLippnerNowak2019} asked whether nonisothermal weighted
graphs could overcome that obstruction.

Exact weak-selection methods express the fixation response through
neutral ancestry and reproductive values, the influence of an initial
site on eventual consensus~\cite{AllenEtAl2017,AllenEtAl2021,McAvoyAllen2021}.
They determine whether a prescribed network favors cooperation.
Numerical enumeration also finds no source-first donation promoters
among the small unweighted networks studied in
Ref.~\cite[Fig.~4A]{WangSu2026}. Our question is an optimization over
all positive weights on a fixed support. A successful contact hierarchy
is a construction; a sharp bound must also exclude every competing
hierarchy, including singular limits where some copying rates vanish.

The update and payoff conventions place this question among distinct
routes to cooperation. Strong ties and joined communities can help
under death-first updating~\cite{AllenEtAl2017,Fotouhi2018Conjoining}.
Growing decorated graphs with death-first updating and accumulated
payoffs approach benefit--cost ratio unity with strong robustness
properties~\cite{SvobodaChatterjee2025}. Weighted one-dimensional
models with synchronous imitation have local analytical and numerical
antecedents~\cite{IwataAkiyama2016}. Community models derive competition
between collective units from local establishment and transfer
\cite{PiresBroom2024Communities,MoawadAbbaraBitbol2024Demes}, while
temporal-network calculations connect cluster-formation speed to
fixation~\cite{LiMengZhouMasudaWang2026}. These results make the
formation and export of descendants central to the comparison.
Our source-first, averaged-payoff process generates its collective
units on a fixed reciprocal support by varying static contact weights.
A favorable fixation probability also has a temporal cost.
Probability--time tradeoffs are known for constant-selection
amplifiers~\cite{SlowTkadlec2019,SlowTkadlec2021}, and constrained
weight changes have been optimized numerically on directed,
time-dependent graphs~\cite{Bonneuil2026Resource}. Here the contact
pattern must meet a game-dependent invasion target. Optimizing that
pattern and its absorption time together determines whether improving
the target is compatible with a finite range of weights and a
resolvable experimental signal.

The small supports in Fig.~\ref{fig:paths} distinguish the relevant
establishment processes. Under paired uniform singleton introductions,
paths with at most four sites do not promote at weak selection. The
five-site path has threshold infimum seven, whereas every fixed path
with at least six sites can approach $b/c=1$. The four-cycle has
infimum three and is minimal in both sites and edges among simple
loopless undirected promoting supports. None of these infima is
attained at finite positive weights. On six sites, two coherent triples
form around slowly overwritten central sites. Each central site aligns
its followers before the two triples compete across their weak bridge.
On five sites, replacing a triple by a dimer introduces an adverse
establishment response and changes the sharp threshold.

Changing how that individual arrives can alter even the support
classification. Birth-associated introduction removes every weak
promoter on the five-site path. On six sites it preserves the
threshold infimum one, but raises the least contact contrast, the
largest weight divided by the smallest, from an inverse square to an inverse cube of the threshold gap. A protected
pin is overwritten rarely and therefore receives newborn mutants
rarely. Reoptimizing the weights compensates for this suppressed
access, at a sharply greater contact cost.

\begin{figure*}[t]
\centering
\begin{tikzpicture}[x=1cm,y=1cm,font=\small,
 site/.style={circle,draw=black!65,fill=white,minimum size=4.8mm,inner sep=0pt},
 pin/.style={site,fill=blue!65!black,text=white},
 edge/.style={draw=black!65,line width=.75pt}]
\node[anchor=west] at (0,2.8) {\textbf{(a)} Four-site path \(P_4\)};
\foreach \i/\x in {1/.5,2/2.2,3/3.9,4/5.6}
 {\node[site] (a\i) at (\x,1.6) {\scriptsize\i};}
\draw[edge] (a1)--node[above=3pt] {\(w_{12}\)}(a2);
\draw[edge] (a2)--node[above=3pt] {\(w_{23}\)}(a3);
\draw[edge] (a3)--node[above=3pt] {\(1\)}(a4);
\node at (3.05,.45) {No weak-selection promoters};
\node[anchor=west] at (8.1,2.8) {\textbf{(b)} Four-cycle \(C_4\)};
\node[pin] (b1) at (9,1.95) {\scriptsize1};
\node[pin] (b2) at (10.8,1.95) {\scriptsize2};
\node[site] (b3) at (10.8,.55) {\scriptsize3};
\node[site] (b4) at (9,.55) {\scriptsize4};
\draw[edge] (b1)--node[above=2pt] {\(1\)}(b2);
\draw[edge] (b2)--node[right=2pt] {\(q\)}(b3);
\draw[edge] (b3)--node[below=2pt] {\(Q\)}(b4);
\draw[edge] (b4)--node[left=2pt] {\(q\)}(b1);
\node[align=left,anchor=west] at (12,1.2) {Minimal promoter\\[3pt]\(\inf R=3\)};
\draw[black!15] (0,-.25)--(15.8,-.25);
\node[anchor=west] at (0,-.85) {\textbf{(c)} Five-site path \(P_5\)};
\foreach \i/\x in {1/.4,2/1.7,4/4.3,5/5.6}
 {\node[site] (c\i) at (\x,-1.85) {\scriptsize\i};}
\node[pin] (c3) at (3,-1.85) {\scriptsize3};
\draw[edge] (c1)--node[above=3pt] {\(A\)}(c2);
\draw[edge] (c2)--node[above=3pt] {\(1\)}(c3);
\draw[edge] (c3)--node[above=3pt] {\(q\)}(c4);
\draw[edge] (c4)--node[above=3pt] {\(Q\)}(c5);
\node[text=black!65] at (1.05,-2.45) {dimer};
\node[text=blue!65!black] at (4.3,-2.45) {protected triple};
\node at (3,-3.12) {Smallest promoting path, \(\inf R=7\)};
\node[anchor=west] at (8.1,-.85) {\textbf{(d)} Six-site path \(P_6\)};
\foreach \i/\x in {1/8.4,2/9.7,5/13.6,6/14.9}
 {\node[site] (d\i) at (\x,-1.85) {\scriptsize\i};}
\node[pin] (d3) at (11,-1.85) {\scriptsize3};
\node[pin] (d4) at (12.3,-1.85) {\scriptsize4};
\draw[edge] (d1)--node[above=3pt] {\(Q\)}(d2);
\draw[edge] (d2)--node[above=3pt] {\(q\)}(d3);
\draw[edge] (d3)--node[above=3pt] {\(1\)}(d4);
\draw[edge] (d4)--node[above=3pt] {\(q\)}(d5);
\draw[edge] (d5)--node[above=3pt] {\(Q\)}(d6);
\node[text=blue!65!black] at (11.65,-2.45) {two protected triples};
\node at (11.65,-3.12) {Smallest path approaching \(\inf R=1\)};
\path[use as bounding box] (0,-3.45) rectangle (16,3.05);
\end{tikzpicture}

\caption{Small supports distinguish establishment mechanisms.
All thresholds in this figure use uniform singleton introduction.
$P_4$ represents the paths with at most four sites that do not promote
at weak selection. $C_4$ is a minimal promoter, $P_5$ the smallest
promoting path, and $P_6$ the smallest path approaching threshold one.
Edge labels are relative weights; line thickness does not encode their
magnitude. Filled sites are dynamically protected pins, not fixed
external spins. The limits are $q\to\infty$, $q^2/Q\to0$ on $C_4$;
$1\ll q\ll A\ll Q$ on $P_5$; and $q\to\infty$, $q/Q\to0$ on
$P_6$. The six-site construction extends to every fixed longer path.}
\label{fig:paths}
\end{figure*}

The mathematics connects the local mechanism to a global result.
Grading the five-site response by outer contact degree identifies a
monotone improvement direction and determines the best threshold at
fixed inner ratio. On six sites, a positive polynomial bound forces
every near-barrier promoter into the protected-domain hierarchy.
Relations among its small ratios form a rank-one matrix. The resulting
toric coordinates control the response at the singular hierarchy limit,
allowing optimization over all positive weightings and proving that
the optimum is eventually exactly reflection symmetric.

The least six-site contact contrast is asymptotic to $656\epsilon^{-2}$ under uniform
introduction and $4050\epsilon^{-3}$ under birth-associated introduction
for a threshold $b/c=1+\epsilon$. Their different powers follow from
different leading monomials in the same exact response construction.
Under uniform introduction, further contrast can shorten an invasion,
but proximity to the minimum leaves little room for contact errors.
The constrained quadratic geometry determines this tolerance and an
exact waiting-time optimizer that persists at sufficiently small
selection. Nonlinear selection consumes the same margin, so the
improved threshold can coexist with a shrinking fixation advantage.
An optimal terminal-estimation bound connects that advantage to the
statistical time needed to resolve it.

Section~\ref{sec:model} defines the game and invasion experiment.
Sections~\ref{sec:response-theory} and \ref{sec:domains} derive the
response and domain mechanism. The support classification and global
contact design follow in Secs.~\ref{sec:classification} and
\ref{sec:cost}; time, tolerance and finite signal are developed in
Secs.~\ref{sec:v3-tolerance} and \ref{sec:v3-selection-window}.
Appendices~\ref{app:path-classification}--\ref{app:v3-joint-selection}
contain the threshold, establishment, optimization and observation-cost
proofs. Appendices~\ref{sec:physics} and \ref{app:v3-degree-closure}
develop the operator and response constructions;
Appendices~\ref{app:v4-toric}--\ref{app:v4-exact-time} give the toric,
convex-dual and neutral and selected exact-time arguments.
The preparation comparison and the payoff, encounter and support
extensions are proved in Appendices~\ref{app:v7-preparation} and
\ref{app:v7-robustness}.
\section{The game and the invasion experiment}
\label{sec:model}
We consider an inherited binary trait on a finite connected, loopless,
undirected graph. Each of its $N\ge2$ sites contains one individual,
with $n_i=1$ for cooperation and $n_i=0$ for defection. The distinction
between individual advantage and collective persistence begins with
the donation game.

\subsection{Payoffs and reciprocal contacts}
\label{sec:game}
A cooperative encounter costs the producer $c>0$ and benefits its
partner by $b\ge0$. The row player's payoff is
\begin{equation}
 \begin{array}{c|cc}
       & C & D\\ \hline
 C&b-c&-c\\
 D&b&0
 \end{array}
 \label{eq:donation-matrix}
\end{equation}
Microbial public-benefit production motivates this choice, although a
quantitative application would also need a specified transport and
uptake model~\cite{AllenGoreNowak2013}. Reciprocal weights satisfy
$w_{ij}=w_{ji}>0$ on edges and vanish off the support, including $w_{ii}=0$.
They define the site strengths and contact kernel,
\begin{equation}
 s_i=\sum_jw_{ij},\qquad P_{ij}=\frac{w_{ij}}{s_i},
 \qquad P\one=\one .
 \label{eq:kernel}
\end{equation}
We use the same kernel for averaged interactions and replacement,
\begin{equation}
 f_i(n)=-cn_i+b(Pn)_i,\qquad c>0,\quad b\ge0.
 \label{eq:payoff}
\end{equation}
Thus $(Pn)_i$ is the cooperative fraction among the encounters sampled by individual $i$.
At fixed neighboring traits, defection gains exactly $c$, since there
are no self-contacts. It is a strict best reply in every environment,
so all defection is the unique Nash equilibrium, the state in which
no individual gains by changing its own strategy, including mixed
strategies. Yet all cooperation gives everyone $b-c>0$ when $b>c$.
In the dynamics below, traits are inherited through copying; individuals
do not switch to their best reply. Their realized payoffs can differ
because their neighborhoods differ. Individual payoff dominance and population-level fixation advantage
therefore answer distinct questions about the same game.

The averaging convention matters on heterogeneous graphs. Accumulating
encounters gives $s_i f_i$, while distributing a donor's fixed benefit
budget uses the donor's normalization. These choices change the
selection criterion~\cite{McAvoyAllenNowak2020,McAvoyRaoHauert2021}.
\footnote{With distinct interaction and replacement kernels, neutral
ancestry follows replacement and the benefit contraction changes. The common-kernel bounds are extended to a specified mixed-encounter model in Sec.~\ref{sec:v7-robustness}.}

\subsection{Copying dynamics and physical time}
\label{sec:update}
Fecundity is $F_i=\phi(\delta f_i)>0$, with selection strength
$\delta\ge0$ and normalization $\phi(0)=\phi'(0)=1$ near
neutrality. We use $\phi(x)=1+x$ and $\phi(x)=e^x$.
A different positive slope-to-value ratio
rescales weak response without changing its threshold.
We write $r=b/c$ and $\kappa=\delta c$ for the dimensionless
benefit--cost ratio and selection intensity. Since $-c\le f_i\le b$,
the linear map is positive on every configuration precisely when
 $0\le\kappa<1$; the minimum $f_i=-c$ is attained by a lone
cooperator. The exponential map is positive for every finite $\kappa\ge0$.

At each global Poisson attempt, a source is chosen in proportion to
fecundity and its recipient through $P$. Writing $n^{j\leftarrow i}$
for the replaced configuration, the rates and backward generator are
\begin{align}
 r_{ij}(n)&=\frac{F_i(n)P_{ij}}{\sum_kF_k(n)},\label{eq:rates}\\
 (\mathcal A g)(n)&=\sum_{i,j}r_{ij}(n)
       [g(n^{j\leftarrow i})-g(n)] .
 \label{eq:backward}
\end{align}
Time is measured in global attempts, including copies between identical
traits; time in attempts per site is $t/N$.
Keeping the total attempt rate fixed makes contact designs comparable
at equal numbers of reproduction or copying opportunities. This is source-first
Birth--death updating. Choosing the recipient first and letting its
neighbors compete produces a different response
\cite{OhtsukiEtAl2006,AllenLippnerNowak2019}. Neutrality gives a weighted
invasion process. A fixed fitness for each type would describe constant
selection; here fecundity depends on the configuration through the game.

The process has an exact Doi--Peliti representation in the sector with
one particle per site~\cite{Doi1976SecondQuantization,Peliti1985PathIntegral}.
The two species label the alternative traits; $a_i^\dagger$ creates
a cooperative occupant and $d_i^\dagger$ a defective one, while
$a_i,d_i$ remove them. Introduce the number operators $\hat n_i^C=a_i^\dagger a_i$ and
$\hat n_i^D=d_i^\dagger d_i$, and restrict
$N_i=\hat n_i^C+\hat n_i^D=1$. Then
\begin{equation}
 \begin{aligned}
 B_{ij}^{\rm copy}&=\hat n_i^C(a_j^\dagger-d_j^\dagger)d_j
             +\hat n_i^D(d_j^\dagger-a_j^\dagger)a_j,\\
 \widehat{\mathcal L}&=\sum_{i\ne j}B_{ij}^{\rm copy}\widehat r_{ij}.
 \end{aligned}
 \label{eq:spin-generator}
\end{equation}
The diagonal rate acts before the conversion, as indicated by its
position on the right, and every term conserves each $N_i$.
Thus a physical initial state stays in the binary sector. The forward
generator obeys $\partial_t|p\rangle=\widehat{\mathcal L}|p\rangle$,
with $\widehat{\mathcal L}=\mathcal A^{\T}$ and stochastic Hamiltonian
$\mathcal H_{\rm st}=-\widehat{\mathcal L}$. This probability evolution
requires neither unitarity nor detailed balance
\cite{AlcarazEtAl1994,Schutz2001,DelRazoLammaMerbis2026}.
Appendix~\ref{sec:physics} gives the equivalent spin operator and the
sector projection needed in a coherent-state representation.

\subsection{Fixation, preparation and absorption time}
\label{sec:observables}
At positive weights, every mixed configuration eventually reaches
$\mathbf0$ or $\mathbf1$. Indeed, connectedness permits a sequence of
copies spreading either surviving type, so the finite mixed sector is
transient. Both consensus states are absorbing, although only defection
is a Nash equilibrium. The stationary measures are their convex
mixtures; copying cannot recreate an absent type.

Let $h_C(n)$ be the probability that the population eventually becomes
all cooperative when started from $n$. This fixation probability, also
called a committor in first-passage theory, satisfies
\begin{equation}
 \begin{gathered}
 \mathcal A h_C=0\quad\hbox{on mixed states},\\
 h_C(\mathbf0)=0,\qquad h_C(\mathbf1)=1.
 \end{gathered}
 \label{eq:committor}
\end{equation}
We compare two introduction experiments. One places a single cooperator
uniformly in an all-defector population; the other places a single
defector uniformly in an all-cooperator population. If $e_i$ has its
sole cooperator at $i$, their respective fixation probabilities are
\begin{equation}
 \rho_C=\frac1N\sum_i h_C(e_i),\quad
 \rho_D=\frac1N\sum_i[1-h_C(\one-e_i)].
 \label{eq:rho}
\end{equation}

Uniform placement chooses the site independently of its replacement
rate. Mutation arising during reproduction samples the recipient of a
copying event instead~\cite{McAvoyRaoHauert2021}. We will compare these
preparations on the same network and after separately optimizing the
network for each.

For a common, selection-independent distribution $\mu_i$ of introduction
sites, the averages in Eq.~\eqref{eq:rho} use $\mu_i$ in place of $1/N$.
The two preparations considered here are
\begin{equation}
 \mu_i^U=\frac1N,\qquad
 \mu_i^{\rm b}=\frac{T_i}{N},\qquad T_i=\sum_kP_{ki}.
 \label{eq:v7-preparation}
\end{equation}
The superscript $\mathrm b$ denotes birth-associated introduction.
In either homogeneous background every source has the same fecundity,
so the recipient of a birth has distribution $\mu^{\rm b}$ even at
finite selection. Superscripts identify the preparation when the
comparison requires it; unmarked quantities retain uniform preparation.
At neutrality both types have the same fixation probability
$\rho_0^\mu=\sum_i\mu_i\pi_i$, with the reproductive values $\pi_i$
defined in Eq.~\eqref{eq:reproductive}. This probability need not be $1/N$.

The invasion advantage is $\Delta=\rho_C-\rho_D$, and its weak
susceptibility is
\begin{equation}
 \mathcal S=\left.\partial_\delta\Delta\right|_{\delta=0}.
 \label{eq:response}
\end{equation}
For a cooperator introduced at a uniformly chosen site, let $\tau$ be
the absorption time and define
\begin{equation}
 T_U=\E[\tau],\qquad T_C=\E[\tau\mid C\text{ fixes}].
 \label{eq:times-definition}
\end{equation}
The unconditional mean includes extinction; the conditional mean counts
successful invasions. These definitions allow selection. We first solve the neutral-time problem
and then extend its optimizer to a stated range of selected dynamics in
Proposition~\ref{prop:v7-selected-time}. We say that the
network promotes cooperation when $\Delta>0$. This compares the two
traits against each other, whereas $\rho_C>1/N$ compares cooperative
invasion against neutral copying.

This ranking has a population interpretation under equal rare rates of
introducing the absent type at uniformly chosen sites. If each invasion
resolves before the next, the consensus populations communicate at
rates proportional to fixation probabilities, and
\begin{equation}
 X_C^{\rm rare}=\frac{\rho_C}{\rho_C+\rho_D},
 \qquad
 X_C^{\rm rare}>\frac12\ \Longleftrightarrow\ \Delta>0 .
 \label{eq:rare-abundance}
\end{equation}
The separation of introductions is taken at fixed weights and must be
checked again in a slow-absorption limit~\cite{McAvoyAllen2021}.

\section{Fluctuations and exact response}
\label{sec:response-theory}
The choice of observable is decisive. A common-density approximation
assigns every neighborhood the same cooperative fraction $x$, leaving
the payoff difference $(-c)$ and losing the mechanism for cooperative
fixation. We instead perturb the exact neutral process. Its conserved
coordinate turns the fixation response into an integral of correlations,
which close without an assumption of small fluctuations.

\subsection{The conserved coordinate and covariance response}
\label{sec:neutral-coordinate}\label{sec:absorbing-response}
For source-first copying, neutral fixation weights sites by inverse
strength rather than by their number,
\begin{equation}
 Z=\sum_i s_i^{-1},\qquad
 \pi_i=\frac{s_i^{-1}}Z,\qquad
 \Phi(n)=\sum_i\pi_i n_i .
 \label{eq:reproductive}
\end{equation}
The identity
\begin{equation}
 \pi_jP_{ij}=\pi_iP_{ji}
 =\frac{\gamma_{ij}}Z,\qquad
 \gamma_{ij}=\frac{w_{ij}}{s_is_j}
 \label{eq:edge-balance}
\end{equation}
makes the two copying directions cancel in $\mathcal A_0\Phi$.
Stopping this bounded martingale at consensus gives $h_C^0=\Phi$ and
$\rho_C^0=\rho_D^0=1/N$. Here $\pi_i$ is the neutral fixation probability of a singleton at $i$,
called its reproductive value. It differs from the strength-proportional
stationary measure of $P$. Ancestry follows a
received copy backward from $j$ to $i$, at rate $P_{ij}/N$
\cite{Maciejewski2014,SoodAntalRedner2008}.

Let $L_\gamma$ be the symmetric Laplacian with off-diagonal entries
$-\gamma_{ij}$. The first drift correction is
\begin{equation}
 \left.\partial_\delta\mathcal A\Phi(n)\right|_0
 =\frac{-c\,n^\T L_\gamma n+
               b(Pn)^\T L_\gamma n}{NZ}.
 \label{eq:drift}
\end{equation}
The derivative of the global rate normalization cancels because it
multiplies $\mathcal A_0\Phi=0$. The two terms separate the cost contribution to fixation drift across
trait interfaces from the benefit due to different cooperative environments,
\begin{equation}
 n^\T L_\gamma n=\sum_{i<j}\gamma_{ij}(n_i-n_j)^2>0.
 \label{eq:dirichlet}
\end{equation}
\begin{equation}
 (Pn)^\T L_\gamma n=
 \sum_{i<j}\gamma_{ij}(n_i-n_j)[(Pn)_i-(Pn)_j].
 \label{eq:benefit}
\end{equation}
The first is strictly positive on mixed configurations and penalizes
interfaces. The second compares cooperative environments across each
interface; its sign depends on the correlations generated by copying.

Writing $h_C=\Phi+\delta h_1+O(\delta^2)$,
$\mathcal A_1=\partial_\delta\mathcal A|_{\delta=0}$ and
$j=\mathcal A_1\Phi$, the absorbing Dirichlet problem gives
\begin{equation}
 h_1=(-\mathcal A_{0,\mathrm{mix}})^{-1}j
 =\int_0^\infty e^{t\mathcal A_{0,\mathrm{mix}}}j\,dt .
 \label{eq:absorbing-susceptibility}
\end{equation}
The restriction to mixed states stops each history at consensus. Its
Green function in Eq.~\eqref{eq:absorbing-susceptibility} integrates
the selection drift over precisely that transient part of the evolution. Write $\E_\delta$ for expectation at
selection strength $\delta$ with a uniformly placed cooperative singleton.
At neutrality, $\E_0[n_i(t)]=1/N$ at every site and time. Nevertheless,
the covariance $C_{ij}(t)=\E_0[n_i(t)n_j(t)]-\E_0[n_i(t)]\E_0[n_j(t)]$
evolves, and
\begin{equation}
 \left.\partial_\delta\frac{d}{dt}\E_\delta[\Phi(n(t))]\right|_0
 =\frac{\operatorname{Tr}[(bP^\T L_\gamma-cL_\gamma)C(t)]}{NZ}.
 \label{eq:covariance}
\end{equation}
The mean contribution cancels by $P\one=\one$ and $L_\gamma\one=0$.
Thus the neutral mean density is stationary while correlations carry
the response. This is the fluctuation mechanism missed by the
common-density approximation. Weak selection controls the perturbation
of rates; the neutral fluctuations themselves are treated exactly.

\subsection{Ancestral pairs and the invasion threshold}
\label{sec:pair-response}
The integrated disagreement
\begin{equation}
 D_{ij}=\int_0^\infty\E_0[(n_i(t)-n_j(t))^2]\dd t,\qquad D_{ii}=0
 \label{eq:disagreement}
\end{equation}
has a direct ancestral interpretation. Trace each received copy back
to its source. Two such lineages move independently until they meet
and share one ancestor. Before that meeting, uniform singleton
preparation makes their disagreement probability $2/N$, so
$D_{ij}=(2/N)\E_{ij}[\tau_{\rm meet}]$. Consequently
\begin{equation}
 (T_i+T_j)D_{ij}
 -\sum_k(P_{ki}D_{kj}+P_{kj}D_{ik})=2,\quad i\ne j,
 \label{eq:pair-system}
\end{equation}
where $T_i=\sum_kP_{ki}$ and $T_i/N$ is the neutral overwrite rate.
This replaces $2^N-2$ transient configuration unknowns by
$\binom N2$ ancestral pairs. The source two includes both preparation
and clock normalization
\cite{AllenEtAl2017,AllenEtAl2021,McAvoyAllen2021}.
Contracting this solution with the drift forms defines
\begin{equation}
 \begin{gathered}
 \mathsf B=\tfrac12(P^\T L_\gamma+L_\gamma P),\\
 K=\sum_{i<j}\gamma_{ij}D_{ij},\qquad
 J=-\sum_{i<j}\mathsf B_{ij}D_{ij}.
 \end{gathered}
 \label{eq:JK}
\end{equation}
The symmetric benefit matrix $\mathsf B$ has zero row sums. Complementary preparations
have the same first-order drift, hence
\begin{equation}
 \mathcal S=\frac{2}{NZ}(bJ-cK).
 \label{eq:criterion}
\end{equation}
Equivalently, $\mathcal S=b\chi_b-c\chi_c$, with
$\chi_b=2J/(NZ)$ and $\chi_c=2K/(NZ)>0$. For $J>0$,
\begin{equation}
 R(w)=K/J,\qquad \mathcal S>0\ \Longleftrightarrow\ b/c>R(w).
 \label{eq:threshold}
\end{equation}
If $J\le0$, nonnegative benefit cannot compensate positive cost at
first order. A positive susceptibility ensures positive advantage at
sufficiently small selection for fixed weights; this interval can
shrink as the contact ratios become singular.

Changing preparation leaves the ancestral-pair matrix unchanged and
changes its source. In place of Eq.~\eqref{eq:pair-system},
\begin{equation}
 (T_i+T_j)D_{ij}^{\mu}
 -\sum_k(P_{ki}D_{kj}^{\mu}+P_{kj}D_{ik}^{\mu})
 =N(\mu_i+\mu_j).
 \label{eq:v7-pair-preparation}
\end{equation}
Indeed, if two ancestors occupy distinct sites $k,l$ at time zero,
a singleton drawn from $\mu$ makes them disagree with probability
$\mu_k+\mu_l$. The killed pair resolvent integrates this source.
For birth-associated introduction the right side is $T_i+T_j$;
the resulting $D_{ij}^{\rm b}$ is the expected number of ancestral
jumps before meeting, including the final coalescing jump. Uniform
introduction instead weights elapsed meeting time. The contractions
in Eq.~\eqref{eq:JK} and the criterion in Eq.~\eqref{eq:criterion}
retain their form with $D^{\mu},K^{\mu},J^{\mu}$. Preparation thus
changes which histories enter the response, even when the copying
generator is held fixed.

\subsection{Filtered observables and higher response}
\label{sec:v4-observable-algebra}
The threshold uses a pair response, but its observable advantage at
finite selection also depends on higher orders. We organize these
orders before calculating the shrinking selection window in
Sec.~\ref{sec:v3-selection-window}. With $\sigma_i=2n_i-1$,
polynomials agreeing on all binary configurations are identified in
\begin{equation}
 \mathscr A_N=\mathbb R[n_1,\ldots,n_N]/(n_i^2-n_i)
 =\mathbb R[\sigma_1,\ldots,\sigma_N]/(\sigma_i^2-1).
 \label{eq:v4-boolean}
\end{equation}
The quotient imposes the relations $n_i^2=n_i$, or $\sigma_i^2=1$.
Squarefree spin monomials $\sigma_S=\prod_{i\in S}\sigma_i$ form a
basis and multiply as $\sigma_S\sigma_T=\sigma_{S\triangle T}$,
where $S\triangle T$ is the symmetric difference of the index sets.
Degree is therefore a filtration rather than a grading.  Repeated indices
can lower it. Let $\mathscr F_k$ contain degrees at most $k$, with
$\mathscr F_{-1}=0$. The associated graded algebra separates successive
degrees, retaining only the leading degree of each product,
\begin{equation}
 \begin{gathered}
 \gr\mathscr A_N=\bigoplus_{k=0}^N\mathscr F_k/\mathscr F_{k-1},\\
 \gr\mathscr A_N\simeq
 \mathbb R[x_1,\ldots,x_N]/(x_1^2,\ldots,x_N^2),\\
 H_{\gr\mathscr A_N}(t)=(1+t)^N.
 \end{gathered}
 \label{eq:v4-hilbert-full}
\end{equation}
The coefficient of the formal variable $t^k$ counts independent
degree-$k$ observables. This Hilbert series agrees with the exterior
algebra count, but the variables here commute and introduce no fermionic
signs.

Neutral copying preserves $\mathscr F_k$. A donation payoff is linear, so each
selection insertion raises degree by at most one. Centering it as
$\xi_i=[-c\sigma_i+b(P\sigma)_i]/2$ removes a common fecundity
factor. In the coordinate $u_c=\delta/[1+\delta(b-c)/2]$ for linear
fecundity and $u_c=\delta$ for exponential fecundity,
$h_C=\Phi+\sum_{m\ge1}u_c^m h_m$ obeys
\begin{equation}
 \begin{gathered}
 \deg h_m\le\min(N,m+1),\\
 h_m(-\sigma)=(-1)^{m+1}h_m(\sigma),\quad m\ge1.
 \end{gathered}
 \label{eq:v4-degree-response}
\end{equation}
The zero-boundary neutral inverse preserves these spaces.
Appendix~\ref{app:v3-degree-closure} proves the induction and the
corresponding timed response.

For a reflection-symmetric six-site problem, averaging the identity and
reflection actions counts invariant monomials. A reflection-fixed subset
is a union of mirrored pairs, giving
\begin{equation}
 \begin{aligned}
 H_{\rm refl}(t)&=\frac{(1+t)^6+(1+t^2)^3}{2}\\
 &=1+3t+9t^2+10t^3
 +9t^4+3t^5+t^6.
 \end{aligned}
 \label{eq:v4-reflected-hilbert}
\end{equation}
The degree bound and one independent consensus condition per parity
sector leave $1+9-1=9$, $3+10-1=12$, and $1+9+9-1=18$ response
coefficients at orders one, two, and three, instead of 62 mixed-state
unknowns. Without reflection the first response uses fifteen pairs.
Their disagreements $d_{ij}=(1-\sigma_i\sigma_j)/2$ satisfy
\begin{equation}
 d_{ij}^2=d_{ij},\qquad
 2d_{ij}d_{jk}=d_{ij}+d_{jk}-d_{ik}.
 \label{eq:v4-pair-relations}
\end{equation}
Thus higher-order closure follows from the binary algebra rather than
from factorization of correlations.

On paths and cycles, nearest-neighbor ancestry also simplifies the
propagator. A change of basis mixing each correlation with lower-degree
ones absorbs the terms in which two ancestral indices meet. In this
basis, propagation is built from one-particle evolution and its
antisymmetrized products, or exterior powers. The pair response
then satisfies $B_eX+XB_e^{\T}=-J_{\rm src}$, a skew Sylvester equation
for the antisymmetric pair coefficients $X$ and source $J_{\rm src}$.
In a one-particle eigenbasis, inversion uses sums of one-particle
eigenvalues instead of a separate dense pair inverse.
Here $B_e$ is the one-particle matrix used to propagate the correlation
sector even under trait complementation. On a cycle its wrap-edge sign is reversed,
an antiperiodic boundary condition. The derivation and the
physical-clock insertions are in Appendices~\ref{app:v3-neutral-cas}
and \ref{app:v3-sylvester-time}. These auxiliary fermions act on
coefficients; the microscopic probabilities remain those of the binary
copying process.

Symbolic calculations were performed using the open-source computer
algebra systems SageMath and SymPy, with Gr\"obner-basis and Hilbert-series
checks in Singular~\cite{SageMathSoftware,SymPySoftware,SingularSoftware}.
The rational-function solutions, polynomial identities and coefficient-sign
checks underlying the threshold and optimization results were computed
using exact integer or rational arithmetic. Independent reconstructions
from the microscopic generators and high-precision numerical evaluations
checked the finite-parameter expressions. The asymptotic estimates and
their domains of validity are derived in the text and appendices.

\section{Formation and competition of protected domains}
\label{sec:domains}
We now evaluate the response in a limit where persistent domains emerge.
The reduction must determine both their competition and the probability
that a singleton reaches that stage.

\subsection{Protection from normalized reciprocal contacts}
\label{sec:protected-mechanism}
Number the path consecutively and choose
\begin{equation}
 \begin{gathered}
 (w_{12},w_{23},w_{34},w_{45},w_{56})=(Q,q,1,q,Q),\\
 u=q^{-1},\qquad v=q/Q,\qquad (u,v)\longrightarrow(0,0).
 \end{gathered}
 \label{eq:p6-weights}
\end{equation}
A central site sends fractions $1/(1+u)$ toward its followers and
$u/(1+u)$ across the bridge. Its adjacent follower sends only
$v/(1+v)$ back toward it because the outer edge $Q$ dominates that
follower's transmissions. Reciprocal contacts therefore produce
directed influence after normalization.
Both incoming overwrite channels of a central site are weak, although
that site transmits efficiently to its followers. Consistently, the reproductive
values in Eq.~\eqref{eq:reproductive} tend to $1/2$ at each central site and
to zero at the followers. Protection refers to this dynamical
asymmetry, not to an imposed difference between the two traits.

At $u=v=0$, sites 3 and 4 retain their traits while their followers
align. We call these sites \emph{pins}; they can change at finite
positive weights. Denote the four aligned states by their domain traits
$00,10,01,11$. A pin introduction establishes its triple with probability
tending to one, while a follower introduction disappears. Uniform
placement leaves mass $1/6$ in each mixed domain and loses $2/3$
during formation, as shown in Fig.~\ref{fig:domains}.
\begin{figure*}[t]
\centering

\begin{tikzpicture}[x=1cm,y=1cm,font=\small,>=Stealth,
 box/.style={draw=black!45,rounded corners=2pt,align=center,inner sep=6pt},
 rate/.style={->,line width=.8pt,draw=blue!65!black}]
\node[anchor=west] at (0,3.0) {\textbf{(a)} Establishment from a uniform cooperator introduction};
\node[box] (start) at (2,1.25) {One cooperator\\among six sites};
\node[box] (pin) at (7,2) {At pin 3 or 4\\probability \(2/6\)};
\node[box] (fol) at (7,.4) {At a follower\\probability \(4/6\)};
\node[box,text=blue!65!black] (mixed) at (13.2,2) {One cooperative triple\\\(10\) or \(01\), total mass \(1/3\)};
\node[box] (zero) at (13.2,.4) {Cooperator disappears\\\(00\), mass \(2/3\)};
\draw[rate] (start)--(pin);
\draw[rate] (start)--(fol);
\draw[rate] (pin)--node[above] {fast alignment}(mixed);
\draw[rate] (fol)--(zero);
\draw[black!15] (0,-.25)--(16,-.25);
\node[anchor=west] at (0,-.8) {\textbf{(b)} Competition in slow time \(s=t/q\)};
\node[box] (dd) at (2,-2) {\(00\)\\all defectors};
\node[box] (mid) at (8,-2) {\(10\) or \(01\)\\one domain of each type};
\node[box,text=blue!65!black] (cc) at (14,-2) {\(11\)\\all cooperators};
\draw[rate] (mid)--node[above=4pt] {\((1-h)/3\)}(dd);
\draw[rate] (mid)--node[above=4pt] {\(h/3\)}(cc);
\node at (8,-3) {Each mixed state exits at total rate \(1/3\); no transition between \(10\) and \(01\)};
\path[use as bounding box] (0,-3.25) rectangle (16,3.3);
\end{tikzpicture}

\caption{The initial individual and the eventual domain competitor.
The upper panel distinguishes pin introductions, which establish an
aligned triple in the hierarchy limit, from follower introductions,
which disappear. The lower panel is the resulting four-state process
in slow time $s=t/q$; its arrows are rates in that clock.
Here $h$ is the probability that cooperation wins the domain contest. The entrance
weights belong to the original uniform-singleton experiment. A pin is
slowly overwritten, not externally fixed at finite weights.}
\label{fig:domains}
\end{figure*}
A cooperative triple has leading payoff $b-c$ at each site and a
defective triple has payoff zero. A rare bridge copy changes a pin;
its followers usually align before the next rare copy. The collective advantage $b-c$ is therefore realized only after
establishment. The effective competitors are assembled by the
individual copying rules, with no separate rule for group reproduction.

\subsection{Eliminating fast configurations}
\label{sec:effective-memory}
Partition the forward generator $\mathsf Q=\mathcal A^{\T}$ into
the four aligned states $S$ and the remaining 60 configurations $F$.
For $\operatorname{Re}z>0$, let
$\widetilde p_S(z)=\int_0^\infty e^{-zt}p_S(t)\,dt$.
Their exact Laplace-space elimination gives
\begin{align}
 \Sigma(z)&=\mathsf Q_{SF}(zI-\mathsf Q_{FF})^{-1}
                    \mathsf Q_{FS},\notag\\
 \bigl[zI-\mathsf Q_{SS}-\Sigma(z)\bigr]\widetilde p_S(z)
 & =p_S(0) \label{eq:main-self-energy}\\
 &\quad+\mathsf Q_{SF}(zI-\mathsf Q_{FF})^{-1}p_F(0). \notag
\end{align}
The self-energy $\Sigma(z)$ sums excursions through fast configurations,
including their duration. The source on the right is equally necessary.
It determines how the original singleton enters the domain process,
rather than assuming that an aligned domain was present initially. At zero frequency the corresponding backward operator is
\begin{equation}
 L_{\rm harm}
 =\mathsf A_{SS}-\mathsf A_{SF}\mathsf A_{FF}^{-1}\mathsf A_{FS}
 =u[L_0+O(u+v)] ,
 \label{eq:schur}
\end{equation}
where $\mathsf A_{XY}$ are blocks of the backward generator. It computes hitting
probabilities, while physical times also require a differentiated
resolvent or the equivalent corrected Poisson source. The time-domain
memory equation is derived in Appendix~\ref{app:p6-protected}.

The slow limit is uniform on compact positive-fecundity domains. A fixed
sequence of follower-aligning copies bounds the fast inverse, and every
row leaving $S$ contains an exact factor $u$.  At $u=0$ the aligned
domains are closed for all nearby $v$. Dividing out this factor leaves
an $O(u+v)$ error, with no additional assumption $v\ll u$
\cite{SlowAvrachenkovHaviv2004}.

\subsection{Fixation and first-passage distributions}
\label{sec:finite-selection}\label{sec:domain-times}
Set $a=\phi(\delta(b-c))$, $d_0=\phi(0)=1$ and $h=a/(a+d_0)$.
The total leading fecundity in a mixed aligned state is $3(a+d_0)$,
so the generator in slow time $s=t/q$ is
\begin{equation}
 L_0=\frac13
 \begin{pmatrix}
 0&0&0&0\\
 1-h&-1&0&h\\
 1-h&0&-1&h\\
 0&0&0&0
 \end{pmatrix}
 \quad (00,10,01,11).
 \label{eq:domain-generator}
\end{equation}
Each mixed state has exit rate $1/3$ and cooperative exit probability
$h$. Including the entrance masses,
\begin{align}
 \rho_C&=\frac h3+O(u+v),&
 \rho_D&=\frac{1-h}{3}+O(u+v),\notag\\
 \Delta&=\frac{a-d_0}{3(a+d_0)}+O(u+v).
 \label{eq:p6-selected}
\end{align}
These errors are uniform on compact positive-fecundity domains for
locally Lipschitz $\phi$. A fixed $b>c$ therefore gives positive
limiting advantage at fixed positive selection for strictly increasing
fecundity. In particular,
\begin{equation}
 \Delta_\infty=
 \begin{cases}
 \displaystyle\frac{\delta(b-c)}{3[2+\delta(b-c)]}\,,
       &\phi(x)=1+x,\\[2mm]
 \displaystyle\frac13\tanh\!\frac{\delta(b-c)}2\,,
       &\phi(x)=e^x.
 \end{cases}
 \label{eq:selected-maps}
\end{equation}
The linear map must remain positive in establishment configurations as
well as aligned ones. At weak selection, $\mathcal S\to(b-c)/6$.

The same entrance probabilities and holding rate give
\begin{equation}
 T_U=q[1+O(u+v)],\qquad
 T_C=3q[1+O(u+v)] .
 \label{eq:p6-times}
\end{equation}
For every $s>0$,
\begin{align}
 \Pr(\tau/q>s)&\longrightarrow\tfrac13e^{-s/3},\notag\\
 \Pr(\tau/q>s\mid C\text{ fixes})&\longrightarrow e^{-s/3}.
 \label{eq:p6-tails}
\end{align}
The limiting unconditional distribution assigns probability $2/3$
to zero, because the formation transient is compressed to zero on the slow clock.

Conditioning on extinction retains both fast failed introductions
and unsuccessful domain competitors. Their survival probability and
mean are
\begin{equation}
 \begin{gathered}
 \Pr(\tau/q>s\mid C\text{ becomes extinct})\\
 \longrightarrow\frac{1-h}{3-h}e^{-s/3},\quad s>0,\\
 T_{\rm ext}^C=\frac{3(1-h)}{3-h}q+O(1+q^2/Q).
 \end{gathered}
 \label{eq:v7-extinction-time}
\end{equation}
The atom at zero has weight $2/(3-h)$. At neutrality the mean is
$3q/5$ to leading order, compared with $3q$ for successful fixation.
Subtracting the successful subdistribution from the unconditional
one proves the limit; the time resolvent controls the mean error.

In the slow limit, successful invasions have an exponential waiting
time. The additive finite-weight correction to the mean is $O(1+q^2/Q)$, bounded when $Q$ is at least of order
$q^2$. Appendix~\ref{app:p6-protected} also gives the full limiting
propagator and multivariate occupation generating function.

The rare-introduction abundance tends to $h$ when introductions of
total rate $\nu$ remain separated, $\nu q\ll1$. Figure~\ref{fig:finite}
compares the reduction with the finite microscopic equations. When
$b/c-1$ is comparable to $u+v$, formation corrections determine the
sign; Sec.~\ref{sec:v3-selection-window} calculates this regime.
\begin{figure*}[t]
\centering
\includegraphics[width=.9\textwidth]{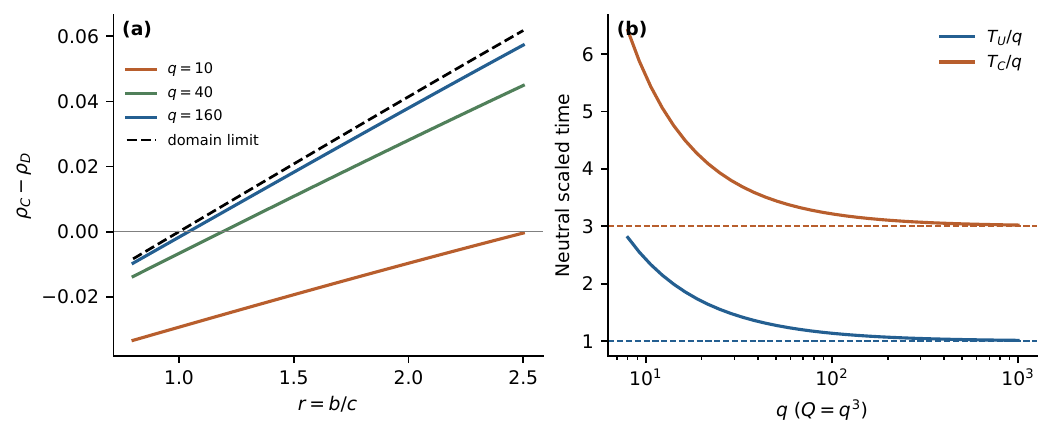}
\caption{Finite microscopic checks on $(q^3,q,1,q,q^3)$.
(a) The complete committor calculation for exponential fecundity at
$\kappa=\delta c=0.25$, compared with the limiting advantage
$\tfrac13\tanh[\kappa(b/c-1)/2]$.
(b) Neutral mean absorption times approach $T_U/q=1$ and $T_C/q=3$.
The two panels test selected fixation and neutral time separately.
This family illustrates the mechanism; the optimal contact design is
determined below.}
\label{fig:finite}
\end{figure*}

\section{Sharp thresholds and the role of support}
\label{sec:classification}
The domain construction approaches $b/c=1$. A support-independent
obstruction proves that this is the lowest possible threshold. Smaller
supports test which parts of domain formation are necessary to reach it.

\subsection{A universal barrier and longer paths}
\label{sec:universal-bound}\label{sec:long-domains}
At finite selection the neutral fixation coordinate has drift
\begin{equation}
 \mathcal A\Phi=
 \frac{\sum_{i<j}(\gamma_{ij}/Z)
        [F_i-F_j](n_i-n_j)}{\sum_kF_k}.
 \label{eq:supermartingale}
\end{equation}
For $0\le b\le c$ on a discordant edge with $n_i=1,n_j=0$,
\begin{equation}
 f_i-f_j\le -(c-b)-b(P_{ij}+P_{ji})<0.
 \label{eq:edge-obstruction}
\end{equation}
Strictly increasing fecundity therefore makes $\Phi$ a strict
supermartingale up to absorption, and optional stopping gives
\begin{equation}
 \rho_C<1/N<\rho_D,\qquad 0\le b\le c,\quad\delta>0.
 \label{eq:finite-no-go}
\end{equation}
The same edge estimate gives a quantitative weak-selection obstruction,
\begin{equation}
 \begin{gathered}
 X_{\rm e}=\sum_{i<j}\gamma_{ij}(P_{ij}+P_{ji})D_{ij},\\
 K-J\ge X_{\rm e}>0,\qquad
 R-1\ge\frac{X_{\rm e}}{K-X_{\rm e}}\quad(J>0).
 \end{gathered}
 \label{eq:edge-gap-bound}
\end{equation}
Here $X_{\rm e}/K$ averages the sum of the two directional copying
probabilities with weights proportional to $\gamma_{ij}D_{ij}$.
Approaching $R=1$ therefore requires these weighted disagreement
histories to concentrate on edges with weak copying in both directions.
Appendix~\ref{app:path-universal-barrier} derives the bound and the
stopping argument. The finite-selection obstruction needs strictly
increasing fecundity; its strict first-order response also requires
$\phi'(0)>0$.

Two protected domains approach $R=1$ on every fixed path with
$N\ge6$ sites. For $m_L,m_R\ge2$, $N=m_L+m_R+2$, choose
\begin{equation}
 (\underbrace{q^3,\ldots,q^3}_{m_L-1},q,1,q,
        \underbrace{q^3,\ldots,q^3}_{m_R-1}).
 \label{eq:long-path}
\end{equation}
The domain sizes are $m_L+1$ and $m_R+1$. Their holding times may
differ, but each cooperative exit probability is $h=a/(a+d_0)$ and
each pin is introduced with probability $1/N$. Thus
\begin{equation}
 \begin{gathered}
 \rho_C=\frac{2h}{N}+O(q^{-1}),\qquad
 \rho_D=\frac{2(1-h)}N+O(q^{-1}),\\
 \mathcal S\longrightarrow\frac{b-c}{N},
 \end{gathered}
 \label{eq:long-path-law}
\end{equation}
with $T_U=q+O(1)$ and $T_C=Nq/2+O(1)$.
Appendix~\ref{app:path-general-times} derives the unequal-domain
distributions and variances. Here $N$ is fixed before the weight limit;
the error constants need not be uniform in $N$. Setting a weak link
to zero first would create extra closed classes.

\begin{proposition}[Sharp path thresholds]
\label{thm:path-classification}
For the common-kernel averaged donation game and paired uniform
singleton preparation, the infimum over positive weights with $J>0$ is
\begin{equation}
 R_*(P_N)=
 \begin{cases}
 +\infty,&2\le N\le4,\\
 7,&N=5,\\
 1,&N\ge6.
 \end{cases}
 \label{eq:path-classification}
\end{equation}
The infinite value denotes an empty promoter set. Every finite value
is strict and unattained at finite positive weights, with $N$ fixed
before the limit.
\end{proposition}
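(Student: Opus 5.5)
The proof splits along the three cases of Eq.~\eqref{eq:path-classification}. Throughout, the tool is the exact pair system, Eq.~\eqref{eq:pair-system}, solved symbolically on the path. Substituting its solution into Eq.~\eqref{eq:JK} turns $J$ and $K$ into explicit rational functions of the edge weights. Since $R$ is homogeneous of degree zero in $w$, one weight can be normalized to one, leaving $N-2$ free ratios.

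\textbf{Case $N\ge6$.} The lower bound $R>1$ whenever $J>0$ follows from Eq.~\eqref{eq:edge-gap-bound}, because $X_{\rm e}>0$ at positive weights. This also shows the value one is never attained. The matching upper bound comes from the two-domain family of Eq.~\eqref{eq:long-path} and its weak-selection law in Eq.~\eqref{eq:long-path-law}. There $\mathcal S\to(b-c)/N$, so $\chi_b$ and $\chi_c$ both tend to $1/N$. Hence $J>0$ for large $q$ and $R=K/J\to1$. The error terms are only $O(q^{-1})$ at fixed $N$, which is exactly why $N$ must be fixed before the limit.

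\textbf{Case $N\le4$.} We must show $J\le0$ at every positive weighting. For $N=2$ and $N=3$ this is a direct computation. For $N=4$ we normalize $w_{34}=1$, as in Fig.~\ref{fig:paths}(a). After clearing the positive common denominator of the pair solution, $J$ becomes a rational function of $(w_{12},w_{23})$ with a polynomial numerator. The plan is to prove that every coefficient of this numerator is nonpositive, using an exact coefficient-sign check. If some coefficients have mixed signs, we would group them into sums of squares times negative monomials, or substitute $x=w_{12}/w_{23}$ and bound the numerator on each orthant piece.

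\textbf{Case $N=5$.} This is the main obstacle. Normalize the weights as $(A,1,q,Q)$ with pin at site 3, as in Fig.~\ref{fig:paths}(c). For the upper bound, evaluate $R$ along the regime $1\ll q\ll A\ll Q$ using the slow-fast reduction of Sec.~\ref{sec:effective-memory}. The pin triple and the dimer then compete as collective units. Collecting the leading monomials of $J$ and $K$ along a suitable scaling should give $R\to7$; the dimer's adverse establishment response is what keeps the limit above one.

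For the lower bound, we need $K-7J>0$ at every positive $(A,q,Q)$ with $J>0$. The strategy follows the introduction. First grade the numerator of $K-7J$ by its degree in the outer weights. Then show that at fixed inner ratio the response improves monotonically as the outer weight grows. This reduces the problem to the boundary face where the outer contacts dominate. On that face the bound becomes a one-variable polynomial inequality in the inner ratio, which can be settled by Sturm sequences or exact root isolation.

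The hard part is handling all singular corners uniformly, including limits where other ratios vanish. This requires a Newton-polytope check that every vertex monomial of $K-7J$ is positive. The same check should show that any approach to $7$ forces $q,A,Q\to\infty$ in the stated order, so the value $7$ is strict and unattained.
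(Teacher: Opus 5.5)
Your treatment of $N\ge6$ (the edge-gap barrier plus the two-domain family with $\chi_b,\chi_c\to1/N$) is sound and essentially the paper's, and so is your treatment of $N\le4$. For $N\le3$ the paper uses a pointwise singleton-drift argument, and for $N=4$ it proves the stronger fact that every site-resolved $J_\ell$ is negative.

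The genuine gap is the $N=5$ lower bound, which is the heart of the proposition. You name a reduction (monotonicity along outer rays, then the top face), but not the lemma that makes it true, and the tool you do name cannot supply it. Positivity of the vertex monomials of a Newton polytope controls only the asymptotic corners. It cannot exclude negativity in the interior of the orthant (compare $x^2-3xy+y^2$ at $x=y=1$). So it neither proves monotonicity nor certifies $K-7J>0$. It also does not force the hierarchy; that would need a quantitative domination identity, and is not needed for the statement.

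What is needed is a sign for \emph{every} outer grade. With $\mathscr D=T^8\det M/q$, $U=\mathscr DK$ and $V=\mathscr DJ$, the paper proves two things:
\begin{itemize}
\item $U_m\ge0$ for all $m$, by a positive rooted-forest expansion of the killed-pair adjugate. A row count shows that $T^8$ clears each forest term, and a closed-class argument gives divisibility by $q$.
\item $V_m<0$ for $m<15$, by exact coefficient extraction. At $m=14$, the two coefficients at $A^7Q^7$ and $A^7Q^7q^7$ have the wrong sign and must be absorbed into the square $3(1+q^7)(A-Q)^2$.
\end{itemize}
Monotonicity of $R$ along a ray uses these signs of $U$ and $V$ separately. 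Grading only the numerator of $K-7J$, as you propose, can at best give the bound directly, and only if each lower grade of $U-7V$ is shown nonnegative; monotonicity is then unnecessary.

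Once the grade signs are in hand, $U-7V=\sum_{m<15}(U_m-7V_m)+(U_{15}-7V_{15})$. The top grade factors as $A^7Q^7[Aq(2q+3)H_4^\vee+Q(3q+2)H_4]$, with positive-coefficient quartics. Without such a factorization the top face is an inequality in the two ratios $q$ and $A/Q$, not a one-variable one, so Sturm sequences alone do not close it. A shorter route is the one you already use for $N=4$. The paper's independent identity shows that the reduced numerator and denominator of $K-7J$ both have only positive coefficients in $(A,q,Q)$. That gives $R>7$ on $J>0$, hence strictness and non-attainment, with no grading at all.

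For the $N=5$ upper bound, note that the slow--fast limit at fixed $\eta=q/A$ gives the threshold $(7+2\eta)/(1-2\eta)$, not $7$, and the response amplitude is proportional to $\eta$. Sending $\eta\to0$ therefore needs a remainder that is small relative to $\eta$. With $u=1/q$ and $v=q/Q$, this is the coefficientwise expansion $\mathcal S=(2\eta/5)(b-7c)+O(\eta u+\eta^2+v)$. The ordering $1\ll q\ll A\ll Q$ (e.g.\ $A=q^2$, $Q=q^3$) is exactly what makes that remainder negligible. Reading off the leading monomials of the exact rational $J,K$ along such a curve is an acceptable substitute.
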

\begin{proof}
The supermartingale bound and protected-domain
construction settle $N\ge6$. The smaller-path signs and the five-site
bound below complete the classification; their proofs are in
Appendices~\ref{app:path-small}--\ref{app:path-p5-sharpness}.
\end{proof}

\subsection{The four-cycle and the five-site establishment penalty}
\label{sec:small-supports}\label{sec:five-site-mechanism}
For at most three sites every mixed state is a singleton or its
complement, with adverse benefit drift
\begin{equation}
 (Pe_k)^\T L_\gamma e_k
 =-\sum_{i\ne k}P_{ik}\gamma_{ik}<0.
 \label{eq:small-benefit}
\end{equation}
Solving the six pair equations on $P_4$ gives a negative site-resolved
benefit coefficient $J_\ell$ for every singleton placement $\ell$
(Appendix~\ref{app:path-small}). The other four-site tree, the star, has pointwise nonpositive
benefit drift. Their signs are derived in
Appendices~\ref{app:path-small} and \ref{app:resource-comparison}.

Closing $P_4$ into $C_4$ changes the cooperative environment. In the
cycle $(1,q,Q,q)$, with $q\to\infty$ and $q^2/Q\to0$, two pins
influence a shared strongly coupled pair. With opposing neutral pins,
this pair visits $(0,0),(0,1),(1,1)$ with equal stationary weights.
The threshold tends to three, and the global inequality in
Sec.~\ref{sec:cost-comparison} proves its sharpness. Thus $C_4$ is
minimal in both sites and edges among simple loopless undirected
promoting supports.

The five-site path $(A,1,q,Q)$ places a dimer beside a protected triple.
Take $u=1/q\to0$, $v=q/Q\to0$ at fixed $\eta=q/A>0$.
A cooperator introduced in the dimer first contests payoffs $(-c)$ and $b$,
so its establishment probability is
\begin{equation}
 p_C=\frac{\phi(-\delta c)}
 {\phi(-\delta c)+\phi(\delta b)},\qquad p_D=1-p_C .
 \label{eq:dimer-establishment}
\end{equation}
At neutrality $p_C=1/2$ and $p_C'(0)=-(b+c)/4$. This adverse
establishment response competes with the $b-c$ advantage of an aligned
cooperative domain. For $a=\phi(\delta(b-c))$, $d_0=\phi(0)$, the
left- and right-cooperative domain committors are
\begin{equation}
 h_L=\frac{\eta a}{\eta a+d_0 p_D},\qquad
 h_R=\frac{a p_C}{a p_C+\eta d_0}.
 \label{eq:dimer-slow-h}
\end{equation}
The factors $p_C,p_D$ enter again when a rare copy converts an existing
dimer. Including uniform preparation,
\begin{align}
 \rho_C&\longrightarrow\tfrac25p_Ch_L+\tfrac15h_R,\notag\\
 \rho_D&\longrightarrow\tfrac25p_D(1-h_R)+\tfrac15(1-h_L),
 \label{eq:dimer-rho}
\end{align}
and therefore
\begin{equation}
 \mathcal S\longrightarrow
 \frac{2\eta}{5(1+2\eta)^2}
 [(1-2\eta)b-(7+2\eta)c].
 \label{eq:dimer-response}
\end{equation}
The limiting threshold is $(7+2\eta)/(1-2\eta)$ for $0<\eta<1/2$;
the benefit coefficient vanishes at $\eta=1/2$ and becomes adverse
above it. Letting $\eta\to0$ approaches seven but also suppresses
the amplitude. The controlled expansion is
\begin{equation}
 \mathcal S=\frac{2\eta}{5}(b-7c)
           +O(\eta u+\eta^2+v),
 \label{eq:dimer-uniform}
\end{equation}
coefficientwise in $b,c$. The ordering $1\ll q\ll A\ll Q$ makes
the remainder relatively small. 

A pin introduction alone has leading paired threshold three. Indeed,
differentiating the two singleton committors in
Eq.~\eqref{eq:dimer-rho} gives
\begin{equation}
 \begin{aligned}
 \left.\partial_\delta(p_Ch_L)\right|_0
 &=-\frac{\eta[\eta(b+c)+2c]}{(1+2\eta)^2},\\
 \left.\partial_\delta h_R\right|_0
 &=\frac{\eta(b-3c)}{(1+2\eta)^2}.
 \end{aligned}
 \label{eq:p5-site-cancellation}
\end{equation}
There are two dimer sites. Their adverse responses shift the leading
uniform threshold from three to seven even though the protected pin
already favors cooperative invasion when $b>3c$.

\subsection{Outer grading and the exact five-site envelope}
\label{sec:v4-p5-grading}
The dimer calculation explains how seven is approached. To prove that
no other weighting does better, we organize the exact ten-pair solution
as $R=U/V$ after clearing a positive denominator. Then $U>0$
and $\operatorname{sgn}V=\operatorname{sgn}J$. Grading these
polynomials by total degree in the outer weights $A,Q$ gives
\begin{equation}
 \begin{gathered}
 U=\sum_{m=0}^{15}U_m,\quad V=\sum_{m=0}^{15}V_m,
 \\
 U_m\ge0,\quad V_m<0\ (m<15).
 \end{gathered}
 \label{eq:v4-p5-grades}
\end{equation}
The cost numerator can be written as a sum over directed forests of
the ancestral pair-state graph, with paths ending at specified roots. Each
forest contributes a product of positive rates; counting the possible
outgoing rows proves the nonnegative cost grades after denominators
are cleared. The negative benefit grades follow from a
finite polynomial identity; $(-V_{14})$ is a square term plus a
positive-coefficient polynomial. The precise
normalization and identities are in Appendix~\ref{app:path-classification}.

Under $(A,Q)\mapsto(\ell A,\ell Q)$, the scaled polynomials
$\ell^{-15}U(\ell A,q,\ell Q)$ and
$\ell^{-15}V(\ell A,q,\ell Q)$ are nonincreasing and strictly
increasing, respectively.
Hence every promoter
improves strictly along its outer ray,
\begin{equation}
 \frac{d}{d\ell}R(\ell A,q,\ell Q)<0.
 \label{eq:v4-p5-ray}
\end{equation}
A ray has a unique onset if $V_{15}>0$ and never promotes otherwise.
The top-degree difference has a positive factorization, which together
with the lower-grade signs proves $U-7V>0$ for all positive weights.

The same outer-ray limit gives the exact envelope at fixed inner ratio.
Reverse the path if necessary so that $q\ge1$, and define
\begin{equation}
 D_5(q)=11q^4-16q^3-124q^2-128q-33.
 \label{eq:v4-p5-inner-polynomial}
\end{equation}
Promoting outer weights exist exactly when $q>q_c$, the unique
positive root $q_c=4.536774892089\ldots$ of $D_5$. For these $q$,
\begin{equation}
 \begin{aligned}
 \mathcal R_5(q)&=\inf_{A,Q>0,J>0}R(A,1,q,Q)\\
 &=\frac{2(q+1)^2(11q^2+28q+11)(7q+10)}{(2q+3)D_5(q)}.
 \end{aligned}
 \label{eq:v4-p5-envelope}
\end{equation}
After a common positive factor is removed, the top-degree quotient is
fractional-linear and increasing in $A/Q$. Its boundary value gives
this envelope, approached by $(A,Q)=(L,L^2)$ as $L\to\infty$.
It decreases from infinity to seven, with
$\mathcal R_5(q)=7+83/(2q)+O(q^{-2})$.
Proposition~\ref{v4:prop:p5-envelope} proves the monotonicity and
controlled limit. At finite contrast the outer ray reaches the cap;
optimizing the two remaining ratios is a separate problem.

\subsection{Games, encounters and graph support}
\label{sec:v7-robustness}
The domain mechanism has consequences beyond the donation matrix.
For a symmetric two-strategy game, let $a_{XY}$ be the payoff to trait
$X$ against $Y$. Complementing the traits in the paired introduction
experiment cancels the cubic terms in the first-order drift. The same
neutral pair solution then gives
\begin{equation}
 \mathcal S_{\rm game}
 =\frac{(K+J)(a_{CC}-a_{DD})
       +(K-J)(a_{CD}-a_{DC})}{NZ}.
 \label{eq:v7-general-game}
\end{equation}
Since $K-J>0$, the structure coefficient
$\sigma_{\rm str}=(K+J)/(K-J)$ measures how strongly the paired
invasion comparison weights the diagonal payoff difference
\cite{TarnitaEtAl2009}. The present support optimization implies
$\sup\sigma_{\rm str}=2$ on $C_4$, $4/3$ on $P_5$, and no finite
upper bound on $P_N$ for $N\ge6$, under uniform introduction.
In the protected-domain limit the competing aligned domains have
payoffs $a_{CC}$ and $a_{DD}$. Their competition can consequently
favor cooperation even when a unilateral change of strategy favors
defection. Appendix~\ref{app:v7-robustness} derives the criterion and
its fixed-selection domain limit.

Separating benefit encounters from replacement tests another model
choice. Let a fraction $\alpha$ of encounters sample uniformly among
the other sites, while replacement continues through $P$. For $0\le\alpha\le1$, the
interaction kernel is
$Q_{\rm int}^{\alpha}=(1-\alpha)P+\alpha W$, with
$W_{ij}=(1-\delta_{ij})/(N-1)$. Its weak threshold is exactly
\begin{equation}
 R_\alpha=\frac{R}{1-\alpha-\alpha R/(N-1)}
 \label{eq:v7-encounter-map-main}
\end{equation}
where the denominator is positive. Thus the support infima become
$3/(1-2\alpha)$ on $C_4$, $28/(4-11\alpha)$ on $P_5$, and
$(N-1)/(N-1-N\alpha)$ on $P_N$ for $N\ge6$.
There are no weak promoters when the corresponding denominator is
nonpositive. For exponential fecundity, the full selected generator
has an exact reduction to the original game with
$b_{\rm eff}=(1-\alpha)b$ and
$c_{\rm eff}=c+\alpha b/(N-1)$, including its physical clock.

Additional contacts need not destroy protected domains. Adding edges
on a fixed vertex set cannot raise the infimum over positive weights,
so every support containing a spanning path with $N\ge6$ has
infimum one, including $C_N$ and $K_N$. More generally, two adjacent
pins, each contacting a connected follower set of at least two sites,
retain the mechanism when all extra edges have unit weight. Appendix~\ref{app:v7-support-extension} proves this extension and
bounds the required contrast for each preparation. Thus one-dimensional
support is a convenient setting for sharp optimization, while the
two-domain construction also applies to dense and branched networks.

\section{Geometry and cost of optimal contacts}
\label{sec:cost}
The sharp thresholds require singular ratios of contact strengths.
Common scaling leaves the kernel unchanged, so the physical design
space is $\mathbb R_{>0}^{|\mathcal E|}/\mathbb R_{>0}$, where $\mathcal E$ is the edge set. A dimensionless measure
of the required range is
\begin{equation}
 H(w)=\frac{\max_e w_e}{\min_e w_e}.
 \label{eq:contrast}
\end{equation}
For the family $\mathcal W$ of positive weightings on a fixed support, define
\begin{equation}
 R_{\min}^{\mathcal W}(H)=
 \inf_{\substack{w\in\mathcal W,\ H(w)\le H\\J(w)>0}}R(w).
 \label{eq:constrained-threshold}
\end{equation}
The resource is the range of routing weights; an energetic cost would
require an additional constitutive model. Optimization over all positive
weights will determine the symmetry of the optimum.

\subsection{A global bound forces the domain hierarchy}
\label{sec:contrast-design}\label{sec:asymmetric-entry-v2}
Normalize the six-site weights to $(A,B,1,D,E)$, without assuming
their order. The outer weights $A,E$ bind the two follower pairs and
will be called core weights. Write $R=U/V$, $U>0$, and $\mathcal C=U-V$. The powers of the four weights in each monomial identify which
terms can dominate a singular contact hierarchy.
In particular, $U$ has one monomial absent from the positive polynomial $\mathcal C$,
\begin{equation}
 M=648A^{10}B^8D^8E^{10}.
 \label{eq:v4-newton-vertex}
\end{equation}
Its six neighboring exponent differences give
\begin{equation}
 F=\frac3B+\frac3D+
       \frac{53B+29D}{6A}+\frac{29B+53D}{6E}.
 \label{eq:v3-main-F}
\end{equation}
The exact coefficient inequalities 
\begin{equation}
 71\mathcal C-18(U-M)\ge_{\rm coeff}0,\qquad \mathcal C-MF\ge_{\rm coeff}0,
 \label{eq:v4-global-remainders}
\end{equation}
have nonzero remainders; $\ge_{\rm coeff}0$ means every coefficient
is nonnegative. They imply $V/\mathcal C<53/18+1/F$, and on $J>0$,
\begin{equation}
 R-1>\frac{F}{1+53F/18}\qquad(J>0).
 \label{eq:v3-main-global-bound}
\end{equation}
The fifteen-pair reconstruction is detailed in
Appendix~\ref{v3-app-positive-p6}. Since every term of $F$ is
positive, $R-1\le\epsilon<18/53$ forces
$F<\epsilon/(1-53\epsilon/18)$ and hence
\begin{equation}
 B,D\longrightarrow\infty,\qquad
 \frac{B+D}{A},\frac{B+D}{E}\longrightarrow0
       \quad\hbox{as }\epsilon\downarrow0.
 \label{eq:asymmetric-entry-v2}
\end{equation}
Thus the domain construction is forced by the global response, even
for sequences with arbitrarily imbalanced contacts. It becomes the
necessary setting for optimization near the barrier, rather than one
successful family among unknown alternatives.

For $p=D/(B+D)$, the contrast cap implies $A,E\le H$ and
\begin{equation}
 H>\frac{656}{4p(1-p)}
       \left(\epsilon^{-1}-\frac{53}{18}\right)^2.
 \label{eq:v3-main-finite-budget-bound}
\end{equation}
The independently controlled domain expansion also gives
\begin{equation}
 H\epsilon^2\ge\frac{656}{4p(1-p)}[1-O(\epsilon)],
 \quad \chi_b=\frac{2p(1-p)}3[1+O(\epsilon)].
 \label{eq:asymmetric-cost-amplitude-v2}
\end{equation}
Dividing out $p(1-p)$ before expanding keeps relative errors uniform
near $p=0,1$. Imbalance therefore raises the leading cost and suppresses
the benefit susceptibility.

\subsection{Toric coordinates at the singular limit}
\label{sec:v4-toric}
The forced hierarchy has six small ratios but only four independent
weights. Treating those ratios as independent would admit contact
designs that cannot exist. Their rank-one factorization keeps the
constraints explicit while bringing the singular limit to the origin,
\begin{equation}
 \begin{pmatrix}B^{-1}&D/A&D/E\\D^{-1}&B/A&B/E\end{pmatrix}
 =\begin{pmatrix}B^{-1}\\D^{-1}\end{pmatrix}
   \begin{pmatrix}1&BD/A&BD/E\end{pmatrix}.
 \label{eq:v4-segre}
\end{equation}
For $z=(B^{-1},D^{-1},B/A,D/A,B/E,D/E)$, the three minors generate
\begin{equation}
 I_{\rm tor}=(z_1z_3-z_2z_4,\ z_1z_5-z_2z_6,\ z_4z_5-z_6z_3).
 \label{eq:v4-toric-ideal}
\end{equation}
An ideal consists of polynomial combinations of its generators. This
toric ideal contains exactly the relations among the six monomial
ratios inherited from the four weights.
Geometrically, the rank-one matrices form the affine cone over the
Segre embedding of $\mathbb P^1\times\mathbb P^2$, the projective
map that sends two vectors to their outer product~\cite{MathStacksSegre}.
On the positive real locus, $B=1/z_1$, $D=1/z_2$,
$A=B/z_3=D/z_4$ and $E=B/z_5=D/z_6$ recover the physical weights.
Polynomial equalities may be studied over $\mathbb Q$ or $\mathbb C$;
the inequalities throughout concern this positive real locus.

Degree-$k$ parameter functions have degree $k$ in each factor,
called bidegree $(k,k)$, so their Hilbert series is
\begin{equation}
 H_{\mathbb R[z]/I_{\rm tor}}(t)
 =\sum_{k\ge0}(k+1)\binom{k+2}{2}t^k
 =\frac{1+2t}{(1-t)^4}.
 \label{eq:v4-toric-hilbert}
\end{equation}
The pole order four gives the number of independent ratios. The three
quadratic relations leave eighteen quadratic functions instead of
twenty-one. The third minor is essential at the hierarchy boundary.
The first two alone vanish whenever $z_1=z_2=0$, even for choices of
the other four ratios that violate rank one. Appendix~\ref{app:v4-toric}
proves that the three minors remove this spurious component and
account for all relations.

Write $|z|=\sum_{i=1}^6|z_i|$. Every monomial of $U/M$ and $\mathcal C/M$ can be expressed with nonnegative
exponents in $z$. Their polynomial lifts satisfy
\begin{equation}
 \begin{gathered}
 U/M=1+O(|z|),\quad \mathcal C/M=F+O(|z|^2),\\
 V/M=1+O(|z|).
 \end{gathered}
 \label{eq:v4-toric-unit}
\end{equation}
The denominator remains nonzero at the hierarchy corner, making
$R-1=F+O(|z|^2)$ analytic with uniform derivatives there, even though
the cone itself is singular. This regularity is the payoff of the
coordinates. The global inequality brings all competitive designs
into this region; the analytic expansion then controls their
stationarity and stability under contact variations.

\subsection{Exact symmetry from unrestricted optimization}
\label{sec:v3-global-optimum}
Let $t_H=H^{-1/2}$ and put
$(A,B,D,E)=(a/t_H^2,X/t_H,Y/t_H,e/t_H^2)$. Then
$R=1+t_H\mathcal F+O(t_H^2)$, with
\begin{equation}
 \mathcal F=\frac3X+\frac3Y+
        \frac{53X+29Y}{6a}+\frac{29X+53Y}{6e}.
 \label{eq:v3-main-scaled-objective}
\end{equation}
The global bound confines competitive weightings to a compact positive
scaled set. The leading minimum is unique at $a=e=1$,
$X=Y=3/\sqrt{41}$. Uniform derivative control forces the core weights
to saturate their cap; the positive pin Hessian gives a unique nearby minimum.
Reversal exchanges the pins, so uniqueness forces equality. For all
sufficiently large $H$,
\begin{equation}
 \begin{gathered}
 w_{\rm opt}=(H,q_H,1,q_H,H),\quad
 q_H=3\sqrt{H/41}+O(1),\\
 R_{\min}^{P_6}(H)=R_{\min}^{\refl}(H)
       =1+4\sqrt{41}\,H^{-1/2}+O(H^{-1}).
 \end{gathered}
 \label{eq:p6-optimum}
\end{equation}
Appendix~\ref{v3-app-exact-global-reflection} proves the result but
does not determine the smallest contrast at which the symmetric
optimum becomes unique and global. 

\subsection{Resource exponents and their sharp constants}
\label{sec:cost-comparison}
The four-cycle and five-site path obey analogous positive bounds.
For the cycle alone, take cyclic weights $(1,B,C,D)$ with minimum one
and opposite maximum $C\le H$. Then
\begin{equation}
 R-3>\frac8B+\frac8D+\frac{6BD}{C}\ge\frac8B+\frac8D+\frac{6BD}{H}
          \ge12(6/H)^{1/3}.
 \label{eq:c4-cost}
\end{equation}
For the five-site path, oriented with $q\ge1$,
\begin{equation}
 R-7>\frac{83}{2q}+\frac{16q}{A}+\frac{83A}{3Q}
       \ge(496008/H)^{1/3}.
 \label{eq:p5-cost}
\end{equation}
The opposite-extrema condition for a cycle promoter and both global
inequalities are proved in Appendix~\ref{app:resource-comparison}.
For the reflected six-site path,
\begin{equation}
 R-1=6u+\frac{82}{3}v+O((u+v)^2),
 \label{eq:p6-expansion}
\end{equation}
whose two linear terms also form a strict lower bound. These leading terms are positive monomials, so their competing
contact scales can be balanced without cancellations. With $q\asymp H^a$, $A\asymp H^b$ and
$Q\asymp H$, the five-site exponents are $a,b-a,1-b$. Their minimum
is at most $1/3$, attained at $(a,b)=(1/3,2/3)$; the cycle has the
same three-way balance. The unrestricted six-site exponents are
\begin{equation}
 (b,d,a-b,a-d,e-b,e-d),\quad 0\le a,b,d,e\le1,
 \label{eq:v4-six-exponents}
\end{equation}
for $(A,B,D,E)\asymp(H^a,H^b,H^d,H^e)$. Requiring each to be at least
$\gamma$ forces $\gamma\le1/2$, with equality only at
$a=e=1$, $b=d=1/2$.

The exponent comparison identifies the best power of the budget.
Weighted arithmetic--geometric mean (AM--GM) inequalities determine
the coefficient by a complementary variational bound. Choosing weights to cancel the free monomial powers gives
the cubes or square of the threshold coefficients,
\begin{equation}
 \begin{aligned}
 C_4:&\quad 27(8\cdot8\cdot6)=10368,\\
 P_5:&\quad 27\left(\frac{83}{2}\,16\,\frac{83}{3}\right)=496008,\\
 P_6:&\quad 4\cdot6\cdot\frac{82}{3}=656.
 \end{aligned}
 \label{eq:v4-dual-constants}
\end{equation}
For unrestricted $P_6$, saturating the cores reduces $F$ to
$3/B+3/D+41(B+D)/(3H)$, minimized at
$B=D=3\sqrt{H/41}$ with value $4\sqrt{41/H}$.
Appendix~\ref{app:v4-newton-dual} derives the exponent linear program
and its convex dual~\cite{MathBoydGP2007}. Matching controlled response
expansions makes these bounds sharp without assuming convexity of the
full rational threshold.

Table~\ref{tab:cost} compares the resulting contrast and neutral-time
costs, with $\epsilon$ measured above each support's own threshold floor.
\begin{table*}[t]
\caption{Leading costs as $\epsilon\downarrow0$, over all positive weights,
with paired uniform singleton preparation and the unit global attempt
clock. Time-only minima retain the threshold constraint but remove the
contrast cap. The last column gives the limiting ratio on the optimizing
families.}
\label{tab:cost}
\begin{ruledtabular}
\begin{tabular}{lccccc}
Support&$R_*$&Least contrast&$T_U$ at least contrast&Time-only $T_U$&$T_C/T_U$\\
\hline
$C_4$&3&$10368\epsilon^{-3}$&$24/\epsilon$&$16/\epsilon$&2\\
$P_5$&7&$496008\epsilon^{-3}$&$498/\epsilon$&$166/\epsilon$&$5/2$\\
$P_6$&1&$656\epsilon^{-2}$&$12/\epsilon$&$6/\epsilon$&3
\end{tabular}
\end{ruledtabular}
\end{table*}

\subsection{Introduction, protection and contact cost}
\label{sec:v7-preparation-cost}
A protected site can sustain a domain once occupied. Whether an
introduction reaches that site is a separate part of the invasion
experiment. Birth-associated introductions in Eq.~\eqref{eq:v7-preparation}
sample the incoming copying flux. Protection suppresses that flux and
therefore the frequency with which a newborn mutant occupies a pin.
The same microscopic mechanism that lets a domain persist makes it
harder to initiate. Using the source in
Eq.~\eqref{eq:v7-pair-preparation}, exact optimization over all positive weights gives
\begin{equation}
 \begin{array}{c|ccc}
  \text{support}&C_4&P_5&P_N\ (N\ge6)\\ \hline
  \inf R&3&7&1\\
  \inf R^{\mathrm b}&3&+\infty&1
 \end{array}
 \label{eq:v7-preparation-classification}
\end{equation}
Here $+\infty$ denotes absence of positive-benefit weak promoters.
The five-site change is particularly direct. Its birth-associated
benefit coefficient is negative for every positive weighting, so
changing the introduction mechanism removes the weak cooperative
advantage itself. Birth-associated introduction first permits weak
promotion on a path with six sites.

On $P_6$, the two preparations have the same threshold infimum but
different requirements for approaching it. For a target $R\le1+\epsilon$
or $R^{\mathrm b}\le1+\epsilon$, respectively,
\begin{equation}
 \begin{aligned}
 H_{\min}^{U}&=656\epsilon^{-2}+O(\epsilon^{-1}),\\
 H_{\min}^{\mathrm b}
 &=4050\epsilon^{-3}+6096\epsilon^{-2}+O(\epsilon^{-1}).
 \end{aligned}
 \label{eq:v7-preparation-costs}
\end{equation}
The birth-associated optimum is eventually exactly reflected, with
$q_\star^{\mathrm b}=9/\epsilon+337/90+O(\epsilon)$ and both outer
weights equal to $H_{\min}^{\mathrm b}$. The change of exponent can
already be read from the two leading reflected objectives,
\begin{equation}
 \begin{aligned}
 R-1&=\frac6q+\frac{82q}{3Q}
       +O\!\left[(q^{-1}+q/Q)^2\right],\\
 R^{\mathrm b}-1&=\frac6q+\frac{50q^2}{3Q}
       +O\!\left[(q^{-1}+q^2/Q)^2\right].
 \end{aligned}
 \label{eq:v7-preparation-objectives}
\end{equation}
The extra power of $q$ in the second term changes the balance to
$Q$ of order $q^3$. The weighted grading $(3,1,1,3)$ assigns degree three to $A,E$ and
degree one to $B,D$. It isolates this competition
before the optimization. An all-weight inequality in
Appendix~\ref{app:v7-preparation} proves that asymmetric designs cannot
improve its leading cost.

Keeping the uniform-optimal hierarchy $Q=(41/9)q^2$ instead gives
$R^{\mathrm b}\to173/23$, rather than one. Reoptimizing the graph is
therefore essential to comparing protocols. Even at its own optimum,
birth-associated introduction has neutral fixation probability
$\rho_0^{\mathrm b}\sim\epsilon/54$, whereas uniform introduction
has $\rho_0^U=1/6$. A favorable threshold and an appreciable probability
of successful introduction are distinct experimental requirements.
On $C_4$, birth-associated introduction retains the cubic cost but
increases its leading coefficient from $10368$ to $103680$ for a
target $3+\epsilon$.

A fixed positive fraction of uniform introductions restores the inverse-square
contrast exponent, with a coefficient that diverges as that fraction
vanishes. Appendix~\ref{app:v7-fixed-mixture} derives this global result
and identifies why the two preparation limits cannot be interchanged.

\section{Finite budgets, waiting time and design tolerance}
\label{sec:v3-tolerance}
Extra contrast can shorten an invasion by allowing faster exchange
between domains. Near the minimum budget, the same margin also limits
weight errors. Both questions depend on the finite-budget optimum,
including the first correction to its leading asymptotics.

\subsection{The time--contrast frontier}
\label{sec:v4-time}
Let $\mathcal T_U,\mathcal T_C$ be the separate infima of the
neutral times under $R\le R_*+\epsilon$ and contrast at most $H$.
For fixed scaled budget $\eta=H\epsilon^p$ above its limiting
minimum, $\mathcal T_U=\theta/\epsilon+O_\eta(1)$, where
\begin{align}
 C_4:\quad&\eta=\frac{6\theta^3}{\theta-16},
 &&16<\theta<24,\quad p=3,\notag\\
 P_5:\quad&\eta=\frac{1328}{3}\frac{\theta^3}{(\theta-166)^2},
 &&166<\theta<498,\quad p=3,\notag\\
 P_6:\quad&\eta=\frac{82}{3}\frac{\theta^2}{\theta-6},
 &&6<\theta<12,\quad p=2.
 \label{eq:joint-frontiers}
\end{align}
The conditional leading values carry the factors in Table~\ref{tab:cost}.
The lower branches in Fig.~\ref{fig:tradeoff} minimize time over all
positive weights; Appendices~\ref{app:resource-comparison} and
\ref{v3-app-global-time-frontier} prove the bounds.
The two branches correspond to slower and faster domain exchange at
the same threshold; the lower one resolves an invasion sooner. The stated errors
are not uniform at the least-budget endpoints.
\begin{figure}[t]
\includegraphics[width=\columnwidth]{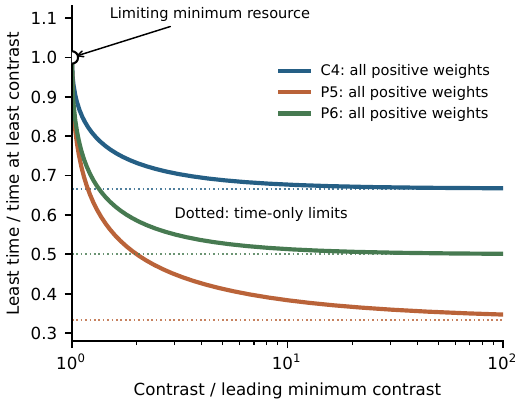}
\caption{The analytical leading time--contrast frontiers, normalized
by each support's least-contrast design. Extra contrast shortens the
invasion while retaining its threshold target. Dotted lines indicate
the time-only limiting infima. Each curve concerns its own threshold
floor and the same global attempt clock.}
\label{fig:tradeoff}
\end{figure}

On six sites, write $H_*(\epsilon),q_*(\epsilon)$ for the least
contrast and its pin contact. The exact analytic branch satisfies
$R=1+\epsilon$ and stationarity in $q$, with
\begin{equation}
 H_* =\frac{656}{\epsilon^2}+\frac{62032}{123\epsilon}+O(1),
 \quad q_* =\frac{12}{\epsilon}+\frac{26348}{5043}+O(\epsilon).
 \label{eq:finite-budget-leading-shift-v2}
\end{equation}
Further coefficients appear in Eq.~\eqref{eq:finite-budget-critical-v2}.
In particular, the leading budget $656/\epsilon^2$ is insufficient at
finite small gap.

\begin{proposition}[Shared exact neutral-time optimizer]
\label{prop:v4-exact-time}
For every fixed finite $M_b>656$, there is $\epsilon_0(M_b)>0$
such that for $0<\epsilon<\epsilon_0$ and
$H_*(\epsilon)\le H\le M_b/\epsilon^2$, the separate minima of
$T_U,T_C$ over positive $P_6$ weights with contrast at most $H$ and
$(1+\epsilon)J-K\ge0$ have the same unique normalized optimizer
\begin{equation}
 (H,q_-,1,q_-,H),\qquad R(H,q_-,1,q_-,H)=1+\epsilon.
 \label{eq:v4-exact-time}
\end{equation}
Here $q_-$ is the lower exact threshold root, continued from the lower
leading branch; the roots coincide at $H=H_*$. For the strict threshold
target these values are infima when $H>H_*$, while the target is
infeasible at $H=H_*$.
\end{proposition}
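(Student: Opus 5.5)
The plan is to reduce the constrained time minimization to a neighbourhood of the reflected hierarchy corner, where the toric coordinates of Sec.~\ref{sec:v4-toric} make every quantity analytic, and then to run a finite-dimensional KKT and uniqueness argument in rescaled variables. Throughout I normalize the weights as $(A,B,1,D,E)$ and use the scaling $\epsilon\asymp t_H=H^{-1/2}$, so that $H=\eta/\epsilon^2$ with $\eta\in[\eta_*(\epsilon),M_b]$. I also set $(A,B,D,E)=(a/\epsilon^2,X/\epsilon,Y/\epsilon,e/\epsilon^2)$.

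\textbf{Step 1: confinement.} Every feasible design satisfies $R-1\le\epsilon$, so the global bound \eqref{eq:v3-main-global-bound} gives $F<\epsilon(1+O(\epsilon))$. Each term of $F$ is then $O(\epsilon)$. With $A,E\le H$ this confines $(a,e,X,Y)$ to a compact set $\mathcal K(M_b)\subset(0,\infty)^4$, bounded away from the boundary uniformly in $\epsilon$; the lower bounds on $a,e$ and $X,Y$ come from the terms $(53B+29D)/6A$ and $3/B$, respectively.

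On $\mathcal K$ the expansion \eqref{eq:v4-toric-unit} gives $R-1=\epsilon\,\mathcal F(a,e,X,Y)+\epsilon^2 G_\epsilon$, with $G_\epsilon$ and its derivatives bounded uniformly. I would prove the analogous statement for the neutral times. The quantities $\epsilon T_U$ and $\epsilon T_C$ are analytic functions of the toric ratios, obtained from the Schur-complement reduction \eqref{eq:schur} together with the differentiated resolvent or corrected Poisson source of Appendix~\ref{app:p6-protected}. Their leading terms are $\mathcal T_U^0(X,Y)$ and $\mathcal T_C^0(X,Y)$, which depend only on the pin contacts. In the reflected case these reduce to $X$ and $3X$, by \eqref{eq:p6-times}.

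The ingredient I must verify is the sign structure of these leading terms. Both should be strictly increasing in $X$ and in $Y$, because the domain exchange rate is set by the bridge through $u$-type rows. Both should be independent of $a$ and $e$ at leading order.

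\textbf{Step 2: KKT structure.}
\begin{itemize}
\item The core weights saturate. Increasing $a$ or $e$ strictly decreases $\mathcal F$ while changing time only at relative order $\epsilon$. A core below its cap would therefore permit a strictly smaller pin contact at equal threshold, giving a shorter time. With uniform derivative control this forces $a=e=\eta$, so $A=E=H$ exactly.
\item The threshold constraint is active. Otherwise both times could be reduced by decreasing $X$ or $Y$.
\item The problem becomes: minimize a time that is increasing in $X$ and $Y$ on the curve $\{\mathcal F(\eta,\eta,X,Y)+\epsilon G=1\}$. The leading function $3/X+3/Y+41(X+Y)/(3\eta)$ is convex and reflection symmetric.
\end{itemize}
For the reflected slice $X=Y$ the constraint has two roots $q_\pm$, which merge at $\eta=\eta_*(\epsilon)$, where $H=H_*$. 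The time is monotone in $q$, so the lower root $q_-$ is selected. This holds for $T_U$ and $T_C$ simultaneously, because both are increasing in $q$. That shared monotonicity is why the optimizer is shared.

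\textbf{Step 3: symmetry and uniqueness.} Next I parametrize the antisymmetric direction $X-Y$ on the constraint curve. I would show that the constrained Hessian of each time in that direction is positive, uniformly for $\eta\in[\eta_*,M_b]$ and small $\epsilon$. The leading Lagrangian can be computed explicitly from $\mathcal F$ and $\mathcal T^0$; the perturbation is $O(\epsilon)$ in $C^2$. An implicit-function argument then gives a unique local minimizer near the leading one.

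Reversal of the path maps feasible designs to feasible designs and preserves $T_U$, $T_C$ and $R$. Uniqueness therefore forces the minimizer to be fixed by reversal, so it is $(H,q_-,1,q_-,H)$. Compactness of $\mathcal K$, combined with strict improvement along the non-saturated directions, excludes distant competitors.

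For the strict target $R<1+\epsilon$, feasible points approach the boundary optimum, so the stated values are infima. At $H=H_*$ the constraint set $\{R\le1+\epsilon\}$ is the single point where the two roots touch, and the strict inequality is then infeasible.

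\textbf{Main obstacle.} The hardest step is uniformity at the lower endpoint $\eta\downarrow\eta_*(\epsilon)$. There the constraint curve is tangent to the reflected slice, so the Lagrange multiplier blows up relative to the objective gradient. Nondegeneracy in the antisymmetric direction must then come from the curvature of the constraint itself, not from that of the time.

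My first approach would be to parametrize by the symmetric variable $s=(X+Y)/2$ and the antisymmetric variable $d=(X-Y)/2$. Using reflection parity, $\mathcal F$ and its correction are even in $d$, so $\partial_d$ vanishes on the symmetric slice. I would then show that $\partial_d^2(\mathcal F+\epsilon G)>0$ there, uniformly in $\eta$. This makes any feasible asymmetric point require a strictly larger $s$, and hence a strictly larger time, which avoids dividing by the degenerating multiplier. The remaining work is to check that the exact time functions are increasing in $s$ at fixed small $|d|$, which follows from Step 1.
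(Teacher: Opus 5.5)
\textbf{Gap in the symmetry step.} Your architecture matches the paper's: global confinement to a compact scaled chart, leading times depending only on the pins, core saturation, an active threshold face, reflection plus uniqueness, and separate care at the fold. The problem is that you never identify the leading time, and its form decides the symmetry step. The interval resolvent has slow Schur complement $\epsilon S$ with $S(0)=I_2/(3\theta)$. Hence $\epsilon T_U\to\theta$ and $\epsilon T_C\to3\theta$, where $\theta=2XY/(X+Y)$ is the harmonic mean of the scaled pin contacts. With your antisymmetric variable $d=(X-Y)/2$ this is $\theta=s-d^2/s$. So at fixed $s$ the time strictly \emph{decreases} as $|d|$ grows. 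Two of your inferences fail as a result:
\begin{itemize}
\item At the fold, ``a feasible asymmetric point requires a strictly larger $s$, and hence a strictly larger time'' does not follow. A larger $s$ can be more than offset by the loss $d^2/s$. Your remaining check, monotonicity in $s$ at fixed $|d|$, does not close this.
\item Away from the fold, convexity and reflection symmetry of $3/X+3/Y+41(X+Y)/(3\eta)$, together with a time increasing in $X$ and $Y$, do not by themselves give a symmetric minimizer. The objective is concave, and a concave function minimized over a convex region need not prefer the symmetric boundary point.
\end{itemize}

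\textbf{What is needed and how the paper does it.} You need a quantitative comparison. Along the face, $s$ must increase by $\sigma d^2$ with $\sigma=-\mathcal F_{dd}/(2\mathcal F_s)$, and you must show $\sigma>1/s$. At leading order this holds: $\sigma-1/s=\frac1s\,\frac{82/(3\eta)}{6/s^2-82/(3\eta)}>0$. But that is a difference of two order-one terms, not a consequence of convexity. The paper avoids this competition away from the fold by working in coordinates $(\theta,p)$ with $p=Y/(X+Y)$. There the leading time is $\theta$ alone, and the face satisfies $\partial_p^2\theta_{\rm face}=-g_{0,pp}/g_{0,\theta}>0$.

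Near the fold, $\mathcal F_s\to0$, and your face parametrization in $d$ exists only on a range shrinking with the surplus. Here the paper uses the growing multiplier rather than avoiding it. Set $\Lambda=-f_s(z_-)/G_s(z_-)\gtrsim[\epsilon^2(H-H_*)]^{-1/2}$. Then $f+\Lambda G$ has positive pin Hessian and positive inward core derivatives on a fixed box, despite the negative antisymmetric curvature of $f$. Every feasible point therefore obeys $f\ge f+\Lambda G\ge f(z_-)$, with equality only at the reflected lower root. Your remark that nondegeneracy must come from the constraint curvature is the right intuition. The blow-up of the multiplier is exactly what makes that curvature dominate the concave time, so the argument has to use it, not circumvent it.
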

\begin{proof}
Descendants of a singleton remain a contiguous
interval on a path. The resulting twenty-state transient resolvent
determines both exact means on $P_6$. Uniform derivative control,
active core constraints and positive constrained curvature force the
common reflected minimum, including at the feasibility fold.
Appendix~\ref{app:v4-exact-time} gives the full argument.
\end{proof}

The constraint already implies $J>0$ because $K>0$. These are exact
neutral-time optima under a weak-selection target; their finite-weight
ratio need not be three. The bounded scaled-budget regime is essential
to the uniform proof.

\subsection{The geometry of feasible errors}
\label{sec:finite-budget-fold-v2}
Local edge sensitivity describes the response to changes in individual
weights~\cite{MengMasuda2023}. Near the unrestricted optimum, the
full admissible region includes both signed pin perturbations and
one-sided losses from capped core weights. Define
 \begin{align}
 d&=\epsilon^2(H-H_*), \notag \\
 (\alpha_L,\alpha_R)&=\epsilon^2(H-A,H-E), \label{eq:v3-main-error-coordinates} 
\\
 \xi_s&=\epsilon[(B+D)/2-q_*],\quad
 \xi_a=\epsilon(B-D)/2.\notag
 \end{align}
The coordinates $\xi_s,\xi_a$ measure a common pin shift and an
imbalance; $\alpha_L,\alpha_R\ge0$ are unused core strengths.
Expansion about the exact critical design gives the leading feasible region
\begin{equation}
 \frac{41}{9}(\xi_s^2+\xi_a^2)
       +\frac{\alpha_L+\alpha_R}{2}\le d.
 \label{eq:v3-main-tolerance}
\end{equation}
The positive pin Hessian makes the signed errors quadratic, whereas
nonzero inward core derivatives make the deficits linear. An analytic change of coordinates absorbs the higher terms without
changing the two one-sided core constraints. This parameter-dependent
Morse normal form with corners puts the exact nearby region into
\begin{equation}
 X_s^2+X_a^2+\mu_L+\mu_R\le\rho_\epsilon(d),\qquad
 \mu_L,\mu_R\ge0,
 \label{eq:v4-morse-corner}
\end{equation}
with $\rho_\epsilon(0)=0$, $\rho_\epsilon'(0)>0$.
Appendix~\ref{v3-app-full-tolerance} constructs the change and proves
that the chart contains all feasible designs sufficiently near the
critical contrast.

On the reflected, core-saturated section, the roots open as
\begin{equation}
 \begin{aligned}
 q_\pm&=q_*\pm[3/\sqrt{41}+O(\epsilon)]\sqrt{H-H_*}\\
 &\quad+O\!\left(\epsilon(H-H_*)\right),
 \qquad \epsilon^2(H-H_*)\ll1.
 \end{aligned}
 \label{eq:finite-budget-width-v2}
\end{equation}
This square-root dependence is a fold of the design constraint
\cite{PriorityKuznetsov2006}. Each finite positive-weight stochastic
system still has the same two consensus absorbing states.

For multiplicative errors, use the logarithmic measure
$dV_{\log}=dA/A\,dB/B\,dD/D\,dE/E$ with the bridge fixed. If
$\mathfrak r=(H-H_*)/H_*$, the feasible volume is
\begin{equation}
 \mathcal V_{\log}=\frac{4\pi}{3}[1+O(\epsilon)]
             \mathfrak r^3[1+O(\mathfrak r)].
 \label{eq:v3-main-volume}
\end{equation}
Two quadratic widths and two linear widths give the exponent
$2/2+2=3$. More generally, $m$ nondegenerate quadratic directions
and $r$ active linear constraints give power $m/2+r$ under a smooth
positive density; the Jacobian fixes the prefactor. Fixing a core
exactly changes the dimension. The volume becomes a probability only
after an error distribution is specified.

For example, independent Gaussian logarithmic errors in the two pin
contacts, with the core weights fixed at their cap, lead to an
explicit confidence--contrast relation. Appendix~\ref{app:v7-gaussian-tolerance}
derives it from the same quadratic pin geometry and includes the
selection penalty.

In the layer $H=H_*+\omega/\epsilon+O(1)$, fixed $\omega>0$,
the lower root also gives
\begin{equation}
 \begin{aligned}
 \mathcal T_U&=\frac{12}{\epsilon}
        -\frac{3\sqrt\omega}{\sqrt{41\epsilon}}+O_\omega(1),\\
 \mathcal T_C&=\frac{36}{\epsilon}
        -\frac{9\sqrt\omega}{\sqrt{41\epsilon}}+O_\omega(1).
 \end{aligned}
 \label{eq:finite-budget-time-layer-v2}
\end{equation}
At fixed $\omega>0$, the $\epsilon^{-1/2}$ correction follows from the
fold width. When $\omega=O(\epsilon)$, this shift becomes $O(1)$,
at the order omitted here; the exact-root statement resolves that
endpoint regime, including the fold itself.

\section{Finite selection and observable advantage}
\label{sec:v3-selection-window}
A low threshold does not determine the size of the fixation advantage.
Close to the least-budget design, increasing selection strengthens
both cooperative-domain competition and the adverse establishment
response. The higher-order algebra of Sec.~\ref{sec:v4-observable-algebra}
now decides their balance and relates nonlinear selection to the
contact-error margin.

\subsection{Formation corrections and global control}
For fixed positive $\kappa=\delta c$ in a compact admissible interval,
write $r=b/c=1+\epsilon$. Correcting both entrance and domain competition gives
\begin{align}
 \Delta={}&\frac{\kappa}{6}
       [(r-1)-A_\kappa u-B_\kappa v]\notag\\
 &+O\!\left((r-1)^2+|r-1|(u+v)+(u+v)^2\right),\notag\\
 r_{\rm crit}-1={}&A_\kappa u+B_\kappa v+O((u+v)^2).
 \label{eq:selected-boundary-v2}
\end{align}
Here $A_\kappa,B_\kappa>0$, $A_0=6$ and $B_0=82/3$;
Appendix~\ref{app:selected-boundary-v2} derives their linear- and
exponential-fecundity expressions. At $b=c$, the leading hierarchy dynamics makes aligned domains neutral.
Finite contacts retain an adverse payoff difference across the bridge
and a selected establishment transient, whose combined corrections
shift the invasion boundary.

A positive-forest comparison controls which designs can be favorable
near neutrality. Removing the harmonic row normalization, the drift
integral is a difference of positive occupation rewards,
\begin{equation}
 \frac{\Delta}{\kappa}
 =\epsilon\,\mathcal U_\kappa-(1+\epsilon)\mathcal V_\kappa,
 \qquad
 \frac{\mathcal V_0}{\mathcal U_0}=1-\frac JK,
 \label{eq:v4-positive-rewards}
\end{equation}
with a positive secant factor for exponential fecundity. The
all-minors matrix-tree expansion writes the Green-function cofactors
as positive sums over rooted forests~\cite{AGNamGunawardena2025}.
Their common denominator cancels in the reward ratio, so bounded
changes of microscopic rates compare selected and neutral responses
uniformly over weights. Complement pairing improves the error to
$O(\kappa^2)$, even when the susceptibility is small.
Appendix~\ref{v3-app-selected-global-localization} proves this bound.

It forces every selected promoter into the same domain hierarchy for
sufficiently small positive $(\epsilon,\kappa)$. Designs with contrast
no greater than a reflected selected trial lie in a compact scaled
region. There, core saturation and pin uniqueness make the selected
critical branch globally reflected. This control is local in selection
strength, although unrestricted in positive weights.

\begin{proposition}[Shared optimizer with selection]
\label{prop:v7-selected-time}
For each fixed finite $M_b>656$, there are $\kappa_0,\epsilon_0>0$
such that the following holds for either linear or exponential
fecundity. Let $0\le\kappa\le\kappa_0$, $0<\epsilon<\epsilon_0$
and
\begin{equation}
 H_*(\epsilon,\kappa)\le H\le M_b\epsilon^{-2},
 \label{eq:v7-selected-time-domain}
\end{equation}
where $H_*(\epsilon,\kappa)$ is the least contrast for the closed
selected target $\Delta\ge0$ at $b/c=1+\epsilon$.
At $\kappa=0$, the closed and strict targets and the root equation
below use the continuous limit of $\Delta/\kappa$. The unconditional absorption time and the time conditioned
on fixation of the introduced type, for either uniform cooperator
or uniform defector introduction, have the same unique normalized
minimizer
\begin{equation}
 (H,q_-^{\rm sel},1,q_-^{\rm sel},H),\qquad
 \Delta(H,q_-^{\rm sel},1,q_-^{\rm sel},H;\epsilon,\kappa)=0.
 \label{eq:v7-selected-time-root}
\end{equation}
The lower root continues the lower threshold branch and coincides
with the upper root at $H=H_*$. For strict positive advantage these
boundary values are infima when $H>H_*$; that strict target is
infeasible at the fold.
\end{proposition}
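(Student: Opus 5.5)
The plan is to treat Proposition~\ref{prop:v7-selected-time} as a perturbation in $\kappa$ of Proposition~\ref{prop:v4-exact-time}, with every estimate uniform over all positive weights. At $\kappa=0$ the statement reduces to that neutral result. The selected times of a uniform defector introduction are obtained from those of the cooperator by complementing traits, which maps $\kappa\mapsto-\kappa$ in the centered coordinates of Sec.~\ref{sec:v4-observable-algebra}, so I treat the four time functionals together.

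\emph{Localization.} I take the positive-reward representation \eqref{eq:v4-positive-rewards} and the positive-forest comparison of Appendix~\ref{v3-app-selected-global-localization}, which gives $\Delta/\kappa=\epsilon\,\mathcal U_0-(1+\epsilon)\mathcal V_0+O(\kappa^2)\mathcal U_0$ uniformly in the weights. Feasibility $\Delta\ge0$ then forces $R-1\le\epsilon+O(\kappa^2)$, up to a relative factor. The global bound \eqref{eq:v3-main-global-bound} confines every feasible design to the hierarchy region \eqref{eq:asymmetric-entry-v2}. Together with $H\le M_b\epsilon^{-2}$, this puts the scaled variables $(a,X,Y,e)$ of Sec.~\ref{sec:v3-global-optimum} into a compact set bounded away from the boundary. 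On that set the toric coordinates \eqref{eq:v4-toric-unit} make $\Delta/\kappa$ and the selected absorption times analytic in $(z,\kappa)$ with uniform derivatives. For the times I would prove analyticity from the twenty-state interval resolvent used for Proposition~\ref{prop:v4-exact-time}, since selected rates on intervals remain rational in the weights and fecundities.

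\emph{Optimization.} Each time is, to leading order, a multiple of $q$: $T_U\approx q$ and $T_C\approx 3q$, with the pin scale set by $(B+D)/2$ and selected corrections of relative size $O(\kappa+\epsilon)$. Minimizing time therefore pushes the pin contacts downward onto the constraint boundary $\Delta=0$, on the lower branch. Along that boundary I would show three things. First, the derivative of the constraint in each core weight has a definite sign, so both cores saturate at $H$. Second, the restricted problem in $(\xi_s,\xi_a)$ has positive-definite curvature uniformly in small $\kappa$, because it is a $C^2$-small perturbation of the neutral Morse chart \eqref{eq:v4-morse-corner}. Third, the time gradient is transverse to the constraint. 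Uniqueness then follows from an implicit-function argument. Since reversal of the path preserves the selected process, it maps the minimizer to itself, which forces $B=D$. The lower root $q_-^{\rm sel}$ is the continuation of the neutral $q_-$, and the two roots merge at the selected fold $H_*(\epsilon,\kappa)$. At that fold the strict target fails, while for $H>H_*$ the boundary value is approached from the interior, so these values are infima.

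\emph{Main obstacle.} The hardest step is uniformity near the fold $H=H_*(\epsilon,\kappa)$. There $\partial_q\Delta$ vanishes, so the lower root is only a square-root function of $H-H_*$, as in \eqref{eq:finite-budget-width-v2}. I would first try to work directly in the Morse-with-corners normal form, now depending on the parameter $\kappa$. In those coordinates the time objective becomes a linear function with nonzero gradient along $-X_s$, restricted to a disc-plus-corner region. Its minimizer is then visibly unique, reflected, and attained on the boundary, uniformly up to and including the fold. A second difficulty is that the comparison must hold uniformly in $\kappa$ for linear fecundity, whose positivity requires $\kappa<1$. Choosing $\kappa_0$ small handles this.
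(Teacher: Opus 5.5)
Your architecture matches the paper's: forest-plus-global-bound localization, interval-resolvent analyticity of the four times, persistence of the strict neutral inequalities away from the fold, a separate fold argument, and reflection from uniqueness. However, the localization step fails as you state it. You write the forest comparison with an additive error, $\Delta/\kappa=\epsilon\mathcal U_0-(1+\epsilon)\mathcal V_0+O(\kappa^2)\mathcal U_0$, and conclude $R-1\le\epsilon+O(\kappa^2)$.

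The proposition fixes $\kappa_0$ and lets $\epsilon\downarrow0$ independently. For $\epsilon\ll\kappa^2$ your bound therefore gives only $R-1\lesssim\kappa_0^2$. The global bound \eqref{eq:v3-main-global-bound} then yields only $B,D\gtrsim\kappa_0^{-2}$ and $(B+D)/A,(B+D)/E\lesssim\kappa_0^2$, not $B,D\asymp\epsilon^{-1}$. So the scaled pin variables are not bounded away from zero, and the compact chart is never reached. This is not a technicality: such designs have absorption times of order $\kappa_0^{-2}\ll\epsilon^{-1}$ and would beat the reflected lower root if they were feasible.

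The comparison has to be used multiplicatively. The degree-62 tree determinant cancels only in the ratio, so $\mathcal V_\kappa/\mathcal U_\kappa\ge\Gamma^{-1}\mathcal V_0/\mathcal U_0$ with $\mathcal V_0/\mathcal U_0=1-J/K$. Then $\Delta\ge0$ gives $1-J/K\le\Gamma\epsilon/(1+\epsilon)$, that is, $R-1\le\Gamma\epsilon/[1-(\Gamma-1)\epsilon]=O(\epsilon)$ uniformly on the $\kappa$-interval, with $\Gamma=1+O(\kappa^2)$ after complement pairing \eqref{v3-eq-selected-parity-bound}. Only this relative form confines every feasible design with $H\le M_b\epsilon^{-2}$ to the $\epsilon$-scaled chart, whatever the relative sizes of $\epsilon$ and $\kappa$.

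Two further steps need different justification.
\begin{itemize}
\item \emph{Analyticity of the constraint.} The toric expansion \eqref{eq:v4-toric-unit} concerns only the weak-selection polynomials $U,V,\mathcal C$ and says nothing about the finite-$\kappa$ committor. Also, in the chart, $\Delta/\kappa$ and all its weight derivatives are $O(\epsilon)$, because $\rho_C,\rho_D\to1/6$ and aligned domains are neutral at $b=c$. The definite core signs and uniform positive pin curvature you need therefore hold for the divided response $6\Delta/(\epsilon\lambda)$. Its analyticity comes from the sixty-fast/two-slow Schur elimination, with the exact factor $\epsilon$ in the slow rows, together with complement parity. This is the same mechanism you invoke for the times.
\item \emph{Reflection away from the fold.} The times are governed by the harmonic mean $2BD/(B+D)$, not $(B+D)/2$. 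In your $(\xi_s,\xi_a)$ coordinates, $\epsilon T_U\approx\theta_*+\xi_s-\xi_a^2/(\theta_*+\xi_s)$, which has negative imbalance curvature. So positivity of the constraint's pin Hessian does not by itself give reflection, and the Morse chart is in any case local to the fold. You need the constrained curvature $\partial_p^2\theta_{\rm face}=-g_{0,pp}/g_{0,\theta}>0$ in the harmonic chart, as in the neutral proof.
\end{itemize}

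Near the fold your normal-form picture works once ``linear'' is made quantitative. The objective has a bounded but nonzero Hessian, and uniqueness holds because the boundary curvature of the feasible set, of order $d^{-1/2}$, dominates it. This is exactly the paper's multiplier $\Lambda=-f_s/G_s\asymp d^{-1/2}$, which makes $f+\Lambda G$ convex with positive inward core derivatives on the box, so that $f\ge f+\Lambda G\ge f(z_-)$ on feasible points.
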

\begin{proof}
The interval resolvent
retains the physical attempt clock, and the strict core and
pin-curvature inequalities in the neutral proof persist at small
selection, including through the moving fold.
Appendix~\ref{app:v7-selected-time} establishes this uniform
extension.
\end{proof}

The four mean times share their minimizing design, although their
finite-parameter values need not coincide.
With $\eta=\epsilon^2H$ fixed above $4A_\kappa B_\kappa$,
the common leading unconditional time is
$\theta_-^{\rm sel}/\epsilon$, and each success-conditioned time is
$3\theta_-^{\rm sel}/\epsilon$, where
\begin{equation}
 \theta_-^{\rm sel}(\eta,\kappa)
 =\frac{2A_\kappa}
 {1+\sqrt{1-4A_\kappa B_\kappa/\eta}}.
 \label{eq:v7-selected-time-frontier}
\end{equation}
The additive error is $O(1)$ on compact budgets away from the leading
fold. The exact-root statement also covers the fold itself. This
extension keeps $M_b$ finite and selection sufficiently small.

\subsection{Nonlinear selection and the tolerance margin}
Subtract $\epsilon/2$ from $f_i/c$.
The centered intensity is
\begin{equation}
 \begin{aligned}
 \lambda&=\frac{\kappa}{1+\epsilon\kappa/2},\quad
 \kappa=\frac{\lambda}{1-\epsilon\lambda/2}
       &&\text{(linear)},\\
 \lambda&=\kappa&&\text{(exponential)}.
 \end{aligned}
 \label{eq:v3-main-centered-selection}
\end{equation}
Thus $\lambda=c u_c$, with $u_c$ the centered coordinate introduced
in Sec.~\ref{sec:v4-observable-algebra}.
Type complementation changes the centered payoff's sign. The common
fecundity factor cancels, making $\Delta$ exactly odd in $\lambda$.
In the competitive scaled coordinates of
Eq.~\eqref{eq:v3-main-error-coordinates}, the divided slow operator and
fast inverse make $6\Delta/(\epsilon\lambda)$ analytic at the
hierarchy corner.
Using the response spaces of Sec.~\ref{sec:v4-observable-algebra},
its cubic selection term gives the joint expansion
\begin{equation}
 \begin{aligned}
 \Delta={}&\frac{\epsilon\lambda}{7872}
 \biggl[d-\frac{\alpha_L+\alpha_R}{2}
       -\frac{41}{9}(\xi_s^2+\xi_a^2)-C_0\lambda^2\biggr]\\
 &\quad+\epsilon\lambda\,O(\epsilon\upsilon^2+\upsilon^3),
 \end{aligned}
 \label{eq:v3-main-selected-signal}
\end{equation}
where $d,\alpha_L,\alpha_R=O(\upsilon^2)$ and
$\xi_s,\xi_a,\lambda=O(\upsilon)$, with
\begin{equation}
 C_0^{\rm lin}=24368/9,\qquad C_0^{\exp}=14296/9.
 \label{eq:v3-main-selection-cost}
\end{equation}
The error is additive; the exact analytic boundary is needed arbitrarily
near a zero of the leading bracket. The coefficients and recursions are
in Appendix~\ref{app:v3-joint-selection}.

Nonmonotonic fixation at intermediate selection is known in cooperative
games~\cite{McAvoyRaoHauert2021}. Here its onset is determined jointly
with the unrestricted design surplus. At optimized contacts, the
positive window ends at
\begin{equation}
 \lambda_{\rm edge}=\sqrt{d/C_0}[1+O(\epsilon+d)].
 \label{eq:v3-main-window}
\end{equation}
Thus $H-H_*=\omega/\epsilon$ permits selection of order
$\sqrt\epsilon$. The same quadratic inequality measures pin errors
and nonlinear selection loss, as illustrated in Fig.~\ref{fig:v3-tolerance}.
\begin{figure*}[t]
\centering
\includegraphics[width=.94\textwidth]{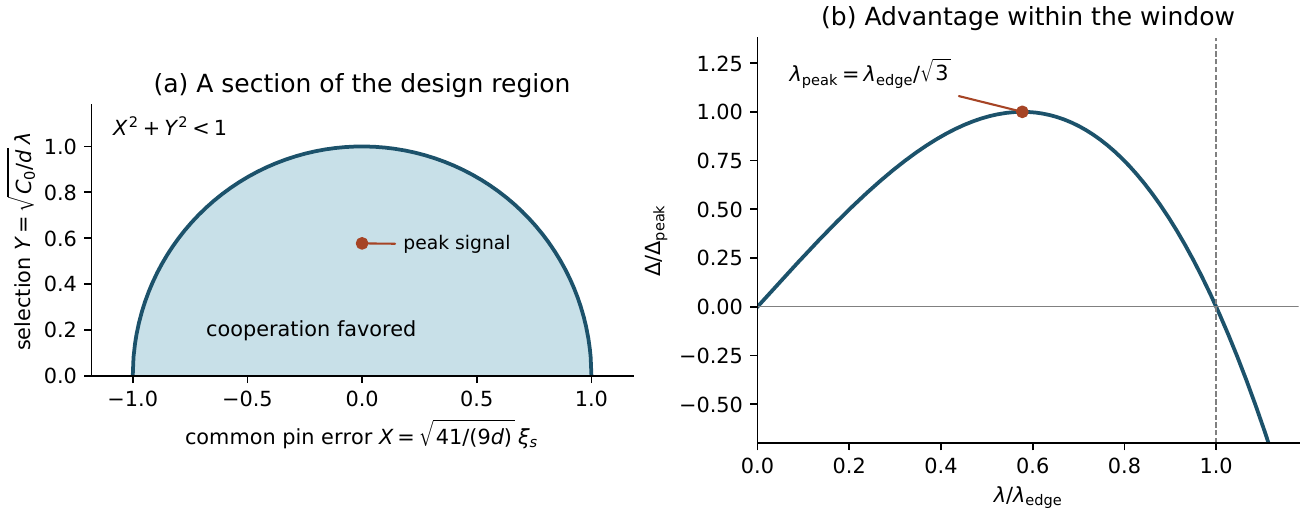}
\caption{ The leading joint tolerance and signal.
(a) With saturated cores and no pin imbalance,
$X=\sqrt{41/(9d)}\,\xi_s$ and $Y=\sqrt{C_0/d}\,\lambda$ put the
favorable region inside a semicircle.
(b) The normalized signal is $(3\sqrt3/2)Y(1-Y^2)$, with its maximum
inside the positive interval. The full geometry additionally contains
a second quadratic pin direction and two nonnegative core deficits.
These are analytical asymptotic curves, not fits.}
\label{fig:v3-tolerance}
\end{figure*}

On the reflected linear-fecundity family the Taylor radius in centered
intensity tends to one, while this favorable window tends to zero
(Appendix~\ref{par:v3-joint-radius-window}). The adverse nonlinear
response therefore reverses the sign within the convergent expansion.

\subsection{Signal and observation time}
\label{sec:signal-information}
The most favorable selection strength lies inside the window, where
increasing intensity no longer compensates its adverse nonlinear
response. The peak is
\begin{equation}
 \begin{gathered}
 \lambda_{\rm peak}=\frac{\lambda_{\rm edge}}{\sqrt3}
                       [1+O(\epsilon+d)],\\
 \Delta_{\rm peak}=\frac{\epsilon \,d^{3/2}}{11808\sqrt{3C_0}}
                       [1+O(\epsilon+d)].
 \end{gathered}
 \label{eq:v3-main-peak}
\end{equation}

For $d=\omega\epsilon$ the peak is of order $\epsilon^{5/2}$.
To measure this small advantage, introduce each type equally often
and retain its introduction site. Equal numbers of independent
completed runs in the twelve type-and-site classes give an unbiased
estimate of the same uniform advantage. With $M$ runs in total,
\begin{equation}
 \operatorname{Var}(\widehat\Delta_{\rm str})\sim\frac1{3M},
 \qquad
 \E[T_{\rm total}]\sim\frac{12M}{\epsilon}.
 \label{eq:v3-main-variance}
\end{equation}
The pins dominate both uncertainty and duration. Their fixation
probabilities approach one half and their mean run time approaches
$36/\epsilon$, whereas follower outcomes have variance
$O(\epsilon)$ and bounded mean duration. Balancing the site counts in advance, or stratifying independently
randomized introductions by their retained labels, reduces the leading
duration needed for a fixed signal-to-noise ratio by a factor $3/5$
compared with randomized uniform sites whose labels are discarded. The required number of completed runs still scales as
$\epsilon^{-5}$, and their aggregate duration as $\epsilon^{-6}$
global attempts.

This exponent is optimal among regular terminal-outcome estimates
that treat the site-specific fixation probabilities as unknown,
allow arbitrary allocation between the introduction classes and
use mean run duration as the cost. The exact variance--time
inequality in Appendix~\ref{app:v7-terminal} separates that
measurement problem from estimating selection with known game and
weights. Reset costs can be included in the duration of each class.

For population abundance, the small signal also fixes how well
successive introductions must be separated. If independent symmetric
spontaneous flips occur at total rate $\nu$, with each site chosen
uniformly, then in the optimized layer $q\asymp\epsilon^{-1}$ the
condition $\nu=o(\epsilon^2)$ suffices to preserve the positive
$O(\epsilon^{5/2})$ rare-introduction bias.
Appendix~\ref{app:v7-rare-signal} derives this joint limit. The
condition $\nu q\ll1$ controls the overall rare-introduction
approximation but does not by itself resolve a vanishing sign margin.

\section{Discussion}
\label{sec:discussion}

Protection has two consequences for collective invasion. It lets a
new trait retain influence long enough to organize its neighbors, and
it reduces the incoming flux through which a newborn mutant could
occupy that influential site. Separating arrival, establishment and
domain competition explains why the same network can perform very
differently under two introduction mechanisms. The five-site path
promotes under uniform introduction but never under birth-associated
introduction at weak selection. Six sites suffice for both mechanisms
to approach the collective payoff barrier $b=c$, with different
sharp resource exponents.

The resulting collective competitors emerge from individual copying
on a fixed reciprocal support. Local establishment followed by
transfer between communities is an established reduction
\cite{PiresBroom2024Communities}. In the deme model of
Ref.~\cite{MoawadAbbaraBitbol2024Demes}, migration is independent of
a deme's average fitness, and the studied circulation graphs do not
favor cooperation in its rare-migration regime. Here an aligned
domain's payoff affects the rate at which it sends a copy across the
bridge. Equations~\eqref{eq:main-self-energy} and
\eqref{eq:domain-generator} derive both its entrance probabilities
and selected transfer from the microscopic process. Unequal reciprocal
weights create the required asymmetry between incoming replacement and
outgoing influence, beyond the equal-strength obstruction of
Ref.~\cite{AllenLippnerNowak2019}.

The global bounds determine when this collective description is
unavoidable. Every six-site design approaching the weak threshold
barrier enters the protected hierarchy. Filtered binary observables
close the response without factorizing fluctuations; the contact
polynomials then identify which terms can dominate in a singular
limit. Their grading changes with preparation, producing the
inverse-square and inverse-cube costs. Complete toric relations keep
the small ratios on the physical design space and control derivatives
at its singular boundary. These steps turn an effective stochastic
mechanism into unrestricted statements about microscopic weights.
Reflection symmetry follows from uniqueness of the optimizer rather
than from a restriction imposed on the search.

The contact budget also governs how rapidly the collective competitor
can spread and how accurately the graph must be realized. This adds
a game-dependent constraint to the probability--time comparisons
studied for constant-selection amplifiers
\cite{SlowTkadlec2019,SlowTkadlec2021}. In the proved bounded
scaled-budget regime, the four selected absorption objectives share
one lower boundary optimizer, whose neutral limit recovers the
neutral-time result. Near the contrast minimum, signed pin
errors consume margin quadratically, whereas unused core strength
consumes it linearly. This geometry fixes the tolerance volume and,
for a specified error distribution, the probability of realizing a
promoter. Nonlinear selection occupies another quadratic direction
in the same margin. Its nonmonotonic response
\cite{McAvoyRaoHauert2021} therefore has a controlled relation to
contact precision in this regime.

Under uniform introduction, the budget layer
$H-H_* = \omega/\epsilon$ with fixed $\omega>0$ has maximal advantage
of order $\epsilon^{5/2}$ in the controlled selection window.
Balancing or stratifying the introduction sites reduces the leading cost
of detecting that advantage by a factor $3/5$ relative to independently
randomized sites whose labels are discarded. The exponent of the aggregate
observation time remains six when terminal fixation probabilities
are independently unknown. This bound specifies the data and the
estimand. Knowledge of the microscopic model or access to complete
trajectories defines a different experiment and can exploit information
that a terminal average discards.

For quantitative biology, the preparation comparison connects
protected locations to their accessibility by mutation. Birth-associated
introduction on its optimal six-site hierarchy has neutral fixation
probability of order $\epsilon$, even though its threshold approaches
one. The general-game criterion identifies which diagonal payoff
advantage can overcome an off-diagonal disadvantage
\cite{TarnitaEtAl2009}; the mixed-encounter map separates benefit
transport from replacement. A particular biological application would
identify both kernels and the event clock, as explicit transport models
do for public benefits~\cite{AllenGoreNowak2013}. In the separate
model with symmetric spontaneous flips, the fixation comparison
predicts stationary abundance when introductions are sufficiently rare
\cite{McAvoyAllen2021}. Preserving the shrinking positive signal above
requires the stronger sufficient condition $\nu=o(\epsilon^2)$,
rather than an uncontrolled interchange of the slow-copying and
rare-mutation limits.

The singularities studied here concern design and relaxation at fixed
population size. The square-root opening of the feasible set is a
fold~\cite{PriorityKuznetsov2006}; every finite positive-weight
copying process retains two absorbing consensus states. The selected
sign reversal occurs inside a convergent response expansion. Arrays
of interacting protected domains would permit questions about collective
fluctuations and thermodynamic transitions, with a separate large-system
limit and control of its errors. The exact spin representation and timed
response hierarchy identify microscopic starting points for that
analysis~\cite{Schutz2001,DelRazoLammaMerbis2026}. Within finite-size
design, the two-ratio five-site optimization at arbitrary contrast,
the six-site moderate-budget branch structure, and stronger-selection
time optimization remain distinct extensions beyond the small-gap,
small-selection regimes proved here.

\begin{acknowledgments}
A.C. acknowledges the ICTS, Bengaluru program, \emph{Refinements in Enumerative Geometry and Physics 2026}, for their hospitality and for creating stimulating environments in which part of the work was completed.
\end{acknowledgments}
\appendix
\section{Sharp weak-selection thresholds on weighted paths}
\label{app:path-classification}

The path classification uses the model, copying rates and uniform paired
singleton preparation of Secs.~\ref{sec:model} and \ref{sec:response-theory}.
To allow a general fecundity normalization in this appendix, write
\begin{equation}
 \chi_\phi=\frac{\phi'(0)}{\phi(0)}>0.
 \label{app:path-selection-normalization}
\end{equation}
Linear and exponential fecundity have $\chi_\phi=1$; the former is
restricted to positive microscopic fecundities. Smoothness and positivity
are assumed on the selection domain used below.
\labelalias{app:path-model}{eq:payoff}
\labelalias{app:path-rates}{eq:rates}
\subsection{Pair representation and the meaning of a threshold}
\label{app:path-pair-representation}\label{app:response}

We use $\rho_C^U=\rho_C$, $\rho_D^U=\rho_D$,
$L=L_\gamma$ and the integrated disagreements $D_{ij}$ from
Sec.~\ref{sec:response-theory}.
The response $S_U$ allows a general fecundity slope.  It is
$\chi_\phi$ times the normalized response $\mathcal S$ of the main text.
The pair equation in Eq.~\eqref{eq:pair-system} retains the source-first
convention and unit global attempt clock.
\labelalias{app:path-dirichlet-data}{eq:JK}
\labelalias{app:path-pair-equations}{eq:pair-system}
Collect the $D_{ij}$ into the vector $\mathbf D$ in the unordered-pair basis and write
\begin{equation}
 A_{\rm pair}\mathbf D=2\,\mathbf1.
 \label{app:path-pair-matrix}
\end{equation}
The matrix $A_{\rm pair}$ is $(-N)$ times the generator of two ancestral
lineages killed upon meeting. On a finite connected graph, coalescence
occurs almost surely. Thus $A_{\rm pair}$ is nonsingular, its inverse is
nonnegative, and every component of $\mathbf D$ is strictly positive.

These occupation integrals also determine the variance accumulated by
the neutral martingale $\Phi$. Its quadratic-variation rate is
$\mathcal A_0\Phi^2=N^{-1}\sum_{ij}P_{ij}\pi_j^2(n_i-n_j)^2$.
At consensus $\Phi^2=\Phi$, whereas uniform singleton preparation has
$\E_0[\Phi(0)]=1/N$ and $\E_0[\Phi(0)^2]=N^{-1}\sum_i\pi_i^2$.
Integration using Eq.~\eqref{eq:edge-balance} therefore gives
\begin{equation}
 \sum_{i<j}\gamma_{ij}(\pi_i+\pi_j)D_{ij}
 =Z\biggl(1-\sum_i\pi_i^2\biggr).
 \label{app:path-quadratic-variation}
\end{equation}
This sum rule fixes a weighted contraction of $\mathbf D$ independently of
the pair inversion.

The contractions in Eq.~\eqref{eq:JK} give
\begin{equation}
 S_U=\frac{2\chi_\phi}{NZ}(bJ-cK),\qquad K>0.
 \label{app:path-response}
\end{equation}
The factor two compares complementary singleton preparations. Their
first-order drift agrees because
$f(\mathbf1-n)=(b-c)\mathbf1-f(n)$ and $L\mathbf1=0$.
These formulas specialize the reproductive-value and coalescent methods
of Refs.~\cite{Maciejewski2014,AllenMcAvoy2019,AllenEtAl2017,AllenEtAl2021,McAvoyAllen2021}.
A promoter has $J>0$ and threshold $R=K/J$; $J\le0$ gives negative
response for $b,c>0$. An empty promoter set has threshold infimum $+\infty$.

\subsection{The universal strict lower bound}
\label{app:path-universal-barrier}

Put $h_0=\Phi$ and $k_{ij}=\pi_jP_{ij}=\gamma_{ij}/Z$.
The exact edgewise drift and discordant-edge bound are
Eqs.~\eqref{eq:supermartingale} and \eqref{eq:edge-obstruction}.
The symmetry of $k$ concerns neutral ancestry, not configuration detailed balance.
\labelalias{app:path-neutral-coordinate}{eq:reproductive}
\labelalias{app:path-exact-drift}{eq:supermartingale}
\labelalias{app:path-discordant-bound}{eq:edge-obstruction}
At $b=c$ strictness follows from the positive edge, and at $b=0$ it follows
from $c>0$. Connectedness ensures a discordant edge in every mixed state.
Consequently, for positive selection and strictly increasing $\phi$,
$h_0$ is strictly superharmonic before consensus. Dynkin's identity at
the finite-mean absorption time $\tau$ gives
\begin{equation}
 h_\delta(n)=h_0(n)+
 \E_n\!\left[\int_0^\tau\mathcal A h_0(n(t))\,dt\right]<h_0(n)
 \label{app:path-stopping-identity}
\end{equation}
for each mixed initial state. Here $h_\delta$ is the probability of
all-cooperator absorption. In particular,
$\rho_C^U<1/N<\rho_D^U$ for $0\leq b\leq c$.

For the quantitative weak bound, orient each discordant edge as
$n_i=1,n_j=0$ and write $\Xi_{ij}=1-(Pn)_i+(Pn)_j$. Looplessness
and the row sums of $P$ give the nonnegative remainder
\begin{equation}
 \begin{aligned}
 \Xi_{ij}-(P_{ij}+P_{ji})
 &=\sum_{k\ne i,j}\bigl[P_{ik}(1-n_k)+P_{jk}n_k\bigr]\\
 &\ge0.
 \end{aligned}
 \label{app:path-edge-gap}
\end{equation}
Multiplication by $\gamma_{ij}$, summation over discordant edges and
integration over neutral singleton histories give $K-J\ge X_{\rm e}>0$,
with $X_{\rm e}$ defined in Eq.~\eqref{eq:edge-gap-bound}. Thus the
response at $b=c$ is strictly negative under
assumption in Eq.~\eqref{app:path-selection-normalization}. On a promoter,
$J>0$ ensures $X_{\rm e}<K$, and
\begin{equation}
 R=\frac KJ\ge\frac K{K-X_{\rm e}}>1.
 \label{app:path-universal-weak-bound}
\end{equation}
Strict monotonicity alone does not ensure $\phi'(0)>0$.  The finite-selection
obstruction remains valid for a strictly increasing map with zero derivative
at zero, but a first-order threshold then does not detect it.

\subsection{Absence of promoters on paths with at most four sites}
\label{app:path-small-no-go}\label{app:path-small}

On a singleton configuration, looplessness gives
\begin{equation}
 (Pe_k)^{\mathsf T}Le_k=-\sum_{i\ne k}P_{ik}\gamma_{ik}<0.
 \label{app:path-singleton-no-go}
\end{equation}
The same value holds on its complement. Every mixed configuration with
at most three sites has one of these forms, while
$n^{\mathsf T}Ln>0$ on every mixed state of a connected graph.
Thus the cost and benefit parts of the first-order drift both oppose
cooperation for $N\leq3$, for any initial placement distribution.

For the four-site path, normalize the consecutive weights to $(a,b,1)$,
where $a,b>0$. Let $J_\ell$ denote the benefit coefficient when the
initial singleton is at site $\ell$, numbering the sites $0,1,2,3$ along
the path. Its pair source is
$4(\mathbf1_{\{i=\ell\}}+\mathbf1_{\{j=\ell\}})$ in place of the
constant two in~Eq.~\eqref{app:path-pair-equations}. Substitution of the path
kernel into that system and contraction with $(-\mathsf B_{ij})$ gives
\begin{equation}
 J_0=-\frac{4p_0(a,b)}
 {(a+b)(b+1)(a+b+1)(ab+2a+b)},
 \label{app:path-p4-endpoint}
\end{equation}
where
\begin{equation}
 \begin{split}
 p_0={}&a^2b^2+3a^2b+a^2+2ab^3+7ab^2+4ab+a\\
      &+b^4+3b^3+b^2.
 \end{split}
 \label{app:path-p4-endpoint-polynomial}
\end{equation}
For the adjacent internal site,
\begin{equation}
 J_1=-\frac{4p_1(a,b)}
 {(a+b)^2(b+1)(a+b+1)(ab+2a+b)c_1(a,b)},
 \label{app:path-p4-internal}
\end{equation}
with
\begin{equation}
 \begin{split}
 c_1={}&15a^2b^2+34a^2b+16a^2+31ab^3+72ab^2+34ab\\
      &+12b^4+31b^3+15b^2,
 \end{split}
 \label{app:path-p4-denominator-polynomial}
\end{equation}
and
\begin{align}
 p_1={}&a^5(16b^4+82b^3+135b^2+82b+16)\notag\\
 &+a^4\bigl(76b^5+415b^4+755b^3\notag\\
 &\qquad+551b^2+164b+16\bigr)\notag\\
 &+a^3\bigl(128b^6+752b^5+1503b^4\notag\\
 &\qquad+1253b^3+461b^2+66b\bigr)\notag\\
 &+a^2\bigl(88b^7+582b^6+1304b^5\notag\\
 &\qquad+1227b^4+525b^3+93b^2\bigr)\notag\\
 &+a\bigl(20b^8+179b^7+483b^6\notag\\
 &\qquad+519b^5+250b^4+51b^3\bigr)\notag\\
 &+16b^8+63b^7+79b^6+41b^5+9b^4.
 \label{app:path-p4-internal-polynomial}
\end{align}
Every coefficient is positive. Reversing the path and rescaling all
weights gives
\begin{equation}
 J_{3-\ell}(a,b,1)=a^{-1}J_\ell(a^{-1},b/a,1),
 \qquad \ell=0,1.
 \label{app:path-p4-reversal}
\end{equation}
The factor $a^{-1}$ occurs because $\gamma$ and $\mathsf B$ scale inversely with a
common weight multiplication, whereas the pair solution depends only
on $P$. Hence all four $J_\ell$ are negative. Linearity of the pair
equations in the placement distribution proves $J_\mu<0$ for every placement distribution,
including uniform placement. This completes the small-path no-go.

\subsection{Five-site response and its polynomial structure}
\label{app:path-p5-global}\label{app:path-five}

The proof separates a positive cost contraction from a signed benefit
polynomial. Rooted forests control the former; outer-degree decomposition
controls the latter and determines an exact envelope. An independent
direct identity for the barrier follows the graded proof.

Every ordered positive five-site path can be scaled to consecutive
weights $(A,1,q,Q)$ with $A,q,Q>0$. No ordering assumption enters this
parameterization. Its kernel has nonzero entries
\begin{equation}
 \begin{gathered}
 P_{12}=P_{54}=1,\\
 P_{21}=\frac{A}{A+1},\qquad P_{23}=\frac{1}{A+1},\\
 P_{32}=\frac{1}{q+1},\qquad P_{34}=\frac{q}{q+1},\\
 P_{43}=\frac{q}{Q+q},\qquad P_{45}=\frac{Q}{Q+q}.
 \end{gathered}
 \label{app:path-p5-kernel}
\end{equation}
\subsection{The five-site cost numerator has a positive forest proof}
\label{v4:alg:p5-forest}

The forest expansion proves coefficient positivity by counting outgoing
rows. Normalize the five-site weights to $(A,1,q,Q)$ and let $M$ be
the ten-state killed ancestral-pair matrix in the clock with rates $P$.
The pair source is $M\mathbf D=2\,\mathbf1$. Put
\begin{equation}
 \begin{gathered}
 T=(A+1)(q+1)(Q+q),\quad
 \mathscr D=\frac{T^8\det M}{q},\\
 U=\mathscr D K,\quad V=\mathscr D J.
 \end{gathered}
 \label{v4:alg:p5-normalization}
\end{equation}
This normalization agrees with the reduced five-site fraction. The
additive-compound representation in
Appendix~\ref{app:v3-neutral-cas} also explains its denominator. The
one-lineage rates here are $P_{ki}$,
five times the physical attempted-event rates $P_{ki}/5$.
If their positive relaxation rates are $\lambda_1,\ldots,\lambda_4$,
the pair spectrum consists of these four
rates and the six sums $\lambda_a+\lambda_b$. Therefore
\begin{equation}
 \det M=e_4(e_1e_2e_3-e_3^2-e_1^2e_4)
        =e_4(5e_2e_3-e_3^2-25e_4)>0,
 \label{v4:alg:p5-spectral-denominator}
\end{equation}
where $e_k$ are elementary symmetric polynomials in the four rates and
$e_1=5$, the trace of the positive one-particle relaxation matrix.
The product of pair sums has a useful symmetric-function form.
For $n\ge2$ positive rates and the staircase partition
$\vartheta_n=(n-1,n-2,\ldots,1,0)$, the Schur polynomial is
\begin{equation}
 \begin{aligned}
 s_{\vartheta_n}(\lambda_1,\ldots,\lambda_n)
 &=\prod_{a<b}(\lambda_a+\lambda_b)\\
 &=\det[e_{n-2i+j}]_{i,j=1}^{n-1},
 \end{aligned}
 \label{app:path-staircase-schur}
\end{equation}
where $e_0=1$ and $e_k=0$ outside $0\le k\le n$.
The first equality follows by dividing the Vandermonde determinant
in $\lambda^2$ by that in $\lambda$; the second is the dual
Jacobi--Trudi identity~\cite[Chap.~I, Sec.~3]{Macdonald1995}.
Polynomial continuity includes repeated rates. The pair determinant
on a path is consequently $e_n s_{\vartheta_n}$ with $n=N-1$,
and Eq.~\eqref{v4:alg:p5-spectral-denominator} is its $n=4$ case.
Its positivity follows from the product, while the determinant
computes its value directly from the one-lineage characteristic
coefficients without solving for individual rates.

\begin{proposition}[Positive coefficients of the five-site cost polynomial]
\label{v4:alg:p5-cost-positive}
With the normalization in Eq.~\eqref{v4:alg:p5-normalization}, $U$ is a polynomial
with nonnegative coefficients in $(A,q,Q)$. Its total degree in the two
outer variables $(A,Q)$ is exactly fifteen, and its separate degrees in
$A$ and $Q$ are at most eight.
\end{proposition}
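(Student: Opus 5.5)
We will write $K$ as a Cramer's-rule contraction of the killed pair system and expand the resulting determinants by the all-minors matrix-tree theorem. Concretely, $K=\sum_{i<j}\gamma_{ij}D_{ij}$ with $\mathbf D=M^{-1}(2\,\mathbf 1)$, so
\[
\det M\,K=2\sum_{e}\gamma_e\sum_{p}\operatorname{adj}(M)_{e p},
\]
where $e$ runs over the four path edges and $p$ over the ten pairs.

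The matrix $M$ is minus the generator of the pair chain, killed on meeting. It has nonpositive off-diagonal entries $-P_{ki}$ and positive diagonal entries $T_i+T_j$. Each row sum equals the killing rate, which is nonnegative: it is positive exactly on adjacent pairs, where a jump onto the partner coalesces. By the all-minors matrix-tree theorem applied to the graph with an added cemetery vertex, every cofactor $\operatorname{adj}(M)_{ep}$ and $\det M$ is a sum, with coefficient $+1$, of products of transition rates over rooted spanning forests. Each forest has its roots at the cemetery, together with a prescribed path from $p$ to $e$ in the cofactor case. Hence $\det M\,K$ is a positive combination of products of the rates $P_{ki}$ and $\gamma_e$.

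Next we clear denominators. Every $P_{ki}$ is one of the fractions in Eq.~\eqref{app:path-p5-kernel}, with denominator $A+1$, $q+1$ or $Q+q$, or else it equals $1$. Also $\gamma_e=w_e/(s_is_j)$, whose denominator is a product of two strengths. A forest on the ten pair states uses at most nine rows, but a cofactor forest uses only nine or fewer. The key counting step is as follows. For each of the three strengths $s_2=A+1$, $s_3=q+1$, $s_4=Q+q$, bound the number of chosen outgoing rows whose rates involve that denominator. A row indexed by pair $(i,j)$ picks a rate from lineage $i$ or $j$, so we charge the denominator of the moving lineage. Adding the two $\gamma$ denominators, we then show that every forest term times $T^8$ is a polynomial, and that after division by $q$ it still has nonnegative coefficients. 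The numerators $A$, $q$, $Q$, $1$ are monomials, and the factors of $T$ left over after cancellation are positive-coefficient polynomials. The division by $q$ must be checked by showing that every forest term carries a factor $q$. We plan to verify this by noting that any spanning forest reaching the cemetery must use a rate from site $3$ or into site $3$ (either $P_{34}=q/(q+1)$ or the $\gamma$ numerator), and if that fails we will compare directly with the exact rational form.

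Degrees follow from the same bookkeeping. Each factor $A+1$ and $Q+q$ contributes outer degree at most one, and numerators $A$ and $Q$ appear in $P_{21}$ and $P_{45}$. Cancellation against $T^8$ gives separate degree at most $8$ in $A$ and in $Q$, hence total outer degree at most $16$. We then refine the count to $15$: the $\gamma_e$ prefactor removes one outer factor because $\gamma_{12}=A/(A(A+1))$ and $\gamma_{45}=Q/((Q+q)Q)$, so each outer edge cancels its own numerator. Exactness of degree $15$ comes from exhibiting one top-degree monomial with a strictly positive coefficient. Since all coefficients are nonnegative there is no cancellation, so it suffices to find a single forest that attains the bound, for example the one using $P_{21}$ and $P_{45}$ on every possible row. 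The main obstacle is this row-counting that guarantees polynomiality after multiplication by $T^8$ and division by $q$. It is where the normalization in Eq.~\eqref{v4:alg:p5-normalization} must match exactly, and we will cross-check it against the spectral form in Eq.~\eqref{v4:alg:p5-spectral-denominator}.
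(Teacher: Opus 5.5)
Your route is the paper's: expand $qU=2T^8\sum_e\gamma_e[\operatorname{adj}(M)\mathbf 1]_e$ into positive forest terms, clear denominators term by term, extract a factor $q$, and read off degrees using the absence of cancellation. However, the combinatorial facts that turn this into a proof are missing or misstated.

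\textbf{Polynomiality.} The only bound you state is that a cofactor forest has nine edges. That allows nine powers of a strength from the rates plus one from $\gamma_e$, and $T^8$ does not clear ten. The needed observation is that $P_{ki}=w_{ki}/s_k$ carries the strength of the target site $k$, not of the site the lineage currently occupies. So it can leave only a pair containing a neighbor of $k$. Each $k\in\{2,3,4\}$ has two neighbors, so exactly $\binom52-\binom32=7$ of the ten rows can carry $s_k$. A forest uses at most one edge per row, once each two-channel coalescence rate is split into single rates. Each $\gamma_e$ adds at most one power of each strength, so the exponent is at most $7+1=8$.

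\textbf{Division by $q$.} Your criterion ``a rate from site $3$ or into site $3$'' is wrong. $P_{32}=1/(q+1)$ enters site $3$ and $P_{23}=1/(A+1)$ leaves it, and neither carries $q$. Only the two crossings of the edge $3$--$4$ carry $q$: $P_{34}=q/(q+1)$ and $P_{43}=q/(Q+q)$, and you omit the second. You also give no reason why a forest must use one. The reason is that at $q=0$ the six pairs with one lineage in $\{1,2,3\}$ and one in $\{4,5\}$ form a closed class with no route to the cemetery. Take a cost row $12$, $23$ or $45$. If the transient root lies outside this class, acyclicity forces some edge to exit the class. If it lies inside, the within-component cost row must enter the class. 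Either way the forest contains a crossing. Row $34$ takes its $q$ from $\gamma_{34}$. Deferring to ``the exact rational form'' abandons the argument rather than completing it.

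\textbf{Exact degree fifteen.} The relevant condition is not the use of $P_{21}$ and $P_{45}$. It is the avoidance of the only rates with negative outer degree, $P_{23}$ and $P_{43}$. Under the remaining moves $1\leftrightarrow2\to3\leftarrow4\leftrightarrow5$ every pair reaches the cemetery. Hence a forest rooted at $\{0,e\}$ built from these moves exists. It uses $P_{34}$, so it survives division by $q$.

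Your upper-bound bookkeeping is sound, and more direct than the paper's asymptotic argument. Every rate has nonpositive degree in $A$ and in $Q$. Every $\gamma_e$ has outer degree $-1$, including $\gamma_{23}$ and $\gamma_{34}$, not only the two you cite.
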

\begin{proof}
The cost contraction and the pair source give
\begin{equation}
 qU=2T^8\sum_{i=1}^4\gamma_{i,i+1}
          [\operatorname{adj}(M)\mathbf1]_{i,i+1}.
 \label{v4:alg:p5-cost-forest}
\end{equation}
For a killed row generator with cemetery $0$, the cofactor
$\operatorname{adj}(M)_{ij}$ sums directed forests rooted at $0$ and
$j$ in which $i$ reaches $j$. Each edge contributes its transition~
rate~\cite{AGNamGunawardena2025}.
A transition with denominator $s_k$, $k=2,3,4$, can leave only a pair
containing a neighbor of source vertex $k$. Each source has two neighbors,
and exactly $\binom52-\binom32=7$ pair rows contain at least one of them.
A forest selects at most one outgoing edge from each row. Its adjugate
monomial therefore has denominator exponent at most seven in each of
$A+1,q+1,Q+q$. Coalescence edges ending at the cemetery obey the same
count, and multiple channels to the cemetery are expanded before counting.

The four simplified conductances are
\begin{equation}
 \begin{aligned}
 \gamma_{12}&=\frac1{A+1},&
 \gamma_{23}&=\frac1{(A+1)(q+1)},\\
 \gamma_{34}&=\frac q{(q+1)(Q+q)},&
 \gamma_{45}&=\frac1{Q+q}.
 \end{aligned}
 \label{v4:alg:p5-conductances}
\end{equation}
They add at most one denominator power. Consequently $T^8$ clears every
forest term individually, leaving products of positive monomials and
nonnegative powers of the three positive linear strength polynomials.

The division by $q$ is also termwise. At $q=0$, the physical graph splits
into components of sizes three and two. The six cross-component pair
states form a closed class with no route to the cemetery. For a cost row
$12,23,$ or $45$, a contributing two-root forest requires at least one
transition across the deleted edge. If its transient root were in the
closed cross-component class, the cost row could not reach it; if the
root were outside that class, the class would contain no root and would
force a directed cycle. Thus every such forest contains a numerator
factor $q$. The remaining cost row $34$ already contains that factor in
$\gamma_{34}$. Every cleared term is divisible by $q$, proving polynomiality
and nonnegative coefficients without cancellation.

Under $(A,Q)\mapsto t(A,Q)$, $t\to\infty$, the limiting ancestral chain
has unique absorbing site three, with arrows
$1\leftrightarrow2\to3\leftarrow4\leftrightarrow5$. Every lineage reaches
three in finite mean time, so the limiting killed pair matrix is invertible
and $\mathbf D$ has a finite strictly positive limit. Each $\gamma_{i,i+1}$ is
$t^{-1}g_i+O(t^{-2})$, $g_i>0$, giving
$K=t^{-1}k_\infty+O(t^{-2})$, $k_\infty>0$. Since
$T^8\sim t^{16}[AQ(q+1)]^8$ and $\det M$ tends to a positive constant,
$U\sim C t^{15}$ with $C>0$. Nonnegative polynomial coefficients then
fix the total outer degree to fifteen. At $A\to\infty$ with $q,Q$ fixed,
the ancestral chain instead has unique closed class $\{3,4,5\}$, so
$K$ and the pair inverse remain bounded while $\mathscr D=O(A^8)$.
This proves $\deg_AU\le8$; the $Q$ argument is its reversal.
\end{proof}

The forest expansion establishes the nonnegative cost coefficients.
The signed benefit contraction is controlled by the finite polynomial
lemma below.

\subsection{Outer-degree monotonicity and the exact inner-ratio envelope}
\label{v4:p5}
The cost-positivity argument leaves the benefit grades to be determined.
Their signs will identify losses that disappear when the outer contacts
are strengthened together.

Retain $T,\mathscr D,U,V$ and the ten-pair matrix $M$ in the
normalization of Eq.~\eqref{v4:alg:p5-normalization}.
\label{v4:eq:p5-normalization}
$U,V$ are integer polynomials, each with 736 nonzero monomials, individual
degrees at most eight in $A,Q$, and total outer degree fifteen.
We now determine the signed outer-degree structure of $V$.

Grade by outer contact degree, $U=\sum_{m=0}^{15}U_m$ and
$V=\sum_{m=0}^{15}V_m$. The exact signed-grade lemma is
\begin{equation}\label{v4:eq:p5-sign-lemma}
 U_m\ge0,\qquad V_m<0\quad(0\le m<15).
\end{equation}
For $m\le13$ the benefit assertion is an exact finite coefficient-sign
lemma. It is verified by constructing the kernel in
Eq.~\eqref{app:path-p5-kernel}, solving $M\mathbf D=2\,\mathbf1$ over
$\mathbb Q(A,q,Q)$, forming $V=(T^8\det M/q)J$, and extracting
$[\ell^m]V(\ell A,q,\ell Q)$. Each resulting coefficient polynomial
has strictly negative nonzero coefficients. This specifies a complete
finite polynomial calculation; it is the computer-assisted step in the
proof, distinct from the forest argument for $U$.
The exceptional next grade has the explicit form
\begin{align}
 -V_{14}&=A^6Q^6\bigl\{3(1+q^7)(A-Q)^2\notag \\
 &\qquad+A^2P_7(q)+Q^2q^7P_7(q^{-1})\notag \\
 &\qquad+AQ C_7(q)\bigr\}, \notag\\
 P_7(q)={}&129+829q+4967q^2+17415q^3 \\
 &+30466q^4+26838q^5+11328q^6+1797q^7, \notag \\
 C_7(q)={}&4671(q+q^6)+26351(q^2+q^5)\notag\\
 &+56009(q^3+q^4). \notag 
\end{align}
Hence two adverse expanded coefficients are absorbed by a square. The
top terms are
\begin{align}
 U_{15}&=2A^7Q^7(q+1)^2(11q^2+28q+11)[Aa(q)+Qb(q)], \notag \\
 V_{15}&=A^7Q^7(2q+3)(3q+2)[QD(q)-Af(q)],
\end{align}
where
\begin{align}
 D(q)&=D_5(q)=11q^4-16q^3-124q^2-128q-33, \notag \\
 f(q)&=33q^4+128q^3+124q^2+16q-11,\\
 a(q)&=20q^2+44q+21,\quad b(q)=21q^2+44q+20. \notag
\end{align}
The symbols $a,b$ in this factorization denote polynomials, not game payoffs.

\begin{proposition}[Outer-ray monotonicity]\label{v4:prop:p5-ray}
At fixed $A,q,Q>0$, set
$u(\lambda)=\lambda^{-15}U(\lambda A,q,\lambda Q)$ and
$v(\lambda)=\lambda^{-15}V(\lambda A,q,\lambda Q)$.
If $V_{15}\le0$, no point on the ray promotes. If $V_{15}>0$, there is
exactly one promoter onset, and $R$ decreases strictly from infinity
to $U_{15}/V_{15}$ above it.
\end{proposition}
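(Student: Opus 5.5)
Since $R=U/V$ is invariant under a common positive rescaling of $(U,V)$, on the ray we have $R(\lambda A,q,\lambda Q)=u(\lambda)/v(\lambda)$ wherever $v(\lambda)>0$. The promoter condition is $J>0$, i.e.\ $V>0$, i.e.\ $v(\lambda)>0$. The plan is to read everything off the graded expansions
\begin{equation*}
 u(\lambda)=\sum_{m=0}^{15}\lambda^{m-15}U_m,\qquad
 v(\lambda)=V_{15}+\sum_{m=0}^{14}\lambda^{m-15}V_m,
\end{equation*}
where $U_m,V_m$ are evaluated at the fixed point $(A,q,Q)$. We then use the signed-grade lemma in Eq.~\eqref{v4:eq:p5-sign-lemma}.

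\emph{Step 1 (monotonicity of the pieces).} Every $U_m\ge0$, and each factor $\lambda^{m-15}$ with $m<15$ is strictly decreasing in $\lambda>0$. Hence $u$ is nonincreasing, and $u>0$ because $U>0$. For $V$, every $V_m<0$ with $m<15$, so each term $\lambda^{m-15}V_m$ is strictly increasing and tends to $0$ as $\lambda\to\infty$. Some $V_m$ with $m<15$ is nonzero, since for instance $V_{14}\ne0$ by its explicit positive form. Therefore $v$ is strictly increasing, with $v(\lambda)\to-\infty$ as $\lambda\downarrow0$ and $v(\lambda)\to V_{15}$ as $\lambda\to\infty$. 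Strict negativity here means $V_m$ is a nonzero polynomial with negative coefficients, hence negative at positive arguments.

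\emph{Step 2 (case split on $V_{15}$).} If $V_{15}\le0$, then strict increase gives $v(\lambda)<\lim_{\lambda\to\infty}v=V_{15}\le0$ for all $\lambda$, so $J<0$ everywhere on the ray and no point promotes. If $V_{15}>0$, continuity and strict monotonicity give a unique $\lambda_0$ with $v(\lambda_0)=0$. The promoter set on the ray is then exactly $(\lambda_0,\infty)$, which is the unique onset.

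\emph{Step 3 (behaviour of $R$).} On $(\lambda_0,\infty)$ the ratio $R=u/v$ is a positive nonincreasing function divided by a positive strictly increasing one, so it is strictly decreasing. As $\lambda\downarrow\lambda_0$ we have $v\downarrow0$ while $u(\lambda_0)>0$, so $R\to+\infty$. As $\lambda\to\infty$, $u\to U_{15}$ and $v\to V_{15}$, so $R\to U_{15}/V_{15}$. The limit is approached strictly from above, because $R$ is strictly decreasing.

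The only substantive input is the sign lemma, which we take as given; the main point needing care is confirming that the lower grades of $V$ are not identically zero, which $V_{14}\neq0$ settles.
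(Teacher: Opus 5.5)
Your proof is correct and follows essentially the paper's argument: the grade signs give $u>0$, $u'\le0$ and $v'>0$, with $v\to-\infty$ as $\lambda\downarrow0$ and $v\to V_{15}$ at infinity, so $R=u/v$ strictly decreases on $\{v>0\}$. The only differences are cosmetic: you use $V_{14}$ rather than the constant grade $V_0$ for the divergence at zero, and you compare the ratios directly instead of computing $R'=(u'v-uv')/v^2$.
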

\begin{proof}
The grade signs give $u>0,u'\le0,v'>0$. The constant negative benefit
grade makes $v\to-\infty$ at zero, while $v\to V_{15}$ at infinity.
On $v>0$, $R'=(u'v-uv')/v^2<0$. All statements concern finite positive
weights before taking the endpoint limit.
\end{proof}

Define $H_4(q)=913q^4+4164q^3+6656q^2+4304q+913$ and
$H_4^\vee=q^4H_4(q^{-1})$. Exact multiplication gives
\begin{equation}
 U_{15}-7V_{15}=A^7Q^7[Aq(2q+3)H_4^\vee+Q(3q+2)H_4]>0.
\end{equation}
Together with the lower-grade signs this proves $U-7V>0$ throughout
the positive orthant. The only remaining finite sign input is
Eq.~\eqref{v4:eq:p5-sign-lemma}.

\begin{proposition}[Exact fixed-inner-ratio envelope]\label{v4:prop:p5-envelope}\label{app:v4-p5-envelope}
Orient by reversal and rescaling so $q\ge1$. Let $q_c$ be the unique
positive root of $D$, $q_c=4.536774892089\ldots$. There exist positive
outer weights promoting cooperation exactly when $q>q_c$. For these $q$,
\begin{equation}\label{v4:eq:p5-envelope}
 \begin{aligned}
 \mathcal R_5(q)&=\inf_{A,Q>0,J>0}R\\
 &=\frac{2(q+1)^2(11q^2+28q+11)(7q+10)}{(2q+3)D(q)}.
 \end{aligned}
\end{equation}
The infimum is not attained; $(A,Q)=(L,L^2)$ approaches it as $L\to\infty$.
It decreases strictly from infinity to seven.
\end{proposition}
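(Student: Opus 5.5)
The plan is to work entirely on a single outer ray and reduce the envelope to the top-grade quotient, using Proposition~\ref{v4:prop:p5-ray}. Fix $q\ge1$. Along the ray $(\lambda A,q,\lambda Q)$ with fixed ratio $x=A/Q>0$, the threshold decreases strictly toward $U_{15}/V_{15}$ whenever $V_{15}>0$, and there are no promoters on the ray when $V_{15}\le0$. This gives
\begin{equation*}
 \inf_{A,Q>0,J>0}R=\inf_{x:\,V_{15}>0}\frac{U_{15}}{V_{15}}.
\end{equation*}
The infimum over the ray is approached but never attained at finite $\lambda$.

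Next I substitute the explicit factorizations. After removing the common factor $A^7Q^7$ and dividing by $Q$, the top-grade quotient becomes
\begin{equation*}
 g(x)=\frac{2(q+1)^2(11q^2+28q+11)\,[x\,a(q)+b(q)]}{(2q+3)(3q+2)\,[D(q)-x f(q)]}.
\end{equation*}
This is fractional-linear in $x$. The condition $V_{15}>0$ means $D(q)>xf(q)$.

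Existence of promoters then reduces to a sign check. For $q\ge1$ one has $f(q)>0$ (indeed $f(1)=290$ and all higher coefficients are positive). So an admissible $x>0$ exists iff $D(q)>0$. Using Descartes' rule, $D$ has exactly one positive root $q_c$, and $D(1)<0$, so $q_c>1$. Hence promoters exist exactly for $q>q_c$.

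For such $q$, I show $g$ is increasing on $(0,D/f)$. The derivative numerator of $(xa+b)/(D-xf)$ is $aD+bf$, and I expect its positivity to follow from a direct expansion. Even without that, both the numerator factor $xa+b$ and the reciprocal factor $1/(D-xf)$ are positive and increasing in $x$, so $g$ is increasing. The infimum is therefore the boundary value at $x\to0$:
\begin{equation*}
 \frac{2(q+1)^2(11q^2+28q+11)\,b(q)}{(2q+3)(3q+2)\,D(q)}.
\end{equation*}
Factoring $b(q)=21q^2+44q+20=(3q+2)(7q+10)$ cancels $(3q+2)$ and gives the stated formula. The infimum is not attained, because $x>0$ is required and each ray is also strict.

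To justify the approximating family $(L,L^2)$, I will check that along it $R$ tends to this boundary value. In grade terms this is the statement that $x=1/L\to0$ while the outer scale diverges. The lower grades $U_m,V_m$ with $m<15$ carry relative size $O(L^{-1})$ against the top grade, because their monomials have individual degree at most eight in $A$ and $Q$. I will have to confirm that the dominant top monomials $A^7Q^8$ outweigh the subleading terms. This is the main obstacle: showing that lower-grade monomials such as $A^6Q^8$ scale like $L^{22}$, below $L^{23}$, which follows from the degree-eight cap in $Q$. If this causes trouble, I can fall back on a double limit with $A=L$, $Q=L^k$ and $k$ large.

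Finally, I treat the monotonicity of $\mathcal R_5$ in $q$ and its endpoint behavior. Near $q_c^+$ we have $D\to0^+$ while the numerator stays positive, so $\mathcal R_5\to\infty$. The leading degrees give $\mathcal R_5\to 2\cdot11\cdot7/(2\cdot11)=7$ as $q\to\infty$. For strict decrease, I plan to compute $\frac{d}{dq}\log\mathcal R_5$ and clear denominators to get a polynomial, then verify it is negative on $q>q_c$. I would do this by substituting $q=q_c+s$, or more simply $q=4.5+s$ together with a separate check near $q_c$, and confirming all coefficients have a single sign. This is a routine exact-arithmetic check.
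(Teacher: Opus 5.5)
Your proposal is correct and follows the paper's proof essentially step for step. The shared steps are: reduction to the top-grade outer-ray quotient via Proposition~\ref{v4:prop:p5-ray}; the sign analysis of $f$ and $D$ with Descartes' rule; monotonicity of the fractional-linear quotient in $A/Q$ and its $A/Q\to0$ boundary value, where your factorization $b=(3q+2)(7q+10)$ makes the cancellation explicit; the unique leading monomial $A^7Q^8$ along $(L,L^2)$ from the degree-eight cap; and direct differentiation in $q$. Two small simplifications: your positivity argument for the $x$-monotonicity is simpler than the paper's factorization of $aD+bf$, and your final check needs no shift, because the paper's derivative numerator is $-2(q+1)S_7(q)$ with every coefficient of $S_7$ positive.
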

\begin{proof}
$f(q)>0$ for $q\ge1$. The top benefit is positive precisely if
$D>0$ and $Q/A>f/D$. Descartes' rule gives one positive zero of $D$,
and $D(1)<0$. 
$f-3D=176q^3+496q^2+400q+88>0$, so every finite promoter has $Q>3A$.
Writing $\theta=A/Q$, the outer-ray limit is
\begin{equation}
 R_\infty(\theta,q)=
 \frac{2(q+1)^2(11q^2+28q+11)[\theta a(q)+b(q)]}
 {(2q+3)(3q+2)[D(q)-\theta f(q)]}.
\end{equation}
Its derivative in $\theta$ is positive on the promoter domain, since
$aD+bf=(q^2-1)(913q^4+4304q^3+6856q^2+4304q+913)>0$.
The $\theta\downarrow0$ limit gives Eq.~\eqref{v4:eq:p5-envelope}.
The support bounds make $A^7Q^8=L^{23}$ the unique leading outer monomial
under $(A,Q)=(L,L^2)$; its benefit coefficient is positive for $q>q_c$.
This proves attainability as an infimum without an uncontrolled exchange
of two limits. Finally,
\begin{equation}
 \begin{gathered}
 \mathcal R_5(q)-7=\frac{H_4(q)}{(2q+3)D(q)}>0,
 \\
 \mathcal R_5(q)=7+\frac{83}{2q}+O(q^{-2}),
 \end{gathered}
\end{equation}
and direct differentiation gives
\begin{equation}
 \mathcal R_5'(q)=\frac{-2(q+1)S_7(q)}{(2q+3)^2D(q)^2}<0,
\end{equation}
where $S_7$ has coefficients $10043$, $81565$, $275289$, $494107$,
$501247$, $281169$, $77957$, $7623$ in descending powers.
All are positive.
\end{proof}

At contrast cap $H$, every promoter can be improved along its outer ray
until $Q=H$. If initially $A<1$, first increase both outer weights until
$A=1$; the contrast is $\max\{q/(\lambda A),Q/A\}$ and cannot increase.
Thereafter the minimum edge is one and increasing $Q$ up to $H$ respects
the cap. Every finite-budget optimizer therefore has
\begin{equation}
 \begin{gathered}
 (A,1,q,H),\qquad q_c<q\le H,\\
 1\le A<H D(q)/f(q),\qquad V(A,q,H)>0.
 \end{gathered}
\end{equation}
The remaining two-variable finite-budget optimization is open. The sharper
outer-ratio condition characterizes promoter rays, not all finite points
on those rays. The threshold and time objectives at arbitrary budgets need separate active-constraint analyses.

\subsubsection*{An independent direct identity}
The same ten-pair system gives
\begin{equation}
 K-7J=\frac{\mathscr N(A,q,Q)}{\mathscr D(A,q,Q)}>0.
 \label{app:path-p5-certificate}
\end{equation}
Both numerator and denominator have strictly positive nonzero
coefficients. The numerator has 734 terms, total degree 21 and
degree at most $(8,14,8)$ separately in $(A,q,Q)$; its smallest nonzero
coefficient is 612. The denominator has 688 terms, total degree 22,
the same separate degree bounds, and smallest nonzero coefficient 66.
These coefficient signs hold on the entire positive orthant.

For exact reconstruction use the
lexicographic pair order
$(12),(13),(14),(15),(23),(24),(25),(34),(35),(45)$,
and construct $A_{\rm pair}$ from~Eq.~\eqref{app:path-pair-equations}.
Set
\begin{equation}
 \begin{gathered}
 T=(A+1)(q+1)(Q+q),\qquad \widehat A=T A_{\rm pair},\\
 \widehat r=2T\mathbf1,\qquad \Delta_{\rm pair}=\det\widehat A,\qquad
 Y=\operatorname{adj}(\widehat A)\widehat r.
 \end{gathered}
 \label{app:path-p5-polynomial-system}
\end{equation}
Every entry of $\widehat A$ and $\widehat r$ is polynomial, and
\begin{equation}
 \widehat AY=\Delta_{\rm pair}\widehat r,\qquad \mathbf D=Y/\Delta_{\rm pair}.
 \label{app:path-p5-reconstruction-residual}
\end{equation}
The nonsingular killed-pair matrix guarantees $\Delta_{\rm pair}\ne0$ throughout
the positive domain. Define the polynomial row
\begin{equation}
 \widehat\ell=2T^2(\gamma_{ij}+7\mathsf B_{ij})_{i<j}.
 \label{app:path-p5-contraction-row}
\end{equation}
The common denominator $2T^2$ clears this row for the kernel in
Eq.~\eqref{app:path-p5-kernel}. Consequently
\begin{equation}
 K-7J=\frac{\widehat\ell Y}{2T^2\Delta_{\rm pair}}.
 \label{app:path-p5-scalar-reconstruction}
\end{equation}
Cancelling the polynomial greatest common divisor and choosing the
common sign and scalar normalization gives the positive numerator
and denominator in~Eq.~\eqref{app:path-p5-certificate}. A fraction-free
solve may replace the adjugate in Eq.~\eqref{app:path-p5-scalar-reconstruction}
without altering the identity.

Equations \eqref{app:path-p5-polynomial-system}--\eqref{app:path-p5-scalar-reconstruction}
determine the reduced polynomials and their coefficient signs by exact
arithmetic; the adjugate residual checks the reconstruction before contraction.
This independently proves $R>7$ whenever $J>0$, including weight ties.
Reversal sends $(A,q,Q)$ to $(Q/q,1/q,A/q)$ within the same positive domain.

\subsection{Five-site sharpness and the establishment penalty}
\label{app:path-p5-sharpness}

The lower bound is sharp. To construct a sequence approaching it, write
\begin{equation}
 u=q^{-1},\qquad \eta=q/A,\qquad v=q/Q.
 \label{app:path-p5-small-coordinates}
\end{equation}
First let $u,v\to0$ with $\eta>0$ fixed. The left pair $\{1,2\}$ aligns
by mutual copying, whereas protected site 3 aligns its follower pair
$\{4,5\}$. The fast closed configurations are
$(\ell,\ell,r,r,r)$, with $\ell,r\in\{0,1\}$.
Unlike a mutant at protected site 3, a mutant in the left dimer must
win a selective establishment contest. On a discordant isolated dimer,
the cooperator has payoff $(-c)$ and the defector payoff $b$, giving
\begin{equation}
 \begin{gathered}
 p_C=\frac{F_-}{F_-+F_b},\qquad p_D=1-p_C,\\
 F_-=\phi(-\delta c),\qquad F_b=\phi(\delta b).
 \end{gathered}
 \label{app:path-p5-establishment}
\end{equation}
The global fecundity denominator cancels in this competition probability.
It must still be retained in residence times.

Let $F_+=\phi(\delta(b-c))$ and $F_0=\phi(0)$. From the mixed state with
a cooperative left dimer, successful cooperative takeover of site 3
has leading rate $\eta uF_+/(2F_++3F_0)$. A defective copy from site 3
enters the dimer at rate $uF_0/(2F_++3F_0)$ and establishes only with
probability $p_D$. Their competition therefore gives
\begin{equation}
 h_L=\frac{\eta F_+}{\eta F_++F_0p_D}.
 \label{app:path-p5-left-committor}
\end{equation}
When the right module is cooperative and the dimer defective, the
corresponding cooperative success probability is
\begin{equation}
 h_R=\frac{F_+p_C}{F_+p_C+\eta F_0}.
 \label{app:path-p5-right-committor}
\end{equation}
Unsuccessful establishment produces a fast excursion back to the same
mixed state. Uniform singleton preparation consequently gives
\begin{align}
 \rho_C^U&\longrightarrow\frac25p_Ch_L+\frac15h_R,\notag\\
 \rho_D^U&\longrightarrow\frac25p_D(1-h_R)+\frac15(1-h_L).
 \label{app:path-p5-terminal-law}
\end{align}
The establishment factors enter both initial preparation and repeated
cross-domain attempts. Omitting either occurrence changes the result.

At neutrality, $p_C=p_D=1/2$, $p_C'=-\chi_\phi(b+c)/4$, and
$(F_+/F_0)'=\chi_\phi(b-c)$. Differentiating the controlled limit gives
\begin{equation}
 S_U\longrightarrow
 \frac{2\chi_\phi\eta}{5(1+2\eta)^2}
 \bigl[(1-2\eta)b-(7+2\eta)c\bigr].
 \label{app:path-p5-response-limit}
\end{equation}
For $0<\eta<1/2$ its benefit coefficient is positive and
\begin{equation}
 R_{\lim}(\eta)=\frac{7+2\eta}{1-2\eta}.
 \label{app:path-p5-fixed-eta-threshold}
\end{equation}
At $\eta=1/2$ the leading benefit response vanishes, whereas the cost
response remains negative. For fixed $\eta>1/2$ both contributions
oppose cooperation in this limiting sector. These sector statements
do not replace the global polynomial identity.

Sending $\eta\to0$ requires additional control because the amplitude in
Eq.~\eqref{app:path-p5-response-limit} also tends to zero. The nonzero exact
kernel entries in these variables are
\begin{equation}
 \begin{gathered}
 P_{12}=P_{54}=1,\\
 P_{21}=\frac1{1+\eta u},\qquad P_{23}=\frac{\eta u}{1+\eta u},\\
 P_{32}=\frac{u}{1+u},\qquad P_{34}=\frac1{1+u},\\
 P_{43}=\frac{v}{1+v},\qquad P_{45}=\frac1{1+v}.
 \end{gathered}
 \label{app:path-p5-regular-kernel}
\end{equation}
At $u=0$ the four internally homogeneous module configurations are
closed, for every nearby $\eta,v$. The other 28 states have an invertible
fast restriction at the origin. The harmonic Schur complement therefore
has an exact factor $u$. After division by $u$, the two mixed-class
escape rates remain positive at $\eta=v=0$, because the right module
can still replace the left dimer. The fast inverse, the divided mixed-class
inverse, the harmonic initial projection and their selection derivatives
are consequently regular near the origin, uniformly on a compact
strictly positive-fecundity selection domain.

At $\eta=v=0$, site 3 cannot be overwritten, for any $u$; its initial
state determines fixation. Thus $S_U(u,0,0)=0$. Taylor expansion of the
regular reduced response, using~Eq.~\eqref{app:path-p5-response-limit}, gives
\begin{equation}
 S_U=\frac{2\chi_\phi\eta}{5}(b-7c)
       +O(\eta u+\eta^2+v).
 \label{app:path-p5-uniform-response}
\end{equation}
The remainder holds coefficientwise in the first-order $b,c$ response
on bounded parameter sets. In particular, it controls the ratio even
when the overall response is small.
The hierarchy
\begin{equation}
 1\ll q\ll A\ll Q
 \label{app:path-p5-sharp-hierarchy}
\end{equation}
has $u,\eta,v/\eta\to0$ and gives
\begin{equation}
 S_U\sim\frac{2\chi_\phi q}{5A}(b-7c),\quad
 R=7+O\!\left(\frac1q+\frac qA+\frac AQ\right).
 \label{app:path-p5-sharp-limit}
\end{equation}
The equivalence for $S_U$ assumes $b\ne7c$. For example, $A=q^2$ and
$Q=q^3$ give a benefit coefficient
$2\chi_\phi/(5q)+O(q^{-2})>0$ eventually and $R\to7$. Combined with
Eq.~\eqref{app:path-p5-certificate}, this proves
\begin{equation}
 \inf_{\substack{A,q,Q>0\\J>0}} R=7.
 \label{app:path-p5-infimum}
\end{equation}
The infimum has no finite positive-weight minimizer.
For every fixed $b/c>7$, some finite weighting therefore favors
cooperation at sufficiently weak positive selection. This conclusion
does not assert a global finite-selection threshold.

At neutrality, the two mixed states escape at leading rate
$(1+2\eta)/(10q)$. Uniform singleton preparation enters them with total
mass $2/5$, and neutral cooperative fixation has probability $1/5$.
The corresponding leading times are
\begin{equation}
 T_U\sim\frac{4q}{1+2\eta},\qquad
 T_C\sim\frac{10q}{1+2\eta}.
 \label{app:path-p5-times}
\end{equation}
Here $T_C$ conditions on cooperative fixation from a uniform
cooperator singleton. Along the hierarchy in Eq.~\eqref{app:path-p5-sharp-hierarchy}, the
coefficients tend to four and ten. The threshold improves while its
weak response vanishes and its resolution time grows.

\subsection{Protected modules on every fixed path with at least six sites}
\label{app:path-general-construction}\label{app:path-long}

Fix $N\geq6$, choose $m_L,m_R\geq2$ with $N=m_L+m_R+2$, and assign
the consecutive path weights
\begin{equation}
 \left(\underbrace{q^3,\ldots,q^3}_{m_L-1},q,1,q,
 \underbrace{q^3,\ldots,q^3}_{m_R-1}\right),\qquad q\to\infty.
 \label{app:path-general-weights}
\end{equation}
The two sites adjacent to the unit-weight edge are protected sites.
Each has a follower path of $m_L$ or $m_R$ sites, and its module size is
$n_L=m_L+1$ or $n_R=m_R+1$. A protected site broadcasts to its followers
at order-one rate. Its adjacent follower overwrites it only at order
$q^{-2}$, whereas cross-contact between protected sites occurs at order
$q^{-1}$. The condition $m_L,m_R\geq2$ ensures a $q^3$ follower-core
edge that suppresses the reverse rate. An isolated two-site module has
no such edge, which accounts for the dimer establishment contest on $P_5$.

At $q=\infty$ the two pins are fixed and their follower paths align,
leaving four fast closed configurations. For fixed module sizes, a finite
sequence of copies propagates each pin value through its followers with
uniformly positive probability on compact positive-fecundity domains.
This bounds the fast lifetime and inverse as in
Appendix~\ref{app:p6-protected}; the bound may depend on the module sizes.

Let $\epsilon=q^{-1}$ and use the backward-generator block convention
with the four distinguished configurations in $S$ and the fast sector
in $F$. Harmonic elimination gives
\begin{equation}
 Q_{\rm eff}=Q_{SS}-Q_{SF}Q_{FF}^{-1}Q_{FS},\quad
 u_F=-Q_{FF}^{-1}Q_{FS}u_S.
 \label{app:path-schur-reduction}
\end{equation}
The second formula gives the initial harmonic projection for fixation
probabilities. For these fixed module sizes, $Q_{FF}^{-1}$ remains bounded
as $q\to\infty$, so $Q_{\rm eff}=\epsilon\widehat Q+O(\epsilon^2)$,
with the same control
after a selection derivative. A cross-protected-site copy is followed
by fast alignment with its new value with probability tending to one;
this identifies the leading rates in $\widehat Q$.

Define
\begin{equation}
 a=\phi\bigl(\delta(b-c)\bigr),\qquad d_0=\phi(0),\qquad
 h=\frac{a}{a+d_0}.
 \label{app:path-general-fecundities}
\end{equation}
In a homogeneous cooperative module the limiting payoff is $b-c$;
in a defective module it is zero. When the left module is cooperative,
the two absorption rates on slow time $s=t/q$ are
\begin{equation}
 \widehat q_{LC}=\frac{a}{n_La+n_Rd_0},\qquad
 \widehat q_{LD}=\frac{d_0}{n_La+n_Rd_0}.
 \label{app:path-general-left-rates}
\end{equation}
For the opposite mixed state, the denominator is $n_Ra+n_Ld_0$.
The cooperative exit probability is $h$ in either case. The common
denominator cancels in that probability but fixes the holding time.

A singleton survives the fast stage only if placed at one of the two
protected sites. Each such placement has probability $1/N$; a follower
singleton disappears in the fast limit. Hence
\begin{align}
 \rho_C^U&=\frac{2h}{N}+O(q^{-1}),\quad
 \rho_D^U=\frac{2(1-h)}{N}+O(q^{-1}),\notag\\
 \rho_C^U-\rho_D^U&=
 \frac{2(a-d_0)}{N(a+d_0)}+O(q^{-1}).
 \label{app:path-general-terminal-law}
\end{align}
These estimates are uniform, together with their selection derivatives,
at fixed $N$ on the specified compact domain. Thus
\begin{equation}
 S_U\longrightarrow\frac{\chi_\phi}{N}(b-c).
 \label{app:path-general-response}
\end{equation}
The limiting benefit coefficient is nonzero. It follows that promoters
exist eventually and that $R\to1$. Together with the strict universal
bound in Eq.~\eqref{app:path-universal-weak-bound}, the five-site result and
the small-path obstruction, this gives
\begin{equation}
 \inf_{\substack{\text{positive weights on }P_N\\J>0}}R=
 \begin{cases}
 +\infty,&2\leq N\leq4,\\
 7,&N=5,\\
 1,&N\geq6.
 \end{cases}
 \label{app:path-final-classification}
\end{equation}
The finite values are unattained infima. Five is the smallest promoting
path, whereas six is the smallest path whose threshold can approach one.
This path classification does not determine the smallest arbitrary graph support
with the latter property.

\subsection{Fixation-time distribution and control of its moments}
\label{app:path-general-times}

The residence means in the two mixed states on slow time are
\begin{equation}
 \begin{gathered}
 d_L=\frac{n_La+n_Rd_0}{a+d_0},\qquad
 d_R=\frac{n_Ra+n_Ld_0}{a+d_0},\\
 d_L+d_R=N.
 \end{gathered}
 \label{app:path-general-residence}
\end{equation}
In either state the limiting exponential waiting time is independent
of the absorbing destination. With uniform cooperator-singleton
preparation and $s>0$, 
\begin{align}
 \Pr\nolimits_U(\tau/q>s)&\longrightarrow
 \frac{e^{-s/d_L}+e^{-s/d_R}}{N},\notag\\
 \Pr\nolimits_U(\tau/q>s\mid C\text{ fixes})&\longrightarrow
 \frac12\bigl(e^{-s/d_L}+e^{-s/d_R}\bigr).
 \label{app:path-general-time-mixtures}
\end{align}
The unconditional limit has an atom $1-2/N$ at zero, representing the
collapsed fast extinction stage. It is absent in the conditional distribution
because fast-extinct cooperator singletons do not contribute to
cooperative fixation.

The same reduction controls the time-Poisson equations. After the two
consensus configurations are removed, the mixed-state Schur complement
is $q^{-1}$ times an invertible smooth matrix. Its inverse therefore
has a Laurent expansion with a leading term proportional to $q$ and
a bounded remainder. The leading coefficient is the harmonic extension
of the two residence means, combined with the initial fast projection.
Applying the block inverse to the unit source and to the committor-weighted
source gives, respectively,
\begin{equation}
 T_U=q+O(1),\qquad T_C=\frac N2q+O(1).
 \label{app:path-general-mean-times}
\end{equation}
This resolvent argument controls the bounded remainder. Convergence
of the rescaled distribution, together with uniform integrability,
would imply only an $o(q)$  remainder in the means.

The second-moment Poisson equations have sources twice the corresponding
first-moment functions. The same block inverse gives a leading $q^2$
term with an $O(q)$ remainder. The rescaled second moments therefore
converge to those of the limiting mixtures, giving
\begin{equation}
 \begin{aligned}
 \frac{\operatorname{Var}_U\tau}{q^2}&\longrightarrow
 N-1+\frac{(d_L-d_R)^2}{N},\\
 \frac{\operatorname{Var}(\tau\mid C\text{ fixes})}{q^2}
 &\longrightarrow\frac{N^2}{4}+\frac{(d_L-d_R)^2}{2},\\
 d_L-d_R&=\frac{(n_L-n_R)(a-d_0)}{a+d_0}.
 \end{aligned}
 \label{app:path-general-variances}
\end{equation}
The conditional distribution becomes a single exponential at neutrality
or when the two module sizes are equal. Otherwise module imbalance
appears in the variance even though the leading means above are unchanged.

The same argument determines extinction-conditioned moments.
Since fixation has limiting probability $2h/N$, for every fixed
integer $k\ge1$ the three moment limits are
\begin{equation}
 \begin{aligned}
 q^{-k}\E_U[\tau^k]&\longrightarrow
            \frac{k!}{N}(d_L^k+d_R^k),\\
 q^{-k}\E_U[\tau^k\mid C\text{ fixes}]&\longrightarrow
            \frac{k!}{2}(d_L^k+d_R^k),\\
 q^{-k}\E_U[\tau^k\mid C\text{ becomes extinct}]&\longrightarrow
       \frac{k!(1-h)}{N-2h}(d_L^k+d_R^k).
 \end{aligned}
 \label{eq:v7-path-all-moments}
\end{equation}
Uniform exponential tail bounds follow by iterating a fixed positive
probability of absorption on successive time intervals of length
$O(q)$. They justify passage from the limiting mixtures to each
fixed moment. The third formula follows by subtracting the
successful subdistribution. For $N=6$ with equal modules,
$d_L=d_R=3$ and its first moment is
Eq.~\eqref{eq:v7-extinction-time}.

All bounds keep $N$ fixed; no joint large-$N$, large-$q$ limit follows.
For strictly increasing fecundity at fixed positive selection,
Eq.~\eqref{app:path-general-terminal-law} has the sign of $b-c$.
When that payoff gap is comparable with the omitted hierarchy corrections,
the limiting formula does not determine the finite-weight sign. The same
qualification applies to finite selection close to seven on $P_5$.
Setting a weak edge to zero before long time instead creates additional
closed classes and changes the fixation problem.

\section{Exact reduction of the protected six-site path}
\label{app:p6-protected}
This appendix justifies the effective dynamics and time observables
in Sec.~\ref{sec:domains}. We retain the exact excursion memory first,
then control its reduction and the initial establishment transient.
The vertices are labeled consecutively along the path, with edge weights
$(Q,q,1,q,Q)$. Set $u=1/q$ and $v=q/Q$. The pins are vertices 3 and 4;
the left and right triples are $\{1,2,3\}$ and $\{4,5,6\}$. The
nonzero entries of the source-to-recipient kernel are
\begin{align}
 P_{12}=P_{65}&=1,&
 P_{21}=P_{56}&=\frac1{1+v},\notag\\
 P_{23}=P_{54}&=\frac v{1+v},&
 P_{32}=P_{45}&=\frac1{1+u},\notag\\
 P_{34}=P_{43}&=\frac u{1+u}.
 \label{eq:app-p6-kernel}
\end{align}
Use the backward generator $\mathsf A$ in Eq.~\eqref{eq:backward},
including its total-fecundity denominator and silent attempts.
Payoff and selection parameters range over a compact set where
$\phi$ is continuous and bounded away from zero. Locally Lipschitz
$\phi$ gives the $O(u+v)$ errors; continuity gives convergence alone.
Strict monotonicity is needed only to determine the selected sign.
\labelalias{eq:app-p6-full-generator}{eq:backward}
\subsection{Exact elimination retains memory at finite weights}
Let $\mathsf Q=\mathsf A^{\mathsf T}$ be the forward generator and
partition its probability vector into aligned states $\mathcal S$ and
fast states $\mathcal F$. Eliminating the fast probability vector
from the time-dependent equation gives the exact identity
\begin{align}
 \dot p_{\mathcal S}(t)={}&\mathsf Q_{\mathcal SS}p_{\mathcal S}(t)
 +\mathsf Q_{\mathcal SF}e^{t\mathsf Q_{\mathcal FF}}p_{\mathcal F}(0)
 \notag\\
 &+\int_0^t\mathsf Q_{\mathcal SF}e^{(t-r)\mathsf Q_{\mathcal FF}}
       \mathsf Q_{\mathcal FS}p_{\mathcal S}(r)\,dr.
 \label{eq:app-p6-memory}
\end{align}
The second term carries the preparation, and the third carries the
memory of excursions through omitted configurations. Its Laplace
transform is Eq.~\eqref{eq:main-self-energy}.
\labelalias{eq:app-p6-resolvent}{eq:main-self-energy}
The zero-frequency generator is the transpose of the harmonic Schur
complement
\begin{equation}
 \mathsf L=\mathsf A_{\mathcal SS}
 -\mathsf A_{\mathcal SF}\mathsf A_{\mathcal FF}^{-1}
                          \mathsf A_{\mathcal FS}.
 \label{eq:app-p6-harmonic-schur}
\end{equation}
It determines harmonic hitting probabilities exactly. It is also the
generator of the process observed only during time spent in $\mathcal S$,
with excursion durations removed. At finite weights, replacing the
full physical-time dynamics by this constant matrix would discard the
memory and change the clock. The controlled slow-time limit reduces
this memory to the Markovian process used in the main text.

\subsection{Harmonic elimination and the physical clock}
Let $\mathcal S=\{00,10,01,11\}$ denote the internally aligned triples
and $\mathcal F$ the other 60 configurations. Positivity of the fast
pin-to-follower rates ensures absorption from $\mathcal F$ into
$\mathcal S$ in bounded fast time, uniformly on every compact
positive-fecundity domain. Thus $\mathsf A_{\mathcal FF}^{-1}$ is
uniformly bounded near $(u,v)=(0,0)$, and the entrance matrix is
\begin{equation}
 \mathsf E_{\mathcal F}=-\mathsf A_{\mathcal FF}^{-1}
 \mathsf A_{\mathcal FS}
 =\mathsf E_{\mathcal F}(0,0)+O(u+v).
\end{equation}
Its limiting row selects the initial pin assignment. At finite $v$,
followers can change a pin during establishment, so the finite entrance
matrix is not a literal pin indicator.

There is an exact structural factor in every $\mathcal S$ row of
$\mathsf A$. At $u=0$ the pin-to-pin contact disappears and each
aligned triple is absorbing for every $v$. All transitions out of an
aligned pair thus contain the factor $u$. Consequently
\begin{equation}
 \mathsf L=u\widehat{\mathsf L},\qquad
 \widehat{\mathsf L}=\mathsf L_0+O(u+v).
 \label{eq:app-p6-uniform-factor}
\end{equation}
Factoring the exact slow rate before expansion gives an error $O(u+v)$,
uniformly in the ratio $v/u$. Continuous fecundities give convergence
with $o(1)$ in place of the Lipschitz error.

To compute $\mathsf L_0$, put
$a=\phi(\delta(b-c))$, $d_0=\phi(0)=1$ and $h=a/(a+d_0)$. In a mixed
aligned state the leading total fecundity is $3(a+d_0)$. Copying from
the cooperative pin across the central contact has rate
$uh/3+o(u)$, while the reverse transfer has rate $u(1-h)/3+o(u)$.
After either transfer the pin's followers adopt its new state before
a competing slow event with probability tending to one. In the order
$00,10,01,11$,
\begin{equation}
 \mathsf L_0=\frac13
 \begin{pmatrix}
 0&0&0&0\\
 1-h&-1&0&h\\
 1-h&0&-1&h\\
 0&0&0&0
 \end{pmatrix}.
 \label{eq:app-p6-selected-generator}
\end{equation}
The mixed-sector chance of all-C fixation is $h$. A C singleton
reaches a mixed class only if initially placed on either pin, giving
combined entrance probability $1/3+O(u+v)$. With superscript $U$
denoting uniform singleton introduction,
\begin{gather}
 \rho_C^U=\frac h3+O(u+v), \quad 
 \rho_D^U=\frac{1-h}{3}+O(u+v),\notag\\
 \Delta_U=\frac{a-d_0}{3(a+d_0)}+O(u+v).
 \label{eq:app-p6-selected-law}
\end{gather}
Here $\rho_D^U$ uses a D singleton in a C background. These formulas
hold at fixed admissible selection, and imply a positive limiting
response for strictly increasing $\phi$, fixed $\delta>0$, and
$b>c$. They do not resolve the finite-parameter sign when
$b-c=O(u+v)$. For a $C^1$ map the hierarchy and selection derivative
limits commute on a compact interval about neutrality:
\begin{equation}
 \lim_{u,v\to0}\left.\partial_\delta\Delta_U\right|_{\delta=0}
 =\frac{\phi'(0)}{6\phi(0)}(b-c).
\end{equation}
A $C^2$ map gives an $O(u+v)$ differentiated error; continuity of
$\phi'$ alone gives convergence without that rate.

The harmonic Schur complement determines hitting probabilities, but
eliminating configurations also changes the Poisson source for a time
observable. Let $\mathcal M=\{10,01\}$ and let $m$ be the mean
absorption time. From $\mathsf A m=-\mathbf1$, elimination gives
\begin{align}
 -\mathsf L_{\mathcal MM}m_{\mathcal M}
 &=\mathbf1_{\mathcal M}
 -\mathsf A_{\mathcal MF}\mathsf A_{\mathcal FF}^{-1}
       \mathbf1_{\mathcal F},\\
 m_{\mathcal F}&=-\mathsf A_{\mathcal FF}^{-1}\mathbf1_{\mathcal F}
 +\mathsf E_{\mathcal F\mathcal M}m_{\mathcal M}.
\end{align}
Every $\mathcal M$ row contains the exact factor $u$, making the
first source $\mathbf1+O(u)$. Since
$\mathsf L_{\mathcal MM}=-(u/3)[I+O(u+v)]$, it follows that
$m_{\mathcal M}=(3/u)[\mathbf1+O(u+v)]$. The mixed-sector entrance
mass is $1/3+O(u+v)$, giving $T_U=u^{-1}[1+O(u+v)]$.

For $g(n)=\E_n[\tau\,\mathbf1_{\{C\text{ fixes}\}}]$,
the exact backward equation is $\mathsf A g=-h_C$, with $h_C$ the
C-fixation committor. Its eliminated mixed source is
$h\mathbf1+O(u+v)$. Averaging the resulting first moment over the
same initial preparation and dividing by
$\rho_C^U=h/3+O(u+v)$ gives $T_C=3/u[1+O(u+v)]$.
Conditioning is uniform because $h$ is bounded away from zero on the
stated compact domain. The same block inverse bounds the mean from
every configuration by $Cq$, for a uniform constant $C$. Markov's
inequality and the strong Markov property then give
\begin{equation}
 \sup_n\Pr_n(\tau>2Ckq)\le 2^{-k},\qquad k=1,2,\ldots .
 \label{eq:app-p6-uniform-tail}
\end{equation}
The aggregate success-conditioned tail obeys the same exponential
bound up to division by $\rho_C^U$, which stays bounded away from zero.
These bounds make every moment of $\tau/q$ uniformly integrable.

The limiting unconditional first and second moments of $\tau/q$
are 1 and 6, and the conditional moments are 3 and 18. Their variances
are therefore 5 and 9.

\subsection{A closed propagator and the initial layer}
In the column-probability convention, set
$\mathcal H=-\mathsf L_0^{\mathsf T}$. Its minimal polynomial divides
$\lambda(\lambda-1/3)$, and therefore
\begin{equation}
 \mathcal H^2=\tfrac13\mathcal H,\qquad
 e^{-s\mathcal H}=I+3(e^{-s/3}-1)\mathcal H.
 \label{eq:app-p6-propagator}
\end{equation}
The repeated eigenvalues $0,0,1/3,1/3$ do not produce Jordan blocks.
For uniform C-singleton preparation, the post-establishment probability
vector is $(2/3,1/6,1/6,0)^{\mathsf T}$. With
$E=e^{-s/3}$, $U=z_1z_2z_3$, and $V=z_4z_5z_6$, its limiting
six-variable generating function is
 \begin{align}
 G_\infty(z;s)&=\frac23+\frac{1-h}{3}(1-E) \label{eq:app-p6-pgf}
\\
 &\quad+\frac E6(U+V)+\frac h3(1-E)UV,\quad s>0. \notag
 \end{align}
 The limit at $s\downarrow0$ gives the distribution after fast
establishment. It must not be identified with the original microscopic
singleton polynomial $\sum_i z_i/6$.

\section{Reflected six-site response and the critical branch}
\label{app:p6-certificate}
The reflected pair calculation gives the explicit rational function used
for matching trials and critical-branch coefficients. Use
$\mathbf D$ for the integrated-disagreement vector and $\vartheta_i=T_i$ from
Sec.~\ref{sec:response-theory}. Reflection reduces its fifteen ancestral
pairs to nine orbits. The contractions and their slope normalization are
Eq.~\eqref{app:path-response}; $R=K/J$ on $J>0$.
\labelalias{eq:app-p6-pair-equations}{eq:pair-system}
Substitution of~Eq.~\eqref{eq:app-p6-kernel} and exact elimination in
Eq.~\eqref{eq:app-p6-pair-equations} give $R=\mathcal K(u,v)/\mathcal J(u,v)$,
after multiplying numerator and denominator by a common factor to
remove their rational denominators. 
Write
$\mathcal K=\sum_{i=0}^6\sum_{j=0}^7a_{ij}u^iv^j$ and
$\mathcal J=\sum_{i=0}^6\sum_{j=0}^7b_{ij}u^iv^j$.
The complete coefficient arrays, with rows ordered by powers of $u$
and columns by powers of $v$, are
\begin{widetext}
\begin{align}
 (a_{ij})={}&\begin{pmatrix}
9&264&1485&3405&3765&2061&516&45\\
75&2141&11369&25482&28247&15791&4083&360\\
160&5459&29277&66395&75017&43094&11564&1070\\
124&6096&34416&81136&95025&56673&15915&1575\\
32&3132&19595&49804&61877&38876&11543&1235\\
0&608&4872&14438&19809&13397&4257&495\\
0&0&320&1472&2432&1832&632&80
\end{pmatrix},\label{eq:app-p6-A}\\
 (b_{ij})={}&\begin{pmatrix}
9&18&-189&-783&-1053&-567&-126&-9\\
21&-141&-1905&-5988&-7641&-4173&-963&-72\\
-56&-999&-6255&-16755&-20597&-11338&-2696&-214\\
-92&-1336&-7830&-20806&-26053&-14777&-3671&-315\\
-32&-636&-4323&-12522&-16727&-9986&-2627&-247\\
0&-96&-1000&-3514&-5271&-3387&-955&-99\\
0&0&-64&-352&-644&-458&-140&-16
\end{pmatrix}.\label{eq:app-p6-B}
\end{align}
\end{widetext}
The polynomial $\mathcal K$ is strictly positive for positive $u,v$.
The exact identity $K\mathcal J=J\mathcal K$, with $K>0$,
then gives the same sign for $J$ and $\mathcal J$, including their zero locus.
The arrays give
\begin{equation}
 \begin{aligned}
 \mathcal K-\mathcal J-\left(6u+\frac{82}{3}v\right)\mathcal J
 &=90u^2+1600uv+1182v^2\\
 &\quad+\text{positive higher monomials}.
 \end{aligned}
 \label{eq:app-p6-positive-certificate}
\end{equation}
Every nonzero coefficient is positive; the complete left side is
specified by the two integer arrays, so this is an exact polynomial identity. On $\mathcal J>0$, division gives
\begin{equation}
 R-1>6u+\frac{82}{3}v.
 \label{eq:app-p6-strict-bound}
\end{equation}
Since $\mathcal J(0,0)=9$, Taylor division also gives the convergent expansion
\begin{equation}
 R=1+6u+\frac{82}{3}v+10u^2+
 \frac{1600}{9}uv+\frac{394}{3}v^2+O((u+v)^3).
 \label{eq:app-p6-taylor}
\end{equation}

In particular, the exact reflected bound gives
\begin{equation}
 R-1>\frac6q+\frac{82q}{3Q}
       \ge4\sqrt{41/Q}\ge4\sqrt{41/H}.
 \label{eq:p6-cost-bound}
\end{equation}
It is stronger than the unrestricted finite-weight bound on this
subfamily. Appendix~\ref{v3-app-global-p6} establishes why the eventual
unrestricted minimum lies in this subfamily.

\subsection{Matching contrast and time limits}
The trial $Q=H$, $q=3\sqrt{H/41}$ matches the leading lower bound
Eq.~\eqref{eq:p6-cost-bound}. The eventual unique exact optimum has
$Q=H$ and $q=q_H=3\sqrt{H/41}+O(1)$, as proved over all positive
weights in Appendix~\ref{v3-app-exact-global-reflection}.
At fixed scaled budget $\eta=H\epsilon^2>656$, the lower root
$\theta_-=12/(1+\sqrt{1-656/\eta})$ gives a trial
$q=\theta_-/\epsilon+d_\eta$, $Q=H$, where a fixed sufficiently
large $d_\eta>0$ enforces the strict target. In terms of
\begin{equation}
 F_\eta(\theta)=\frac6\theta+\frac{82\theta}{3\eta},\qquad
 G_\eta(\theta)=\frac{10}{\theta^2}+\frac{1600}{9\eta}
                         +\frac{394\theta^2}{3\eta^2},
\end{equation}
it is enough to take
$d_\eta>G_\eta(\theta_-)/[-F_\eta'(\theta_-)]$, since
$F_\eta'(\theta_-)=(\theta_--12)/\theta_-^2<0$.
Appendix~\ref{v3-app-global-time-frontier} proves the matching global
time bounds. With no contrast cap, $q=6/\epsilon+2$ and
$v=\epsilon^3$ give
$R=1+\epsilon-\epsilon^2/18+O(\epsilon^3)$ and bounded additive
time errors, attaining $6/\epsilon+O(1)$ and $18/\epsilon+O(1)$.
These constructions keep the physical clock and neutral time observables.

\subsection{The finite-budget fold and its uniform time boundary layer}
\label{app:finite-budget-fold-v2}

\begin{figure*}[t]
 \centering
 \includegraphics[width=0.94\textwidth]{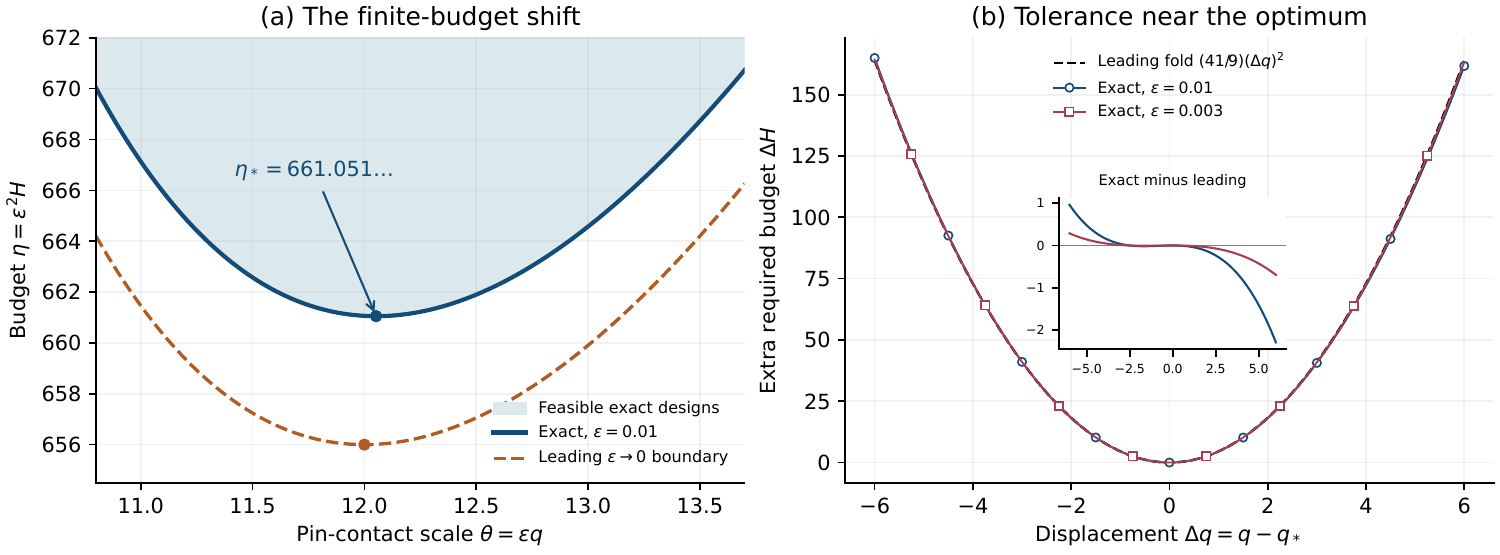}
 \caption{Finite-budget feasibility and parameter tolerance for the
 reflected six-site path. (a) The solid boundary is the exact rational
 weak-selection threshold at $\epsilon=0.01$; the shaded region is
 feasible. The dashed leading boundary underestimates the required
 budget. (b)  Exact curves, centred on their own finite-$\epsilon$
 minima, approach the quadratic fold relation. Distinct markers separate
 the nearly coincident curves; the inset shows the exact minus leading
 required budget. The curves are evaluations of the microscopic pair solution.}
 \label{fig:finite-budget-fold-v2}
\end{figure*}

The two roots in the leading frontier coalesce at
$\theta=12$, $\eta=656$. To resolve this endpoint, use the exact
ratio $\mathscr R=\mathcal K/\mathcal J$ defined by
Eqs.~\eqref{eq:app-p6-A} and \eqref{eq:app-p6-B} and put
\begin{equation}
 \begin{gathered}
 \theta=\epsilon q,\qquad \eta=\epsilon^2 Q,\\
 g(\theta,\eta,\epsilon)
 =\frac{\mathscr R(\epsilon/\theta,\epsilon\theta/\eta)-1}{\epsilon}-1.
 \end{gathered}
 \label{eq:app-fold-g-v2}
\end{equation}
Since $\mathcal J(0,0)=9$, the apparent singularity at $\epsilon=0$
is removable. Taylor division gives $g=F-1+\epsilon G+
\epsilon^2K_3+O(\epsilon^3)$ uniformly near that point, with
\begin{align}
 F&=\frac6\theta+\frac{82\theta}{3\eta},\notag\\
 G&=\frac{10}{\theta^2}+\frac{1600}{9\eta}
                         +\frac{394\theta^2}{3\eta^2},\notag\\
 K_3&=\frac{38}{\theta^3}+\frac{14764}{27\eta\theta}
          +\frac{12304\theta}{9\eta^2}+
                           \frac{2330\theta^3}{3\eta^3}.
 \label{eq:app-fold-series-v2}
\end{align}
At $(12,656)$, $F_\theta=0$, $F_{\theta\theta}=1/144$,
$F_\eta=-1/1312$ and $G=3877/10086$. The Jacobian of
$(g,g_\theta)$ with respect to $(\eta,\theta)$ is invertible.
The implicit-function theorem therefore gives an analytic critical
branch $(\eta_*(\epsilon),\theta_*(\epsilon))$. Writing
$\eta_*=656+\gamma\epsilon+\zeta\epsilon^2+O(\epsilon^3)$ and
$\theta_*=12+a\epsilon+O(\epsilon^2)$, its coefficients follow from
\begin{equation}
 \begin{aligned}
 \gamma&=-G/F_\eta,\qquad
 a=-\frac{\gamma F_{\theta\eta}+G_\theta}{F_{\theta\theta}},\\
 \zeta&=-\frac{1}{F_\eta}\biggl[
 \frac{a^2}{2}F_{\theta\theta}+a\gamma F_{\theta\eta}
       +\frac{\gamma^2}{2}F_{\eta\eta}\\
 &\hspace{20mm}+aG_\theta+\gamma G_\eta+K_3\biggr]_* .
 \end{aligned}
 \label{eq:app-fold-recursion-v2}
\end{equation}
All quantities on the right are evaluated at $(12,656)$.
Exact rational arithmetic gives
\begin{equation}
 \begin{aligned}
 H_*&=\frac{656}{\epsilon^2}+\frac{62032}{123\epsilon}
                  +\frac{48137852}{620289}+O(\epsilon),\\
 q_*&=\frac{12}{\epsilon}+\frac{26348}{5043}+O(\epsilon).
 \end{aligned}
 \label{eq:finite-budget-critical-v2}
\end{equation}
Appendix~\ref{v3-app-exact-global-reflection} proves that this exact
branch is the unrestricted critical design for sufficiently small
$\epsilon$. The strict target has the same critical budget as an
infimum and requires $H>H_*$.

Let $d=\eta-\eta_*(\epsilon)$. The parameter-dependent
quadratic normal form gives, uniformly for small $d\geq0$ and
$\epsilon\geq0$,
\begin{equation}
 \begin{gathered}
 \theta_\pm=\theta_*\pm c_f(\epsilon)\sqrt d+O(d),\\
 c_f(\epsilon)=
 \left[-\frac{2g_\eta}{g_{\theta\theta}}\right]_*^{1/2}
 =\frac3{\sqrt{41}}+O(\epsilon).
 \end{gathered}
 \label{eq:app-fold-root-v2}
\end{equation}
Division by $\epsilon$ gives Eq.~\eqref{eq:finite-budget-width-v2}.
The exact shared time optimizer follows the lower root by
Appendix~\ref{app:v4-exact-time}. Since its physical times satisfy
$T_U=q+O(1)$ and $T_C=3q+O(1)$ uniformly in this neighborhood,
$d=\omega\epsilon+O(\epsilon^2)$ gives
Eq.~\eqref{eq:finite-budget-time-layer-v2}. At fixed positive $\omega$
the square-root time shift diverges. When $\omega=O(\epsilon)$ it
is only $O(1)$, at the order omitted in that expansion; the exact-root
description resolves the endpoint. The positive constant shift in
Eq.~\eqref{eq:finite-budget-critical-v2} means even
$656/\epsilon^2+\gamma/\epsilon$ remains insufficient.

There is also a selected local version. Fix $\kappa>0$ in a compact
interval inside the positive-fecundity domain for either map in
Sec.~\ref{sec:finite-selection}. Analyticity of the fast and reduced
Poisson systems, together with the simple zero in
Eq.~\eqref{eq:selected-boundary-v2}, makes $r_{\rm crit}(u,v;\kappa)$
analytic near $(0,0)$. Replace $F$ by
$F_\kappa=A_\kappa/\theta+B_\kappa\theta/\eta$. Its critical point
is $(2A_\kappa,4A_\kappa B_\kappa)$ and
\begin{equation}
 (F_\kappa)_{\theta\theta,*}=\frac{1}{4A_\kappa^2},\qquad
 (F_\kappa)_{\eta,*}=-\frac{1}{8A_\kappa B_\kappa}.
 \label{eq:app-selected-fold-curvature-v2}
\end{equation}
Consequently an exact local selected fold exists, with
\begin{equation}
 \begin{aligned}
 H_{*,\kappa}&=4A_\kappa B_\kappa\epsilon^{-2}
                              +O_\kappa(\epsilon^{-1}),\\
 q_{*,\kappa}&=2A_\kappa\epsilon^{-1}+O_\kappa(1),\\
 q_\pm&=q_{*,\kappa}\pm
       [\sqrt{A_\kappa/B_\kappa}+O_\kappa(\epsilon)]
                 \sqrt{H-H_{*,\kappa}}\\
       &\hspace{10mm}+O_\kappa\bigl(\epsilon(H-H_{*,\kappa})\bigr).
 \end{aligned}
 \label{eq:app-selected-fold-v2}
\end{equation}
Here the target is $r_{\rm crit}\leq1+\epsilon$ and the centre
is again exact. This extends the tolerance calculation to finite
selection inside the controlled reflection hierarchy.
Appendix~\ref{v3-app-selected-global-localization} proves global
optimality at sufficiently small positive gap and intensity.
A joint strong-selection limit remains outside that argument. The weak limit $A_0=6$, $B_0=82/3$ recovers the preceding
curvatures and width.

\section{Asymmetric domains and forced entry into the protected hierarchy}
\label{app:asymmetric-entry-v2}

The unrestricted bound in Appendix~\ref{v3-app-global-p6} places
near-threshold designs in a hierarchy of protected triples. Here we derive
their establishment, response amplitude and physical times uniformly in
the relative pin rates, using weak selection and uniform introduction.

\subsection{A uniformly controlled weak interface}

Normalize the consecutive weights to $(A,B,1,D,E)$ and define
\begin{equation}
 \begin{gathered}
 \varrho=B^{-1}+D^{-1},\qquad p=\frac{D}{B+D},\\
 \ell_L=B/A,\qquad\ell_R=D/E,\\
 a_L=\frac{\ell_L}{1-p}=\frac{B+D}{A},\qquad
 a_R=\frac{\ell_R}{p}=\frac{B+D}{E}.
 \end{gathered}
 \label{eq:asym-coordinates-v2}
\end{equation}
The variables $\ell_L,\ell_R$ are within-domain leakages;
$a_L,a_R$ compare each outer core with both pin contacts. They differ
from the reflection variables $u,v$ used earlier. We consider
$\varrho\to0$, $0\leq\ell_L,\ell_R\leq M$ for a fixed finite $M$,
and allow $p$ to approach either endpoint. The actual finite paths
have $0<p<1$. Their nonzero copying probabilities are
\begin{equation}
 \begin{gathered}
 P_{12}=P_{65}=1,\quad
 P_{21}=\frac1{1+\ell_L},\quad
 P_{23}=\frac{\ell_L}{1+\ell_L},\\
 P_{32}=\frac1{1+\varrho p},\quad
 P_{34}=\frac{\varrho p}{1+\varrho p},\\
 P_{43}=\frac{\varrho(1-p)}{1+\varrho(1-p)},\quad
 P_{45}=\frac1{1+\varrho(1-p)},\\
 P_{54}=\frac{\ell_R}{1+\ell_R},\quad
 P_{56}=\frac1{1+\ell_R}.
 \end{gathered}
 \label{eq:asym-kernel-v2}
\end{equation}
At $\varrho=0$ the two triples independently reach their own
consensus. The four aligned states are the only closed classes.
Their 60-state complement has a uniformly bounded killed inverse
on the stated compact set. A finite sequence aligning each triple
with its interface site has a uniformly positive probability,
because the required copying directions have probabilities at least
$(1+M)^{-1}$ or one. Fecundities near neutrality preserve this
bound.

Every aligned row has a factor $\varrho$. Dividing the harmonic
Schur complement by this factor gives an analytic reduced operator.
Define the neutral pin-establishment probability and mixed-state
escape coefficient by
\begin{equation}
 k(\ell)=\frac{1+\ell}{1+3\ell+\ell^2},\quad
 \nu_0=p k(\ell_R)+(1-p)k(\ell_L).
 \label{eq:asym-module-k-v2}
\end{equation}
In time $\varrho t$, the mixed state $10$ exits to $11$ and
$00$ at rates $p k(\ell_R)/6$ and $(1-p)k(\ell_L)/6$,
respectively; the rates are interchanged for $01$. Both mixed states
have total exit rate $\nu_0/6$. Since $k$ is bounded below on $[0,M]$,
the divided mixed-state inverse is uniform even at $p=0,1$.
Consequently the committor and its weak derivative extend analytically
in $(\varrho,\ell_L,\ell_R,p)$ to a neighbourhood of this compact
set. In particular, writing $\mathcal S=b\chi_b-c\chi_c$,
the coefficients and the deficit
$\chi_{\rm gap}=\chi_c-\chi_b$ are bounded there.

\subsection{An exact three-site establishment calculation}

For a triple with relative edge weights $(1/\ell,1)$, label
the interface site 3. Let $j(\ell)$ be its singleton fixation
susceptibility divided by $k(\ell)$, and let $e(\ell)$ be the sum
of the three singleton fixation susceptibilities. Separate the payoff
parameters as $j=j_b b+j_c c$ and $e=e_b b+e_c c$. Solving the
six transient committor equations gives
 \begin{gather}
 j_b(\ell)=-\frac{\ell}{1+\ell}, \quad 
 j_c(\ell)=-\frac{\ell(3+2\ell)}{1+3\ell+\ell^2},  \label{eq:asym-module-response-v2}
\\
 e_b(\ell)=-\frac{3\ell}{1+3\ell+\ell^2},
 e_c(\ell)=-\frac{20\ell(1+\ell)^2}
                         {3(1+3\ell+\ell^2)^2}\,. \notag
 \end{gather}
For example, the neutral committor is the inverse-strength
weighted occupation of this triple; each derivative solves
$A_{0,\mathrm{mix}}h_1=-A_1h_0$ with zero boundary values.
These rational functions follow from the microscopic rates in
Eq.~\eqref{eq:rates}. All four
coefficients are strictly negative for $\ell>0$ and vanish at
$\ell=0$. The overall embedded clock changes residence times but
not these module committors.

Put $h=p k(\ell_R)/\nu_0$. A rare copy across the central edge
must establish in the receiving module before converting that domain.
The derivative of cooperative fixation from either mixed state is
$h(1-h)[b-c+j(\ell_L)+j(\ell_R)]$. Uniform introductions enter
the two mixed states with neutral mass $1/6$ each and derivatives
$e(\ell_L)/6$, $e(\ell_R)/6$. Hence, at $\varrho=0$,
\begin{equation}
 \begin{aligned}
 \chi_b^0={}&\frac{2h(1-h)}3[1+j_b(\ell_L)+j_b(\ell_R)]\\
 &+\frac{h e_b(\ell_L)+(1-h)e_b(\ell_R)}3,\\
 \chi_c^0={}&\frac{2h(1-h)}3[1-j_c(\ell_L)-j_c(\ell_R)]\\
 &-\frac{h e_c(\ell_L)+(1-h)e_c(\ell_R)}3.
 \end{aligned}
 \label{eq:asym-interface-response-v2}
\end{equation}
The actual coefficients differ by uniformly $O(\varrho)$. Their
vanishing at pin-dominance endpoints requires analytic division before
forming the threshold ratio.

\subsection{A near-one threshold forces both core leakages to vanish}

The difference $\chi_{\rm gap}^{(0)}=\chi_c^0-\chi_b^0$ is a sum of
nonnegative terms. The first line contains
$2h(1-h)[-j_b(\ell_L)-j_c(\ell_L)-j_b(\ell_R)-j_c(\ell_R)]/3$,
and the remaining term is
$[h(-e_b-e_c)(\ell_L)+(1-h)(-e_b-e_c)(\ell_R)]/3$.
Its zero set on the compact domain is exactly
\begin{equation}
 \{\ell_L=\ell_R=0\}\ \cup\
 \{p=0,\ell_R=0\}\ \cup\
 \{p=1,\ell_L=0\}.
 \label{eq:asym-deficit-zero-set-v2}
\end{equation}
For a promoter sequence with $R\to1$,
$\chi_{\rm gap}=(R-1)\chi_b\to0$, since $\chi_b$ is bounded.
Every convergent subsequence must therefore approach this zero set.

An unwanted limit with $\ell_L>0$ must have $p=\ell_R=0$.
On that face, site 4 can never be overwritten and, for
$\varrho>0$, can transmit its type to the whole path. Its initial
type fixes certainly, so both response coefficients vanish identically
there for every bounded $\ell_L$. Analytic division therefore gives
$\chi_{\rm gap}=pF+\ell_R G$ and
$\chi_b=pH_b+\ell_R I_b$ near the face. At
$\varrho=p=\ell_R=0$, differentiation of
Eq.~\eqref{eq:asym-interface-response-v2} gives
 \begin{align}
F&=\frac{\ell_L(18\ell_L^4+131\ell_L^3+282\ell_L^2
                              +216\ell_L+53)}
 {9(1+\ell_L)^2(\ell_L^2+3\ell_L+1)}>0, \notag \\
 G&=29/9>0. \label{eq:asym-endpoint-derivative-v2}
 \end{align}
In a neighbourhood of any fixed positive limiting $\ell_L$,
continuity gives $\chi_{\rm gap}\ge c_*(p+\ell_R)$ and
$\chi_b\le C_*(p+\ell_R)$ for positive constants $c_*,C_*$.
This contradicts $R-1=\chi_{\rm gap}/\chi_b\to0$. Reflection excludes
the other unwanted face. It follows that $\ell_L,\ell_R\to0$.

\subsection{Endpoint division forces the crossed hierarchy}

It remains possible that $\ell_L,\ell_R$ vanish more slowly
than one pin's relative contact weight. Put $g=p(1-p)$ and
$W=p\ell_L+(1-p)\ell_R$. The exact protected-pin identities on
$p=\ell_R=0$ and $p=1,\ell_L=0$, together with a rectangular
Taylor decomposition in the two leakages, give
\begin{equation}
 \begin{aligned}
 \chi_{\rm gap}={}&\chi_{\rm gap}(\varrho,0,0,p)
       +p\ell_L F_L\\
 &+(1-p)\ell_R F_R+\ell_L\ell_R H_2,\\
 F_L|_{\varrho=\ell_L=\ell_R=0}=&(53-24p)/9,\\
 F_R|_{\varrho=\ell_L=\ell_R=0}=&(29+24p)/9.
 \end{aligned}
 \label{eq:asym-positive-division-v2}
\end{equation}

The coefficient functions are analytic. Both leading $F$'s
are at least $29/9$ uniformly in $p$. Splitting
$\ell_L\ell_R=p\ell_L\ell_R+(1-p)\ell_L\ell_R$ absorbs the
cross term into positive coefficients for sufficiently small
$\varrho+\ell_L+\ell_R$. The base deficit is nonnegative, being
a continuous boundary value of the strict $b=c$ no-go inequality.
Thus $\chi_{\rm gap}\ge c_0W$ for a uniform $c_0>0$. The analogous
decomposition of $\chi_b$, whose base has the factor $g$, gives
$\chi_b\le C_0(g+W)$ with a uniform $C_0>0$. For promoters,
\begin{equation}
 R-1\ge\frac {c_0}{C_0}\frac{W}{g+W},\quad
 \frac{W}{g}=\frac{\ell_L}{1-p}+\frac{\ell_R}{p}=a_L+a_R.
 \label{eq:asym-forced-crossing-v2}
\end{equation}
Therefore $R\to1$ forces $a_L+a_R\to0$, proving
Eq.~\eqref{eq:asymmetric-entry-v2}. The argument used only bounded
within-domain leakages at the outset. It also gives
$a_L+a_R=O(R-1)$ once the small neighbourhood has been reached.

\subsection{Uniform response, contrast and time consequences}

In the forced coordinates $(\varrho,a_L,a_R,p)$ the same
bounded inverses apply, and both $\chi_b$ and $\chi_{\rm gap}$ vanish
identically at $p=0,1$. Dividing by $p(1-p)$ before Taylor expansion
therefore preserves uniform errors. With
$m=\varrho+a_L+a_R$, the result is
 \begin{align}
 \chi_b&=\frac{2p(1-p)}3[1+O(m)],  \label{eq:asym-uniform-expansion-v2}
\\
 R-1&=3\varrho+\frac{53-24p}{6}a_L
                    +\frac{29+24p}{6}a_R+O(m^2). \notag
 \end{align}
The coefficients follow from the pair system
Eq.~\eqref{eq:pair-system}.  Expanding its matrix as
$M_0+tM_1+t^2M_2$ along
$(\varrho,a_L,a_R)=t(r,c_L,c_R)$, the singular pair solution
starts with $2v_0/(tr)$, where $v_0$ is one on pairs crossing the
central edge and zero on within-domain pairs. The next coefficient
solves $M_0d^{(0)}+M_1(2v_0/r)=2\,\one$, with the next-order left-null
solvability condition fixing its remaining component.
Contracting these terms into $J,K$ gives
$3r+(53-24p)c_L/6+(29+24p)c_R/6$. The matrix coefficients and left null vector are determined directly
by Eq.~\eqref{eq:pair-system}; these two coefficient equations fix the
required contraction.

All linear gap coefficients are positive uniformly in $p$,
so $m=O(\epsilon)$ when $\epsilon=R-1\to0$. Eventually the
central edge is the minimum and $H=\max(A,E)$. Thus
$a_L,a_R\ge(B+D)/H=1/[H\varrho p(1-p)]$, and
Eq.~\eqref{eq:asym-uniform-expansion-v2} implies
\begin{equation}
 \begin{aligned}
 \epsilon[1+O(\epsilon)]
 &\ge3\varrho+\frac{41}{3H\varrho p(1-p)}\\
 &\ge2\sqrt{\frac{41}{Hp(1-p)}}.
 \end{aligned}
 \label{eq:asym-budget-bound-v2}
\end{equation}
This proves Eq.~\eqref{eq:asymmetric-cost-amplitude-v2}.
Equality in the leading contrast coefficient requires $p\to1/2$,
saturation of both outer weights and $\varrho\sim\epsilon/6$.
A bounded $H\epsilon^2$ keeps $p$ away from zero and one.

The physical neutral times can also be computed before the
leakages become small. Uniform introductions reach each mixed state
with mass $1/6$, and their common residence time is
$6/(\varrho \nu_0)$ to leading order. The Poisson-source correction
is uniformly bounded on the compact leakage domain. Consequently
\begin{equation}
 T_U=\frac{2}{\varrho \nu_0}+O(1),\qquad
 T_C=\frac{6}{\varrho \nu_0}+O(1).
 \label{eq:asym-times-v2}
\end{equation}
Conditional normalization uses the exact neutral fixation
probability $1/6$. For each fixed $s>0$, the unconditional survival
probability obeys
$\Pr(\varrho\tau>s)=\frac13\exp(-\nu_0s/6)+o(1)$,
uniformly on the bounded-leakage domain. If $\nu_0$ converges, the
limiting law has mass $2/3$ at zero and mass $1/3$ in an
exponential distribution with the corresponding mean $6/\nu_0$.
In the protected corner,
$\nu_0=1+O(a_L+a_R)$, and the positive $3\varrho$ term in the gap
proves $\liminf\epsilon T_U\ge6$ and
$\liminf\epsilon T_C\ge18$. Reflected trials attain both time-only
bounds to leading order.

Bounded leakages keep this intermediate construction uniform. An
arbitrarily weak outer core can make $k(\ell)$ and the divided escape
rate vanish. The unrestricted inequality below excludes that limit
for near-barrier promoters before the local expansion is used.

\section{Global six-site optimization and the geometry of its resource threshold}
\label{v3-app-global-p6}

The reflection construction becomes a global result only after excluding
all other weight hierarchies. Normalize the consecutive path weights as
$(A,B,1,D,E)$ with all four variables positive. The pair equations and the
definitions of $J,K,Z$ are those of Appendix~\ref{app:p6-certificate}.
In particular $K>0$ and the weak threshold $R=K/J$ applies on $J>0$.

\subsection{A positive polynomial identity and a regular hierarchy}
\label{v3-app-positive-p6}
The complete fifteen-pair solve gives a reduced fraction $R=U/V$ and
$\mathcal C=U-V$. Their degrees are 36 and their term counts are 15467,15465,15466.
All coefficients of $U$ and $\mathcal C$ are positive. The exact identity
$J/K=V/U$ proves $\operatorname{sgn}J=\operatorname{sgn}V$, including the
zero locus, without dividing by $J$. Define
\begin{equation}
 \begin{aligned}
 M&=648A^{10}B^8D^8E^{10},\\
 F&=\frac3B+\frac3D+\frac{53B+29D}{6A}+\frac{29B+53D}{6E}.
 \end{aligned}
 \label{v3-eq-p6-positive-F}
\end{equation}
Reconstruction from the pair equations gives the exact polynomial identities 
 \begin{equation}
 71\mathcal C-18(U-M)\geq_{\rm coeff}0,
 \quad \mathcal C-MF\geq_{\rm coeff}0.
 \label{v3-eq-p6-positive-remainders}
\end{equation}
Both remainders are nonzero; their 15464 and 15460 nonzero coefficients are
positive. The six terms of $MF$ are precisely the six neighbors of the unique
monomial $M$ present in $U$ but absent in $\mathcal C$. Both inequalities follow by coefficient extraction from the exact pair
solution.
On $V>0$ these inequalities imply
\begin{equation}
 R-1>\frac{F}{1+(53/18)F}.
 \label{v3-eq-p6-global-bound}
\end{equation}
Unlike the stronger reflected polynomial identity, this inequality is valid for
every ordering of the positive weights. If $\epsilon=R-1<18/53$, then
$F<\epsilon/(1-53\epsilon/18)$. Thus every $R\to1$ sequence satisfies
$B,D\to\infty$ and $(B+D)/A,(B+D)/E\to0$.

Put $\varrho=B^{-1}+D^{-1}$ and $p=D/(B+D)$. Since normalization of the
bridge to one implies $A,E\leq H$ for any contrast cap,
\begin{equation}
 \begin{gathered}
 F\ge3\varrho+\frac{41}{3H\varrho p(1-p)}
       \ge2\sqrt{\frac{41}{Hp(1-p)}}\,,\\
 H>\frac{656}{4p(1-p)}
             \left(\epsilon^{-1}-\frac{53}{18}\right)^2.
 \end{gathered}
 \label{v3-eq-p6-global-budget}
\end{equation}
The last inequality has the preceding small-gap domain. It also proves the
sharp leading imbalance penalty without an assumed bounded leakage.

For derivative control, introduce
$z=(1/B,1/D,B/A,D/A,B/E,D/E)$. The exact semigroup lift expresses $U/M$
and $\mathcal C/M$ as positive polynomials in these six ratios, with constant terms
one and zero, and linear gap term $F$. Their relations are the minors of
$\left(\begin{smallmatrix}z_1&z_4&z_6\\z_2&z_3&z_5\end{smallmatrix}\right)$.
Since $V/M$ equals one at the origin, $R-1=F+O(|z|^2)$ is analytic in an
ambient neighborhood with uniform derivatives. The divided response is
therefore regular at the singular toric vertex.

\subsection{Eventual exact global reflection}
\label{v3-app-exact-global-reflection}
At finite $H$ the normalized contrast domain is compact. A reflected trial
has $R<2$ for large $H$; the competitive sublevel therefore has $V\geq U/2$
and is closed away from the promoter boundary. A minimizer exists. With
$t=H^{-1/2}$ write $B=X/t,D=Y/t,A=a/t^2,E=e/t^2$. The positive global
bound confines every competitor to a compact positive set with $a,e\leq1$.
The uniform expansion is
\begin{equation}
 R=1+t\left[\frac3X+\frac3Y+
       \frac{53X+29Y}{6a}+\frac{29X+53Y}{6e}\right]+O(t^2).
\end{equation}
The leading objective has its unique minimum at
$a=e=1$, $X=Y=3/\sqrt{41}$. Its core derivatives are $-\sqrt{41}$ and
its pin Hessian is $(82\sqrt{41}/9)I_2$. Compactness places every minimizing
sequence near this point. The core derivatives force exact saturation
$A=E=H$, and the pin implicit-function theorem gives a unique branch.
Reversal then forces $B=D$ exactly. This is eventual global weak optimality;
no explicit onset or all-$H$ uniqueness is asserted. The exact critical
branch and its coefficients in Eq.~\eqref{eq:finite-budget-critical-v2}
therefore apply unrestrictedly for sufficiently small $\epsilon$.

\subsection{Full parameter tolerance and its volume}
\label{v3-app-full-tolerance}
Center on that exact branch, with $\theta_*=\epsilon q_*$ and
$\eta_*=\epsilon^2H_*$. For the available cap $H$ define
 \begin{align}
 d&=\epsilon^2(H-H_*),\notag\\
 (\alpha_L,\alpha_R)&=\epsilon^2(H-A,H-E), \label{v3-eq-full-tolerance-coordinates}\\
 \xi_s&=\epsilon[(B+D)/2-q_*],\quad
 \xi_a=\epsilon(B-D)/2.\notag
 \end{align}
The core deficits are nonnegative. Use
$\theta_B=\epsilon B$, $\theta_D=\epsilon D$,
$\eta_A=\epsilon^2A$, $\eta_E=\epsilon^2E$. Let
$g_\epsilon=(R-1)/\epsilon-1$ and
$p_\epsilon=-\partial_{\eta_A}g_\epsilon|_*>0$,
with the other three scaled weights fixed in that derivative.
Exact differentiation gives
\begin{equation}
 \begin{aligned}
 \frac{g_\epsilon}{2p_\epsilon}
 &=-d+\frac{\alpha_L+\alpha_R}{2}+c_s\xi_s^2+c_a\xi_a^2\\
 &\quad+O(m_c^2+m_cm_p+m_p^3),\\
 c_s&=\frac{41}{9}-\frac{2104}{1107}\epsilon+O(\epsilon^2),\\
 c_a&=\frac{41}{9}-\frac{9605}{6642}\epsilon+O(\epsilon^2).
 \end{aligned}
 \label{v3-eq-full-tolerance-jet}
\end{equation}
where $m_c=|d|+\alpha_L+\alpha_R$ and $m_p=|\xi_s|+|\xi_a|$ are the core and pin scales.
The derivatives follow the moving exact critical center. The pin Hessian is
positive, while the two inward core derivatives equal $1/2$ in this
normalization. For nearby fixed core parameters minimize over pins, giving
an analytic minimum $m(d,\alpha_L,\alpha_R)$. Taylor's integral Hessian
and its positive analytic square root put the pin part exactly in the form
$X_s^2+X_a^2$. The core contribution is
$m-m(d,0,0)=\alpha_L\mathfrak a_L(d,\alpha_L,\alpha_R)+\alpha_R\mathfrak a_R(d,\alpha_R)$,
where $\mathfrak a_L,\mathfrak a_R>0$ follow by integrating the respective core derivatives.
The changes $\mu_L=\alpha_L\mathfrak a_L$, $\mu_R=\alpha_R\mathfrak a_R$ are invertible and
preserve both positive faces. The exact local feasible set is consequently
\begin{equation}
 \begin{gathered}
 X_s^2+X_a^2+\mu_L+\mu_R\le\rho_\epsilon(d),\\
 \mu_L,\mu_R\ge0,\qquad\rho_\epsilon(0)=0,\quad\rho_\epsilon'(0)>0.
 \end{gathered}
 \label{v3-eq-exact-morse-corner}
\end{equation}
Global localization and the unique limiting optimum ensure that every weak
feasible design at a sufficiently near-critical cap lies in this chart.

For bridge-fixed log-edge measure $dV_{\log}=dA/A\,dB/B\,dD/D\,dE/E$, put
$\mathfrak r=(H-H_*)/H_*$. Integration of the leading pin ellipse and core
triangle gives
\begin{equation}
 \begin{aligned}
 \mathcal V_{\log}
 &=\frac{4\pi H_*}{3q_*^2\sqrt{c_sc_a}}\mathfrak r^3[1+O(\mathfrak r)]\\
 &=\frac{4\pi}{3}\left[1+\frac{16057}{60516}\epsilon+O(\epsilon^2)\right]
       \mathfrak r^3[1+O(\mathfrak r)].
 \end{aligned}
 \label{v3-eq-full-tolerance-volume}
\end{equation}
The leading physical volume is $4\pi(\Delta H)^3/(3\sqrt{c_sc_a})$; the
logarithmic density divides by $H_*^2q_*^2$. Odd pin terms in the analytic
Jacobian integrate to zero, giving relative $O(\mathfrak r)$ error. This is a
measure-specific design tolerance, not thermodynamic entropy or a change
in the absorbing states.

\subsection{Positive forests globalize the selected neighborhood}
\label{v3-app-selected-global-localization}
For $b/c=1+\epsilon$ and $\kappa=\delta c>0$, with $\kappa<1$
for linear fecundity, use the unnormalized harmonic
generator $\widetilde Q_\kappa=\sum_{ij}F_iP_{ij}(\mathrm{copy}_{ij}-I)$.
On its 62 transient states set $L_\kappa=-\widetilde Q_{\kappa,TT}$.
With $\omega=\frac16\sum_i(\delta_{e_i}+\delta_{\boldsymbol1-e_i})$
and neutral harmonic $h_0=\sum_i\pi_i n_i$, including consensus boundary
values gives $\Delta=\omega L_\kappa^{-1}\widetilde Q_\kappa h_0$.
On a discordant edge denote the types by $C,D$ and put
$\Xi_{ij}=1-(Pn)_C+(Pn)_D>0$. Reciprocity gives
\begin{equation}
 \begin{gathered}\widetilde Q_\kappa h_0=\frac1Z\sum_{\rm discordant}\gamma_{ij}(F_C-F_D),\\
 (f_C-f_D)/c=\epsilon-(1+\epsilon)\Xi_{ij}.\end{gathered}
\end{equation}
For linear fecundity define $\mathfrak u=Z^{-1}\sum \gamma_{ij}$ and
$\mathfrak v=Z^{-1}\sum \gamma_{ij}\Xi_{ij}$ on discordant edges. Both rewards are positive,
and exactly
\begin{equation}
 \begin{gathered}
 \Delta/\kappa=\epsilon\,\mathcal U_\kappa-(1+\epsilon)\mathcal V_\kappa,\\
 \mathcal U_\kappa=\omega L_\kappa^{-1}\mathfrak u,\qquad
 \mathcal V_\kappa=\omega L_\kappa^{-1}\mathfrak v,\\
 \mathcal V_0/\mathcal U_0=1-J/K.
 \end{gathered}
\end{equation}
The last equality follows from
$\mathcal U_0=2K/(6Z)$, $\mathcal V_0=2(K-J)/(6Z)$.
For exponential fecundity insert the positive mean-value secant factor
$\Omega_{ij}=\int_0^1e^{\kappa[(1-t)f_D+t f_C]/c}dt$ into both rewards.

If unnormalized transition-rate ratios lie in $[m,M]$, positive forest
formulas~\cite{AGNamGunawardena2025} make every inverse cofactor
homogeneous of degree 61. The common
degree 62 tree determinant cancels in $\mathcal V_\kappa/\mathcal U_\kappa$.
Positive reward and preparation weights therefore give a lower ratio bound
$(m/M)^{61}\mathcal V_0/\mathcal U_0$ for linear fecundity. Exponential
secant rewards lie between $m$ and $M$ times neutral rewards and add one
power. Using $-1\leq f_i/c\leq1+\epsilon$, selected cooperation requires
\begin{equation}
 \begin{gathered}
 \Gamma_{\rm lin}=\left[\frac{1+\kappa(1+\epsilon)}{1-\kappa}\right]^{61},
 \qquad \Gamma_{\exp}=e^{62\kappa(2+\epsilon)},\\
 1-J/K<\frac{\Gamma\epsilon}{1+\epsilon},\\
 R-1<\frac{\Gamma\epsilon}{1-(\Gamma-1)\epsilon}
 \quad\text{if }\Gamma\epsilon<1+\epsilon.
 \end{gathered}
 \label{v3-eq-selected-global-forest-bound}
\end{equation}
The latter condition first proves $J>0$. Zero advantage uses non-strict
inequalities. These are harmonic bounds; no physical time is inferred from
the removed fecundity denominator.

On fixed compact admissible selection intervals, this gives $R-1=O(\epsilon)$
and forces the complete weak protected hierarchy globally. For small
$\kappa$, it sharpens to $R-1\leq\epsilon[1+O(\kappa)]$. A reflected
selected trial has $\epsilon^2H_{\rm trial}=656+O(\kappa^2+\epsilon)$.
Every selected-feasible competitor with no greater contrast lies in a compact
positive scaling set and converges to $(12,12,656,656)$ as
$(\epsilon,\kappa)\to(0,0)$. Consequently all such competitors lie in one
fixed local analytic chart for $0<\epsilon<\epsilon_0$,
$0<\kappa<\kappa_0$, for some positive bounds.

The comparison can retain complement parity as well. With centered
fecundity, $\widehat\xi_i=f_i/c-\epsilon/2=\xi_i/c$ satisfies
$|\widehat\xi_i|\leq L=1+\epsilon/2$.
The preparation and the neutral rewards are complement invariant, including
$\Xi_{ij}\mapsto\Xi_{ij}$. Pairing each degree-61 forest term with its complement
replaces its linear factors by
$\frac12[\prod_j(1+\lambda\widehat\xi_j)+\prod_j(1-\lambda\widehat\xi_j)]$.
Its multi-affine extrema on $|\widehat\xi_j|\leq L$ occur at box vertices. Balanced
31/30 signs give the minimum $(1-t^2)^{30}$, while equal signs give
$E_{61}(t)=[(1+t)^{61}+(1-t)^{61}]/2$, where $t=L|\lambda|<1$.
For the exponential map, the 61 centered forest payoffs and one secant payoff sum to a
quantity of absolute value at most $62L$; complement pairing gives a
hyperbolic cosine between one and $\cosh(62L\lambda)$. Thus the same inequality
holds with the sharper factors
\begin{equation}
 \widehat\Gamma_{\rm lin}
 =\frac{E_{61}(L\lambda)}{[1-(L\lambda)^2]^{30}},\qquad
 \widehat\Gamma_{\exp}=\cosh(62L\lambda).
 \label{v3-eq-selected-parity-bound}
\end{equation}
Here $\lambda=\kappa/(1+\epsilon\kappa/2)$ for linear fecundity and
$\lambda=\kappa$ for exponential fecundity. Both factors equal
$1+O(\kappa^2)$ uniformly in the weights, so selected success requires
$R-1\leq\epsilon[1+O(\kappa^2)]$ at small selection.

Selected analyticity in this full chart follows from 60
fast states and two slow mixed states. At $\epsilon=0$ each triple aligns
with its frozen pin; the fast killed inverse is bounded. The killed slow Schur
matrix divided by $\epsilon$ tends to $(1/\theta_B+1/\theta_D)I_2$, where
$\theta_B=\epsilon B$ and $\theta_D=\epsilon D$. Uniform singleton
entrance gives $\rho_C=\rho_D=1/6$, so $\Delta$ is divisible by $\epsilon$.
Complement parity also divides it by centered selection
$\lambda=\kappa/(1+\epsilon\kappa/2)$ for linear fecundity, or
$\lambda=\kappa$ for exponential fecundity.

At finite parameters, continuity on the trial-bounded compact weight domain
gives a minimum for the closed target $\Delta\geq0$; strict advantage has
the same critical contrast as an infimum.
The regular response $6\Delta/(\epsilon\lambda)$ increases in both core
weights and has negative pin Hessian near the weak optimum. The least selected
contrast therefore saturates $A=E=H$; pin uniqueness and reversal force
$B=D$ exactly. The selected branch is globally optimal on the stated small
parameter rectangle, with no explicit bounds on its size. At a controlled
nearby cap and with a positive candidate the same argument globalizes the
positive signal maximum. Compact finite selection gives hierarchy entry,
but arbitrary finite-intensity reflection optimality and strong-selection
limits are not asserted.

\subsection{The unrestricted leading time--contrast frontier}
\label{v3-app-global-time-frontier}

The positive global polynomial identity also removes the reflection restriction
from the full leading neutral-time frontier. Every weak feasible sequence
with $R-1\leq\epsilon\to0$ has
$\varrho,a_L,a_R=O(\epsilon)$, so the uniform physical-clock formulas
in Eq.~\eqref{eq:asym-times-v2} apply, with
$\nu_0=1+O(\epsilon)$. At $H=\eta/\epsilon^2$ and on a competitive
time sequence, put $\theta=2\epsilon/\varrho$ and
$z=[4p(1-p)]^{-1}\geq1$. Then
\begin{equation}
 \begin{gathered}
 \epsilon T_U=\theta+O(\epsilon),\qquad
 \epsilon T_C=3\theta+O(\epsilon),\\
 1+O(\epsilon)\geq\frac6\theta+\frac{82z\theta}{3\eta}.
 \end{gathered}
 \label{v3-eq-unrestricted-time-constraint}
\end{equation}
The last inequality follows directly by minimizing the core terms in $F$;
it is uniform in pin imbalance. Thus the fastest permitted leading time
has $z=1$ and lies on the lower branch
\begin{equation}
 \begin{gathered}
 \eta=\frac{82}{3}\frac{\theta^2}{\theta-6},\qquad 6<\theta_-\le12,\\
 \theta_-(\eta)=\frac{3\eta}{164}
          \left[1-\sqrt{1-656/\eta}\right].
 \end{gathered}
 \label{v3-eq-unrestricted-time-frontier}
\end{equation}
For every fixed $\eta>656$, its nonzero slope and reflected matching trials
give the separate global infima
$\mathcal T_U=\theta_-(\eta)/\epsilon+O_\eta(1)$ and
$\mathcal T_C=3\theta_-(\eta)/\epsilon+O_\eta(1)$.
The limiting endpoint $\eta=656$ requires the exact finite-$\epsilon$
critical budget and the following boundary-layer analysis.

Without a cap, $F\geq3\varrho$ and the global bound give
$2/\varrho\geq6/\epsilon-O(1)$. Since
$k(\ell)=(1+\ell)/(1+3\ell+\ell^2)\leq1$, $\nu_0\leq1$;
the uniform time formulas give global lower bounds
$6/\epsilon-O(1)$ and $18/\epsilon-O(1)$.
Reflected trials with $q=6/\epsilon+O(1)$ and
$Q\asymp\epsilon^{-3}$ meet the target after an $O(1)$ upward pin
adjustment and attain the matching times. The separate unrestricted
time-only infima are therefore $6/\epsilon+O(1)$ and
$18/\epsilon+O(1)$. These are neutral times constrained by a weak
cooperation target, not selected-time optima.

\subsection{The global weak time boundary layer}
\label{v3-app-global-time-layer}
Let $d=\epsilon^2(H-H_*)=\omega\epsilon+O(\epsilon^2)$ at fixed
$\omega>0$. The exact lower-root optimizer in
Appendix~\ref{app:v4-exact-time}, combined with
Eq.~\eqref{eq:app-fold-root-v2}, gives the global boundary layer.
The physical time coordinate is the harmonic mean
\begin{equation}
 \frac2\varrho=\frac{2BD}{B+D}
 =\frac{\theta_*+\xi_s}{\epsilon}
       -\frac{\xi_a^2}{\epsilon(\theta_*+\xi_s)}.
\end{equation}
Since $T_U=2/\varrho+O(1)$ and $T_C=6/\varrho+O(1)$ in this chart,
only the symmetric root displacement contributes at order
$\epsilon^{-1/2}$. Hence
\begin{equation}
 \mathcal T_U=\frac{12}{\epsilon}-\frac{3\sqrt\omega}{\sqrt{41\epsilon}}
                 +O_\omega(1),\quad
 \mathcal T_C=\frac{36}{\epsilon}-\frac{9\sqrt\omega}{\sqrt{41\epsilon}}
                 +O_\omega(1).
 \label{v3-eq-global-time-layer}
\end{equation}
The two objectives are minimized separately. When $\omega=O(\epsilon)$,
the square-root shift is $O(1)$, at the order omitted here. With a
uniformly bounded budget remainder the additive estimate can remain
uniform down to zero; the exact lower root resolves the shift and
includes the endpoint.

\section{Global contrast and time bounds on the four-cycle and five-site path}
\label{app:resource-comparison}

This appendix proves the two global comparisons used in
Eqs.~\eqref{eq:c4-cost} and \eqref{eq:p5-cost} and in
Table~\ref{tab:cost}. The optimization always includes the promoter
condition \(J>0\), uniform singleton preparation, and the neutral
attempted-event clock. A prescribed threshold tolerance is denoted by
\(\epsilon>0\); it need not equal the actual gap of an arbitrary
admissible graph. Exact polynomial identities establish the global
inequalities. Minimization and controlled elimination of fast
states then establish the sharp asymptotic consequences.

\subsection{An all-weight polynomial identity for the four-cycle}

Normalize a smallest cyclic edge to one and write the weights as
\((1,B,C,D)\). Put \(S=B+D\) and \(T=BD\). The exact pair system
Eq.~\eqref{eq:pair-system}, equivalently its configuration-space Poisson
equation, gives
\begin{equation}
 R-3=\frac{4g(C,S,T)}{\beta(C,S,T)}.
 \label{eq:app-c4-ratio}
\end{equation}
Here \(g\) and \(\beta\) are integer polynomials of total degree 17
with 540 and 541 nonzero monomials, respectively. All coefficients of
\(g\) and the cost numerator \(3\beta+4g\) are positive. The exact
identity \(K\beta=J(3\beta+4g)\), with \(K>0\), therefore establishes
that \(\beta\) has the sign of \(J\), including its zero locus.
For reconstruction, substitute the cycle kernel into the pair system,
contract with $\gamma$ and $\mathsf B$ as in Eq.~\eqref{eq:JK}, and reduce
$(K-3J)/J$ to the integer-polynomial fraction $4g/\beta$.

The useful polynomial identity concerns
\begin{equation}
 L(C,S,T)=4CTg-(8CS+6T^2)\beta.
 \label{eq:app-c4-remainder}
\end{equation}
After reflecting the cycle, there are three exhaustive order chambers.
In each, parameterize the largest, middle and smallest of \(B,C,D\)
by \(1+X+Y+Z,1+X+Y,1+X\), respectively, with \(X,Y,Z\ge0\).
Exact expansion determines the signs in Table~\ref{tab:c4-chamber-signs}. 

\begin{table}[tb]
\caption{Exact coefficient signs in the three weight-order chambers of the four-cycle. The counts include the nonzero coefficients after the substitutions in the text.}
\label{tab:c4-chamber-signs}
\centering
\small
\vspace{0.5em}
\begin{tabular}{@{}l@{\hspace{1em}}l@{\hspace{1em}}l@{}}
\toprule
Chamber & Polynomial & Coefficients\\
\midrule
\(C\ge B\ge D\ge1\)&\(L(C,B+D,BD)\)&2952 positive\\
\(B\ge C\ge D\ge1\)&\(\beta(C,B+D,BD)\)&2041 negative\\
\(B\ge D\ge C\ge1\)&\(\beta(C,B+D,BD)\)&2301 negative\\
\bottomrule
\end{tabular}
\end{table}
The constant terms are \(60129542144\), \(-2147483648\), and
\(-2147483648\), respectively. Every sign is therefore strict even
on chamber boundaries. The last two chambers exclude adjacent extreme
edges from the promoter domain, including ties allowing such a choice.
Every promoter consequently has an opposite-extrema representation
with \(C\ge B,D\ge1\). Dividing the first polynomial identity by
\(CT\beta>0\) proves
\begin{equation}
 R-3>8(B^{-1}+D^{-1})+6BD/C.
 \label{eq:app-c4-global-bound}
\end{equation}
In particular \(R>3\) for every finite positive weighting with \(J>0\).

With \(p=\sqrt{BD}\), the right side is at least
\(16/p+6p^2/C\). Its minimum occurs at \(p^3=4C/3\), giving
the lower bound in Eq.~\eqref{eq:c4-cost} because \(C\le H\).
The reflection family \((1,q,H,q)\) attains the matching upper
construction. In the independent small variables \(u=q^{-1}\) and
\(v=q^2/H\), its exact threshold has the regular expansion
\begin{equation}
 R=3+16u+6v+64u^2+99uv+6v^2+O((u+v)^3).
 \label{eq:app-c4-expansion}
\end{equation}
Taking \(q=(4H/3)^{1/3}\) gives
\begin{equation}
 R_{\min}^{C_4}(H)=3+12\sqrt[3]6\,H^{-1/3}+O(H^{-2/3}).
 \label{eq:c4-optimum}
\end{equation}
The necessary contrast \(H>10368\epsilon^{-3}\) is strict for
every graph satisfying \(R-3\le\epsilon\), and its coefficient is
asymptotically attainable.

The global inequality also prevents a competing optimum from escaping
into another weight hierarchy. Write \(t=H^{-1/3}\) and
\((B,C,D)=(x/t,h/t^3,y/t)\). If \(R-3\le Kt\), then
\begin{equation}
 \frac8K\le x,y\le\frac{K^2}{48},\qquad
 \frac{384}{K^3}\le h\le1.
 \label{eq:app-c4-localization}
\end{equation}
The exact scaled objective converges with derivatives on this compact
positive set to \(8/x+8/y+6xy/h\). Its unique minimum is
\(h=1\), \(x=y=(4/3)^{1/3}\). Its Hessian in \(x,y\) is
\(\left(\begin{smallmatrix}12&6\\6&12\end{smallmatrix}\right)\).
Compact localization and the implicit-function argument therefore imply
eventual uniqueness up to scale and cycle symmetry; reflection forces
the exact large-\(H\) optimizer to have \(B=D\). No numerical
onset of this eventual statement is required here.

\subsection{The four-cycle time reduction and joint optimum}

For the same scaled weights, sites 1 and 2 are pinned by the limiting
fast process. Its two mixed recurrent classes have stationary measures
uniform on binary configuration codes \(\{1,9,13\}\) and
\(\{14,6,2\}\), where a code is \(\sum_i2^{i-1}n_i\).
Let \(A(t)=A_0+tA_1+O(t^2)\) be the backward generator killed at
consensus, \(E\) the two mixed-pinned-sector indicator columns, and
\(V\) the two stationary rows. Direct evaluation gives
\begin{gather}
 A_0E=0,\quad VA_0=0,\quad VE=I_2,\qquad
 VA_1E=-\ell I_2,\notag\\
 \ell=\tfrac14(x^{-1}+y^{-1}),\qquad
 (-A(t))^{-1}=\frac{EV}{t\ell}+O(1).
 \label{eq:app-c4-resolvent}
\end{gather}
The fast nonzero eigenvalues are \(-1/4,-3/4,-1\), each with
multiplicity four; the double zero eigenvalue is semisimple. Inverting
the fast block and taking its Schur complement proves the uniform
Laurent estimate on every compact positive scaling set
\cite{SlowAvrachenkovHaviv2004}. Physically, the weak pair aligns at
rates \(t/(4x)\) and \(t/(4y)\).

Uniform singleton preparation enters the two mixed sectors with masses
\(1/4,1/4\). The sources for total and successful first moments are
\(\one\) and the fixation committor, respectively. The neutral
fixation probability of this preparation is exactly \(1/4\).
Thus, with \(U=2xy/(x+y)\),
\begin{equation}
 T_U=U/t+O(1),\qquad T_C=2U/t+O(1).
 \label{eq:app-c4-times}
\end{equation}
At the contrast optimum these give the constants 24 and 48 in
Table~\ref{tab:cost}.

For the fixed-budget time problem, set \(H=\eta\epsilon^{-3}\)
and require \(J>0\), \(R-3\le\epsilon\). Define
\(z^2=(x+y)^2/(4xy)\ge1\). The leading threshold coefficient is
\begin{equation}
 F=16/U+6z^2U^2/h\ge16/U+6U^2.
 \label{eq:app-c4-reduced-objective}
\end{equation}
Equation~\eqref{eq:app-c4-global-bound} localizes every feasible graph,
not only minimizers. It gives \(F<\eta^{1/3}\) exactly. The smallest
limiting \(U\) lies on the decreasing branch of
\(16/U+6U^2=\eta^{1/3}\). Setting \(\theta=\eta^{1/3}U\)
gives the first curve in Eq.~\eqref{eq:joint-frontiers}.
Equality in the limiting problem requires \(x=y\), \(h=1\).
For fixed \(\eta>10368\), the derivative at the lower root is
nonzero. An analytic correction of the reflection trial enforces the
exact target and proves separate optimal times
\(T_U=\theta_-/\epsilon+O_\eta(1)\) and
\(T_C=2\theta_-/\epsilon+O_\eta(1)\).

The time-only constant requires control beyond bounded scaled budgets.
Put \(a=B^{-1}\), \(d=D^{-1}\), \(z=BD/C\), and
\(\rho=a+d\). Equation~\eqref{eq:app-c4-global-bound} gives
\(R-3>8\rho+6z\). The exact projection on the same fast classes is
\begin{equation}
 \begin{gathered}
 VAE=-\bar\ell I_2,\\
 \bar\ell=\frac14\left[
 \frac a{1+a}+\frac d{1+d}
 +\frac13\left(\frac{za}{1+za}+\frac{zd}{1+zd}\right)\right].
 \end{gathered}
 \label{eq:app-c4-exact-escape}
\end{equation}
The factor \(1/3\) is the probability of disagreement with a strong
neighbor in the fast stationary class. At fixed \(z\) in a bounded
interval, the generator is analytic in \(\rho\), uniformly for
\(a/\rho\in[0,1]\). Its fixed fast block is invertible and its
off-diagonal slow--fast blocks are \(O(\rho)\). Since
\(\bar\ell\ge\rho/[4(1+\rho)]\), block inversion gives
\((-A)^{-1}=EV/\bar\ell+O(1)\), including extreme asymmetry.
The two sources then give \(T_U=(2\bar\ell)^{-1}+O(1)\) and
\(T_C=\bar\ell^{-1}+O(1)\). Using
\(\bar\ell\le\rho(1+z/3)/4\) gives the lower constants
\(16/\epsilon\) and \(32/\epsilon\).
The trial \((1,q,q^4,q)\), with \(q=16/\epsilon+5\), has
\(R-3=\epsilon-5\epsilon^2/128+O(\epsilon^3)\), and attains
the matching upper bounds with \(O(1)\) remainders. Together they
determine the leading infima as \(\epsilon\to0\), with bounded
additive errors. Whether a finite weighting attains the exact time
infimum at a prescribed nonzero \(\epsilon\) remains undetermined. 

\subsection{The oriented five-site polynomial identity and its consequence}

Reversing and renormalizing \((A,1,q,Q)\) gives
\((Q/q,1,1/q,A/q)\). We may therefore impose \(q\ge1\)
without losing any positive weighting or changing the uniformly
prepared observables. Introduce
 \begin{gather}
 u=q^{-1}\in(0,1],\qquad e=q/A>0,\qquad w=A/Q>0, \notag \\
 uew=Q^{-1}\ge H^{-1}.  \label{eq:app-p5-coordinates}
\end{gather}
The letter \(e\) here is a dimensionless ratio, not an edge label.
The exact pair solution in Appendix~\ref{app:path-five} gives
\(R=\mathscr P(u,e,w)/\mathscr Q(u,e,w)\), with
\(\mathscr Q(0,0,0)=4356\). The numerator has positive coefficients;
therefore the denominator is positive precisely on the promoter domain.
At the origin,
\begin{equation}
 R=7+\tfrac{83}{2}u+16e+\tfrac{83}{3}w
       +O((u+e+w)^2).
 \label{eq:app-p5-local}
\end{equation}
For \(\mathscr E=\mathscr P-
[7+(83/2)u+16e+(83/3)w]\mathscr Q\), exact expansion gives
\begin{align}
 \mathscr E={}&\mathscr E_+
 +16434u^7ew(1-u)+180774u^6w(1-u)\notag\\
 &+120516u^4w^2(1-u^2)+85822u^4w^2(1-u).
 \label{eq:app-p5-domination}
\end{align}
Every nonzero coefficient of \(\mathscr E_+\) is positive, including
the coefficient \(361548\) of \(w^2\). This proves strict positivity
on the complete oriented domain, even at \(u=1\). Without orientation,
the four negative monomials in Eq.~\eqref{eq:app-p5-domination}
would prevent a coefficientwise positivity argument. Division by \(\mathscr Q>0\)
establishes Eq.~\eqref{eq:p5-cost}. AM--GM applied to its three
terms gives \((R-7)^3>496008/Q\ge496008/H\).

Write \(a_5=83/2\), \(b_5=16\), \(c_5=83/3\), and
\(\ell=(a_5b_5c_5)^{1/3}\). With
\(Q=H=t^{-3}\), \(q=x/t\), \(A=\alpha/t^2\), the leading
gap is \(t[a_5/x+b_5x/\alpha+c_5\alpha]\).
Its three terms are equal at
\(x=a_5/\ell\), \(\alpha=\ell/c_5\), giving
\begin{equation}
 R_{\min}^{P_5}(H)=7+\sqrt[3]{496008}\,H^{-1/3}
                         +O(H^{-2/3}).
 \label{eq:app-p5-optimum}
\end{equation}
This upper construction matches the all-weight lower bound. The latter
also gives \(u,e,w=O(t)\) for every competitive graph; their product
is at least \(t^3\), so each is bounded below by a positive multiple
of \(t\). Thus the expansion is uniform over a single compact positive
scaling sector, and \(1<q<A<Q\) eventually. This ordering follows from
the global bound after reversal to \(q\ge1\).

\subsection{Five-site time control and joint optimization}

The exact five-site kernel is regular in \((u,e,v)\), with
\(v=q/Q=ew\). At the origin its four fast closed classes are
\((l,l,r,r,r)\), \(l,r\in\{0,1\}\); the remaining 28 states
have a bounded transient inverse. The effective rows leaving either
mixed class have an exact factor \(u\). After division by \(u\),
both neutral total exit rates tend to \(1/10\).
Uniform singleton preparation reaches each mixed class with probability
\(1/5\). Either left-dimer singleton establishes with probability
\(1/2\), while only the protected right singleton survives the fast
layer. The remaining mass \(3/5\) rapidly reaches all defection.
Only the right-cooperative mixed class fixes cooperation at this corner.
These facts give mean absorption \((2/5)(10/u)\) and successful
first moment \((1/5)(10/u)\); dividing the latter by the exact
neutral fixation probability \(1/5\) gives
\begin{equation}
 T_U=q[4+O(u+e+v)],\quad
 T_C=q[10+O(u+e+v)].
 \label{eq:app-p5-times}
\end{equation}
The bounded fast inverse, the exact factor \(u\), and the positive
limiting mixed-class exit rates make the scaled Poisson solutions
analytic on a fixed neighborhood. The errors are therefore uniform
without fixing ratios of the small coordinates. In particular, the
global threshold constraint forces \(u,e,w=O(\epsilon)\),
\(v=O(\epsilon^2)\), throughout its admissible domain.

For \(H=\eta\epsilon^{-3}\), balancing the last two terms of
Eq.~\eqref{eq:p5-cost} at fixed \(q\) gives
\begin{equation}
 R-7>\frac{a_5}{q}+2\sqrt{b_5c_5q/H}.
 \label{eq:app-p5-fixed-q}
\end{equation}
Set \(\theta_q=4\epsilon q\). Every feasible graph obeys
\begin{equation}
 1>\frac{166}{\theta_q}
       +\sqrt{\frac{1328\theta_q}{3\eta}},\qquad
 \theta_q>166.
 \label{eq:app-p5-feasibility}
\end{equation}
Squaring with the positive remaining side produces the second curve
in Eq.~\eqref{eq:joint-frontiers}. Its derivative is
\((1328/3)\theta^2(\theta-498)/(\theta-166)^3\), so its minimum
is \(496008\) at \(\theta=498\). For fixed larger \(\eta\),
let \(\theta_-\in(166,498)\) denote the lower root. The uniform
time estimate proves the global leading lower bounds. For the upper
bound take \(Q=H\), \(b_5e=c_5w\), and
\(q=\theta_-/(4\epsilon)+O(1)\). In the variable
\(x=\epsilon q\), the leading threshold function
\(a_5/x+2\sqrt{b_5c_5x/\eta}\) has nonzero derivative at this
root. Its analytic correction therefore enforces the exact target,
giving separate infima \(\theta_-/\epsilon+O_\eta(1)\) and
\(5\theta_-/(2\epsilon)+O_\eta(1)\).
At the least-contrast design the limiting root is 498, giving
the constants 498 and 1245 in Table~\ref{tab:cost}.

Without a contrast bound, Eq.~\eqref{eq:p5-cost} gives
\(q>a_5/\epsilon\). Equation~\eqref{eq:app-p5-times} then gives
the lower leading constants 166 and 415. A matching strictly feasible
trial is \(q=a_5/\epsilon+7\), \(e=w=\epsilon^3\), for which
\begin{equation}
 R=7+\epsilon-\frac{4537}{75779}\epsilon^2+O(\epsilon^3).
 \label{eq:app-p5-time-trial}
\end{equation}
Its time corrections are \(O(1)\), proving the time-only entries.
For both graphs the leading endpoint budgets
$10368\epsilon^{-3}$ and $496008\epsilon^{-3}$ are infeasible at
finite \(\epsilon\); the joint-frontier errors are for each fixed
budget above that endpoint. Neither a uniform endpoint limit nor a
finite-selection time optimum is implied.

\subsection{Why the cycle is minimal in sites and edges}

The main-text singleton argument excludes all graphs on at most three
sites, while Appendix~\ref{app:path-small} excludes \(P_4\).
The only other connected simple graph with four sites and three edges
is the star. Its obstruction is pointwise. If its central
state is \(n_0\) and \(m=(Pn)_0\) is the weighted leaf average,
every leaf \(i\) has \((Pn)_i=n_0\). Thus every edge contribution
to the benefit Dirichlet form is
\begin{equation}
 \gamma_{0i}(m-n_0)(n_0-n_i)\le0.
 \label{eq:app-star-obstruction}
\end{equation}
In a mixed configuration at least one term is strict. Together with
the negative cost drift, this excludes all positive star weights.
A connected graph with fewer than four edges has at most four sites,
so the promoting cycle is minimal in both counts. This argument does
not classify four-site supports with four or more edges.

\section{The selected establishment correction}
\label{app:selected-boundary-v2}

This appendix derives the local finite-selection boundary in
Eq.~\eqref{eq:selected-boundary-v2}. We use the consecutive
path labeling of the main text, so the pins are sites 3 and 4.
Set \(x=r-1\), \(r=b/c\), \(\kappa=\delta c\),
\(u=1/q\), and \(v=q/Q\). Linear fecundity requires
\(0<\kappa<1\); exponential fecundity allows any fixed
finite \(\kappa>0\). Error estimates are uniform on compact
intervals inside these domains.

At \(x=u=v=0\), every aligned-state payoff vanishes.
The leading interdomain committor is neutral, although
the transient establishment dynamics is still selected.
On all configurations its harmonic extension is
\begin{equation}
 g_0(n)=\frac{n_3+n_4}{2}.
 \label{eq:app-boundary-gzero-v2}
\end{equation}
The limiting kernel \(P_0\) has unit entries in the
directions \(1\to2\), \(2\to1\), \(3\to2\),
\(4\to5\), \(5\to6\), and \(6\to5\).
Every fast event preserves the two pins, and the
other 60 states form four transient blocks of
size 15.

\subsection{Fast Poisson equations and initial preparation}

Let \(F_i^0(n)=\phi[\kappa(-n_i+(P_0n)_i)]\).
For the harmonic calculation it is convenient to
multiply each backward-generator row by its total
fecundity. The resulting fast operator is
\begin{equation}
 (\widetilde A_0 f)(n)=
 \sum_{ij}(P_0)_{ij}F_i^0(n)
       [f(n^{j\leftarrow i})-f(n)].
 \label{eq:app-boundary-rescaled-generator-v2}
\end{equation}
The row factor is positive and leaves harmonic
functions unchanged. It will not be used to
identify a physical time.

Let \(w_u,w_v\) vanish on the aligned sector and
describe the first derivative of the fast harmonic
extension at fixed aligned boundary values.
They solve
\begin{equation}
 \widetilde A_{0,FF}w_\xi=-b_\xi,\qquad
 w_\xi|_S=0,\qquad \xi=u,v,
 \label{eq:app-boundary-fast-poisson-v2}
\end{equation}
with explicit sources
\begin{align}
 b_u(n)&=\tfrac12[F_3^0(n)-F_4^0(n)](n_3-n_4),\notag\\
 b_v(n)&=\tfrac12\{F_2^0(n)(n_2-n_3)
                         +F_5^0(n)(n_5-n_4)\}.
 \label{eq:app-boundary-sources-v2}
\end{align}
Only a pin-changing first-order event contributes.
A derivative of fecundity or of the global
normalization multiplies the zero increment of
\(g_0\) for each leading fast event and therefore
drops out of these sources. The equations can
be solved separately in the four fast blocks
using exact rational arithmetic.

The aligned rows of the normalized generator,
at \(x=0\), have the expansion
\(A_{S,*}=uR_0+u^2R_u+O(u^3)\).
They are independent of \(v\), because internal
copies are silent in an aligned triple.
For either fecundity map, their two
configuration-changing rates expand as
\begin{equation}
 \begin{gathered}
 k_C=\frac u6-\frac{1+\kappa}{6}u^2+O(u^3),
 \\
 k_D=\frac u6+\frac{\kappa-1}{6}u^2+O(u^3).
 \end{gathered}
 \label{eq:app-boundary-cross-expansion-v2}
\end{equation}
Their exact expressions differ.
For linear fecundity they are
\(u[1\mp\kappa u/(1+u)]/[6(1+u)]\).
For exponential fecundity they are
\(u e^{\mp\zeta}/[(1+u)(4+2\cosh\zeta)]\),
where \(\zeta=\kappa u/(1+u)\).

The leading killed mixed-state generator is
\((-I/3)\). Expanding its harmonic equation
therefore gives
\begin{equation}
 h_{M,u}=3(R_ug_0+R_0w_u)_M,\quad
 h_{M,v}=3(R_0w_v)_M ,
 \label{eq:app-boundary-slow-derivatives-v2}
\end{equation}
where \(M=\{10,01\}\), and the derivatives
vanish at both consensus states. Let \(E\)
be the limiting entrance map selecting the
initial pin assignment. The full derivative
is \(w_\xi+Eh_{S,\xi}\), so both the initial
transient and the changed competition
probability enter the observable.

To keep the preparation normalization explicit,
define the measure
\begin{equation}
 \omega_U=\frac16\sum_{i=1}^6
       (\delta_{e_i}+\delta_{\one-e_i}),\qquad
 \Delta=\omega_Uh_C-1.
 \label{eq:app-boundary-preparation-v2}
\end{equation}
This measure has total mass two; it combines
the two experiments used in the difference
of fixation probabilities.
In aligned-state order \(00,10,01,11\),
\(\omega_UE=(2/3,1/3,1/3,2/3)\).
Consequently
\(\partial_\xi\Delta=\omega_Uw_\xi+
\omega_UEh_{S,\xi}\).

\subsection{Coefficients and the local zero}

For linear fecundity, solving
Eq.~\eqref{eq:app-boundary-fast-poisson-v2}
over \(\mathbb Q(\kappa)\) separates the
contributions to \(-\partial_\xi\Delta\) as
\begin{align}
 c_{u,\mathrm{prep}}
 &=\frac{2\kappa(1+\kappa^2)}{3(1-\kappa^2)^2},
 &c_{u,\mathrm{slow}}&=\frac{\kappa}{3}, \label{eq:app-boundary-decomposition-v2}\\
 c_{v,\mathrm{prep}}
 &=\frac{\kappa(\kappa^4+2\kappa^2+29)}
 {(9-\kappa^2)(1-\kappa^2)^2},
 &c_{v,\mathrm{slow}}&=\frac{4\kappa}{3(1-\kappa^2)^2}. \notag
\end{align}
Multiplication of each summed coefficient
by \(6/\kappa\) gives
\begin{equation}
 A_\kappa=\frac{2(\kappa^4+3)}{(1-\kappa^2)^2},\qquad
 B_\kappa=\frac{2(3\kappa^4+2\kappa^2+123)}
                 {(9-\kappa^2)(1-\kappa^2)^2}.
 \label{eq:selected-linear-coefficients-v2}
\end{equation}
The derivative with respect to \(x\) follows
from Eq.~\eqref{eq:selected-maps} and equals
\(\kappa/6\).

For exponential fecundity, the fast dimensionless payoffs
at \(r=1\) are \(-1,0,1\), so the calculation
is rational in \(z=e^\kappa\), with the
slow-rate derivative also depending on
\(\kappa\). Writing \(C_\kappa=\cosh\kappa\),
the result is
\begin{align}
 A_\kappa^{\exp}
 &=2+\frac{2\sinh(2\kappa)(2C_\kappa-1)}{\kappa},
 \notag\\
 B_\kappa^{\exp}
 &=\frac{2\sinh\kappa}{\kappa}
 \frac{(4C_\kappa-1)(32C_\kappa^3+8C_\kappa^2+1)}
      {8C_\kappa^2+1}.
 \label{eq:app-exponential-coefficients-v2}
\end{align}
In particular,
\(c_{u,\mathrm{prep}}^{\exp}
=\sinh(2\kappa)(2\cosh\kappa-1)/3\),
while \(c_{u,\mathrm{slow}}^{\exp}=\kappa/3\).
The preparation contribution is the source
of the large selected penalty. Both maps
have removable zero-selection limits
\(A_0=6\) and \(B_0=82/3\), in agreement
with the independent weak-selection
expansion in Eq.~\eqref{eq:p6-expansion}.

The fast inverse is analytic near the corner,
and the exact factor \(u\) in the aligned
rows leaves an invertible mixed-state
operator after division by \(u\).
The response therefore has a regular
expansion in \(x,u,v\). Its nonzero
\(x\) derivative \(\kappa/6\) gives a unique
local zero by the implicit-function
argument, proving
Eq.~\eqref{eq:selected-boundary-v2}.
The response is positive just above this
zero. Within the quadratic remainder of
this first-order branch, a
higher-order expansion or the full
finite-state committor is needed to
determine the sign.

The dependence on selection also
delimits the uniformity. For the linear
map, \(A_\kappa\sim2/(1-\kappa)^2\)
and \(B_\kappa\sim8/(1-\kappa)^2\)
as \(\kappa\to1^-\).
For the exponential map,
\(A_\kappa^{\exp}\sim e^{3\kappa}/\kappa\)
and \(B_\kappa^{\exp}\sim4e^{3\kappa}/\kappa\)
as \(\kappa\to\infty\).
These are sequential limits of the
coefficient functions, not joint
hierarchy/strong-selection error bounds.

For fixed sufficiently small $\kappa$ and large contrast $H$, put
$q=y\sqrt H$ and $Q=h_HH$, with $y,h_H$ in a compact positive set
and $h_H\le1$. The leading gap is
$H^{-1/2}(A_\kappa/y+B_\kappa y/h_H)$.
It is minimized at $h_H=1$ and $y=\sqrt{A_\kappa/B_\kappa}$.
Thus the hierarchy-sector minimum satisfies
\begin{equation}
 \begin{gathered}
 q_{H,\kappa}\sim\sqrt{A_\kappa/B_\kappa}\sqrt H,\\
 r_{\rm crit}-1=2\sqrt{A_\kappa B_\kappa}\,H^{-1/2}+O(H^{-1}).
 \end{gathered}
 \label{eq:selected-design-v2}
\end{equation}
The positive-forest comparison in
Appendix~\ref{v3-app-selected-global-localization} proves global
optimality for sufficiently small positive target gap and selection
intensity, without requiring a relation between their rates of approach
to zero.

The generic fast-block equations and their Poisson residuals were evaluated
exactly for both maps. An independent construction in consecutive path order
solves the full killed 62-state committor at 60 decimal digits, including
parameters on both sides of the leading boundary. These finite checks
corroborate the coefficients and clock conventions; local control follows
from the analytic inverse argument above.

\section{Joint hierarchy and selection expansion}
\label{app:v3-joint-selection}

We derive the joint analytic expansion near the critical budget and
use it to determine the selection window, signal and observation cost.
Appendix~\ref{v3-app-selected-global-localization} places every globally
competitive design in this analytic neighborhood for sufficiently small
positive gap and selection intensity.

\subsection{Centered selection and complement symmetry}
\label{par:v3-joint-selection-coordinates}
Set $b/c=1+\epsilon$, $\kappa=\delta c$, and use the normalized
weights $(A,B,1,D,E)$. On the reflected branch write
$Q=\eta/\epsilon^2$, $q=\theta/\epsilon$. For linear and exponential
fecundity, use the centered intensity and inverse map of
Eq.~\eqref{eq:v3-main-centered-selection}.
\labelalias{eq:v3-joint-centered-selection}{eq:v3-main-centered-selection}
The centered dimensionless payoff
$\widehat\xi_i=f_i/c-\epsilon/2=\xi_i/c$ changes sign under $n\mapsto\mathbf1-n$.
The linear fecundity is a common positive factor times
$1+\lambda\widehat\xi_i$; this factor cancels in the normalized rates.
For exponential fecundity the common exponential factor cancels in
the same way. Complementing the two preparation experiments therefore
maps $\lambda$ to $-\lambda$ and makes $\Delta$ exactly odd in
$\lambda$. Positive linear fecundity is required in every microscopic
configuration. The local regime stays strictly inside $0\le\kappa<1$.

Let $H_*(\epsilon),q_*(\epsilon)$ denote the exact weak critical
branch, rather than any truncated expansion. Its scaled coordinates are
\begin{equation}
 \theta_*=12+\frac{26348}{5043}\epsilon+O(\epsilon^2),
 \quad
 \eta_*=656+\frac{62032}{123}\epsilon+O(\epsilon^2).
 \label{eq:v3-joint-selected-reference}
\end{equation}
We use $d=\epsilon^2(H-H_*)$, and the pin deviations and core deficits in Eq.~\eqref{eq:v3-main-error-coordinates}. Exact centering is needed whenever the surplus is smaller
than a neglected correction to $H_*$.

\subsection{Analyticity at the hierarchy corner}
\label{par:v3-joint-selected-analyticity}
A single cooperative interval remains an interval on a path.
For six sites there are twenty transient intervals, or eleven
reflection orbits. At $\epsilon=0$, ten orbits relax rapidly and
one mixed aligned orbit is slow. Cancel the total-fecundity factor
in the harmonic equation only, then eliminate the fast block.
Its scalar slow coefficient is
\begin{equation}
 L_{\rm eff}=-\frac{2\epsilon}{\theta}+O(\epsilon^2).
 \label{eq:v3-joint-slow-divisor}
\end{equation}
Dividing by $\epsilon$ gives an invertible analytic problem near the
weak critical point. Uniform singleton preparation has
$\rho_C=\rho_D=1/6$ at the corner. The resulting $\Delta$ has a
factor $\epsilon$, and odd complement parity gives a factor $\lambda$.
Thus
\begin{equation}
 G=\frac{6\Delta}{\epsilon\lambda}
 \label{eq:v3-joint-normalized-signal}
\end{equation}
extends analytically and is even in $\lambda$. The full asymmetric
proof uses sixty fast configurations and two mixed aligned states.
Its divided killed slow block is
$(1/x+1/y)I_2$ to leading order, with $x=\epsilon B$ and
$y=\epsilon D$. Both leading pin payoffs vanish at $b=c$.
The two mixed-state committors are $y/(x+y)$ and $x/(x+y)$;
their singleton weights are $1/6$ each. This proves the same
divisibility and joint analyticity without assuming reflection.

For a reproducible coefficient construction, let $L$ denote the
unnormalized backward harmonic generator, with its negative killed
block convention. Expand it as $L=\sum_j\epsilon^jL_j(\lambda)$. On the reflected interval
space $h_0=(n_3+n_4)/2$. Let $R_F=L_{0,FF}^{-1}$ and let $H_F$
be the fast harmonic extension of the slow basis vector, extended
by one on that slow state and zero on the absorbing boundaries.
Then the successive coefficients can be constructed as
\begin{equation}
 \begin{aligned}
 W_{1,F}&=-R_F(L_1h_0)_F,\qquad W_{1,S}=0,\\
 a_1&=-\frac{(L_1W_1+L_2h_0)_S}{-2/\theta},\\
 h_1&=W_1+H_Fa_1,\\
 W_{2,F}&=-R_F(L_1h_1+L_2h_0)_F,\quad W_{2,S}=0,\\
 a_2&=-\frac{(L_1W_2+L_2h_1+L_3h_0)_S}{-2/\theta},\\
 h_2&=W_2+H_Fa_2.
 \end{aligned}
 \label{eq:v3-joint-selected-recursion}
\end{equation}
The additional slow solvability equation fixes the coefficient left
undetermined by the singular leading matrix. All equations may be solved over the rational Laurent ring in
$\theta,\eta$.

\subsection{Mixed hierarchy and selection coefficients}
\label{par:v3-joint-selected-jet}
Writing $G=g_{00}+\epsilon g_{10}
+\lambda^2(g_{02}+\epsilon g_{12})+O(\epsilon^2)+O(\lambda^4)$,
the common weak coefficients are
\begin{equation}
 \begin{aligned}
 g_{00}&=1-\frac6\theta-\frac{82\theta}{3\eta},\\
 g_{10}&=-\frac6\theta+\frac{26}{\theta^2}-\frac{8\theta}{\eta}
             +\frac{308}{9\eta}+\frac{262\theta^2}{3\eta^2}.
 \end{aligned}
 \label{eq:v3-joint-weak-jet}
\end{equation}
For linear fecundity,
\begin{equation}
 \begin{aligned}
 g_{02}^{\rm lin}&=-\frac{12}\theta-\frac{1570\theta}{27\eta},\\
 g_{12}^{\rm lin}&=-\frac{20}\theta+\frac{168}{\theta^2}
 -\frac{1699\theta}{27\eta}+\frac{8402}{27\eta}+\frac{12610\theta^2}{27\eta^2}.
 \end{aligned}
 \label{eq:v3-joint-linear-jet}
\end{equation}
whereas exponential fecundity gives
\begin{equation}
 \begin{aligned}
 g_{02}^{\exp}&=-\frac{20}{3\theta}-\frac{967\theta}{27\eta},\\
 g_{12}^{\exp}&=-\frac{12}\theta+\frac{358}{3\theta^2}
 -\frac{1733\theta}{54\eta}+\frac{5725}{27\eta}+\frac{8725\theta^2}{27\eta^2}.
 \end{aligned}
 \label{eq:v3-joint-exp-jet}
\end{equation}
The $g_{02}$ terms follow from the fixed-selection
boundary coefficients. The mixed terms $g_{12}$ add finite-hierarchy
control to the joint limit. These are exact coefficient identities obtained from
Eq.~\eqref{eq:v3-joint-selected-recursion}.

At $w=\lambda^2$, the two fold equations $G=0$ and
$\partial_\theta G=0$ have a nonzero Jacobian in $\eta,\theta$ at
$(\epsilon,w,\theta,\eta)=(0,0,12,656)$. Their solution is analytic,
with selected critical budget
\begin{equation}
 \eta_{*,\rm sel}=\eta_*+
       [C_0+C_1\epsilon+O(\epsilon^2)]w
       +[D_0+O(\epsilon)]w^2+O(w^3).
 \label{eq:v3-joint-selected-critical}
\end{equation}
The coefficients are
\begin{equation}
 \begin{array}{c|ccc}
 \text{fecundity}&C_0&C_1&D_0\\\hline
 \text{linear}&24368/9&421156528/136161&578192/81\\
 \text{exponential}&14296/9&66372124/45387&140542/81
 \end{array}
 \label{eq:v3-joint-selected-critical-coefficients}
\end{equation}
The $D_0$ terms come from expanding the already known leading
fixed-selection budget $4A_\lambda B_\lambda$. To audit $C_1$
without refitting a critical curve, set
$D_g=\partial_\eta g_{00}$ and
$\mathcal V=(26348/5043)\partial_\theta+(62032/123)\partial_\eta$.
All entries in the following formula are evaluated at $(12,656)$. 
\begin{equation}
 C_0=-\frac{g_{02}}{D_g},\quad
 C_1=-\frac{g_{12}+\mathcal Vg_{02}}{D_g}
 +\frac{g_{02}(\partial_\eta g_{10}+\mathcal VD_g)}{D_g^2}.
 \label{eq:v3-joint-mixed-coefficient-extraction}
\end{equation}

\subsection{The joint tolerance and selected boundary}
\label{par:v3-joint-joint-tolerance}
Use $(s,z,\alpha,\beta)=(\xi_s,\xi_a,\alpha_L,\alpha_R)$
from the main text, so that
$s=\epsilon[(B+D)/2-q_*]$, $z=\epsilon(B-D)/2$,
$\alpha=\epsilon^2(H-A)$ and $\beta=\epsilon^2(H-E)$.
The local signal is
\begin{equation}
 \begin{aligned}
 \Delta=\frac{\epsilon\lambda}{7872}
 \biggl[&d-\frac{\alpha+\beta}{2}-\frac{41}{9}(s^2+z^2)\\
 &-C_0\lambda^2+O(\epsilon\upsilon^2+\upsilon^3)\biggr].
 \end{aligned}
 \label{eq:v3-joint-full-selected-normal-form}
\end{equation}
where $d,\alpha,\beta=O(\upsilon^2)$ and $s,z,\lambda=O(\upsilon)$.
This is an additive remainder statement. At a zero of the leading
bracket it must not be interpreted as a small relative error.
Arbitrarily close to the boundary one uses the exact analytic fold
and the parameter-dependent Morse chart. Complement symmetry forbids
a term linear in $\lambda$ in $G$, including a quadratic mixed
term $\lambda s$. Both core derivatives of $G$ remain positive,
and its pin Hessian remains negative definite locally. Hence a local
maximum under the cap saturates both cores; uniqueness and path
reversal force the two pin contacts to be equal. Together with the global comparison, this justifies the reflected
branch throughout the controlled optimized window.

The positive window edge is
\begin{equation}
 \lambda_{\rm edge}=\sqrt{d/C_0}[1+O(\epsilon+d)].
 \label{eq:v3-joint-window-edge}
\end{equation}
For the subleading excess $H-H_*=\omega/\epsilon$, fixed $\omega>0$,
\begin{equation}
 \lambda_{\rm edge}^2=\frac{\omega}{C_0}\epsilon
 -\left(\frac{C_1\omega}{C_0^2}
       +\frac{D_0\omega^2}{C_0^3}\right)\epsilon^2
 +O_\omega(\epsilon^3).
 \label{eq:v3-joint-window-two-term}
\end{equation}
Recover $\kappa$ by the exact map in
Eq.~\eqref{eq:v3-joint-centered-selection} for linear fecundity.
Analytic monotonicity of the selected budget in $\lambda^2$ makes
this the unique nearby positive edge. At fixed leading budget,
selection also shifts the optimal pin coordinate by
$-94\lambda^2/123$ for linear and $-49\lambda^2/41$ for exponential
fecundity, up to $O(\epsilon\lambda^2+\lambda^4)$.
Along the moving selected critical budget the shifts are instead
$24\lambda^2$ and $40\lambda^2/3$. Since the weak optimum is
stationary, the improvement from retuning first affects the budget
at order $\lambda^4$.

The full design volume acquires the same selected penalty. With the
logarithmic design measure already specified and relative surplus
$\mathfrak r=(H-H_*)/H_*$, its leading fixed-selection value is
\begin{equation}
 \begin{gathered}
 \mathcal V_{\log}(\lambda)=\mathfrak r^3
 \left\{\frac{4\pi}{3}\left[1-\frac{C_0z_\lambda^2}{656}\right]_+^3+o(1)\right\},\\
 z_\lambda=\lambda/\sqrt{\mathfrak r}.
 \end{gathered}
 \label{eq:v3-joint-selected-volume}
\end{equation}
The additive limit is uniform for bounded $z_\lambda$ as
$\epsilon,\mathfrak r\to0$. Choose a fixed $\lambda_0>0$
inside the proved small-selection rectangle. The selected critical
budget increases with $\lambda^2$ there, so the feasible measure
vanishes beyond the shrinking local window. Integrating with
Lebesgue measure on $[0,\lambda_0]$ gives
\begin{equation}
 \int_0^{\lambda_0} \mathcal V_{\log}(\lambda)\,d\lambda
 \sim\frac{64\pi}{105}\sqrt{\frac{656}{C_0}}
                     \mathfrak r^{7/2}.
 \label{eq:v3-joint-selection-volume}
\end{equation}
These leading limits use the local chart and small $\epsilon$;
the exact finite-$\epsilon$ density and curvatures change the
prefactors. Selection is an additional quadratic direction in this
measure, rather than a thermodynamic state variable.

\subsection{Contact errors and the probability of promotion}
\label{app:v7-gaussian-tolerance}
A geometric feasible volume becomes an experimental probability
once a distribution of contact errors is specified. Keep the two
cores exactly at the contrast cap and perturb the two optimized
pin contacts by independent, mean-zero Gaussian logarithmic errors
of variance $s_e^2$. Since $\epsilon q_*\to12$, the coordinates
of Eq.~\eqref{eq:v3-main-error-coordinates} obey, to leading order,
\begin{equation}
 \xi_s,\xi_a\ \hbox{independent},\qquad
 \operatorname{Var}\xi_s=\operatorname{Var}\xi_a=72s_e^2.
 \label{eq:v7-gaussian-pin-variance}
\end{equation}
The selected shift of the pin center is $O(\lambda^2+d)$, smaller
than the error width when $s_e^2\asymp d$ and $\lambda^2=O(d)$.
For positive centered intensity $\lambda$, the leading favorable
condition is
$41(\xi_s^2+\xi_a^2)/9<d-C_0\lambda^2$.
The sum of the two squared normalized Gaussian coordinates has
tail $e^{-x/2}$, so
\begin{equation}
 \Pr(\Delta>0)
 =1-\exp\!\left[-\frac{d-C_0\lambda^2}{656s_e^2}\right]+o(1)
 \label{eq:v7-gaussian-success}
\end{equation}
when $d-C_0\lambda^2>0$; the limiting probability is zero at a
nonpositive margin. This is an additive limit as
$\epsilon,d,s_e,\lambda\to0$, with $s_e^2/d$ in a compact positive
interval and $\lambda^2/d$ bounded. The remainder in
Eq.~\eqref{eq:v3-joint-full-selected-normal-form}, divided by $d$,
tends to zero in probability, while the Gaussian tails outside the
local chart vanish. Thus the sharp leading sign test is legitimate
also at the zero-margin limit.

For the weak target use the divided response at $\lambda=0$.
With $\mathfrak r=(H-H_*)/H_*$ and $d=656\mathfrak r[1+o(1)]$,
\begin{equation}
 \Pr(R\le1+\epsilon)=1-e^{-\mathfrak r/s_e^2}+o(1).
 \label{eq:v7-gaussian-weak-success}
\end{equation}
Consequently, fixed confidence $1-p_{\rm fail}$ with $0<p_{\rm fail}<1$
requires, to leading order,
\begin{equation}
 \mathfrak r\ge s_e^2\log(1/p_{\rm fail})
                  +\frac{C_0\lambda^2}{656}.
 \label{eq:v7-gaussian-confidence}
\end{equation}
This distribution samples the two quadratic pin directions.
Random core deficits have their own one-sided distributions and
consume margin linearly; they are not included in this probability.

\subsection{Signal maximum and terminal observation cost}
\label{par:v3-joint-signal-maximum}
\label{par:v3-joint-detection}\label{app:v7-terminal}

After optimization of geometry in the controlled neighborhood, the leading signal is the cubic
$\epsilon\lambda(d-C_0\lambda^2)/7872$. Its derivative vanishes
at $\lambda^2=d/(3C_0)$, whereas the window edge has zero signal.
Analytic perturbation of this nondegenerate maximum gives
\begin{equation}
 \begin{aligned}
 \lambda_{\rm peak}&=\sqrt{\frac{d}{3C_0}}[1+O(\epsilon+d)],\\
 \Delta_{\rm peak}&=\frac{\epsilon d^{3/2}}{11808\sqrt{3C_0}}
                       [1+O(\epsilon+d)].
 \end{aligned}
 \label{eq:v3-joint-signal-peak}
\end{equation}
At $d=\omega\epsilon$, the best signal in the controlled window is of order
$\epsilon^{5/2}$. Its coefficients are fixed by Eq.~\eqref{eq:v3-joint-selected-recursion}.

The peak fixes the advantage that the terminal experiments must resolve.
Let $h_i^+$ and $h_i^-$ be the all-cooperator absorption
probabilities after, respectively, a cooperator and a defector
singleton at site $i$. Then
\begin{equation}
 \Delta=\frac1N\sum_i(h_i^++h_i^-)-1.
 \label{eq:v7-stratified-estimand}
\end{equation}
Treat the $2N$ probabilities as independently unknown parameters.
For $m_i^{\pm}$ independent completed runs in each class, the
sample means define the unbiased estimate
$\widehat\Delta_{\rm str}=N^{-1}\sum_i
(\widehat h_i^++\widehat h_i^-)-1$.
Writing $v_i^{\pm}=h_i^{\pm}(1-h_i^{\pm})$ and denoting the
corresponding expected run durations by $t_i^{\pm}$, its exact
variance and expected aggregate duration are
\begin{equation}
 \operatorname{Var}(\widehat\Delta_{\rm str})
 =\frac1{N^2}\sum_{i,\pm}\frac{v_i^{\pm}}{m_i^{\pm}},
 \qquad
 \mathcal T=\sum_{i,\pm}m_i^{\pm}t_i^{\pm}.
 \label{eq:v7-stratified-variance}
\end{equation}
Cauchy--Schwarz gives
\begin{equation}
 \operatorname{Var}(\widehat\Delta_{\rm str})\mathcal T
 \ge\frac1{N^2}
 \left(\sum_{i,\pm}\sqrt{v_i^{\pm}t_i^{\pm}}\right)^2,
 \quad
 m_i^{\pm}\propto\sqrt{\frac{v_i^{\pm}}{t_i^{\pm}}}
 \label{eq:v7-stratified-time-bound}
\end{equation}
at the minimizing continuous allocation. The diagonal Bernoulli
Fisher matrix has entries $m_i^{\pm}/v_i^{\pm}$, and the gradient
of the estimand has every entry $1/N$. Its inverse gives the same
bound for locally unbiased estimation; regular large-sample
estimators obey the corresponding local asymptotic bound.
A pilot estimate can determine the allocation. Integer rounding
has no leading effect when the required class counts grow.
The cost uses expected completed-run durations; holding times and
intermediate configurations are not part of this readout.

In the six-site layer $d=\omega\epsilon$ at the signal optimum,
the four pin-and-type classes have
$v_i^{\pm}=1/4+o(1)$ and $t_i^{\pm}=36/\epsilon+o(\epsilon^{-1})$.
A follower introduction reaches a mixed aligned state with probability
$O(v)$, since it cannot change either pin at $v=0$ and the fast
inverse remains bounded. The physical-time reconstruction in
Appendix~\ref{app:p6-protected} then gives mutant fixation probability
$O(v)$ and mean duration $O(1+v/u)$. Here $u,v=O(\epsilon)$
with bounded $v/u$, so the follower classes have
$v_i^{\pm}=O(\epsilon)$ and $t_i^{\pm}=O(1)$. Therefore
\begin{equation}
 \inf\operatorname{Var}(\widehat\Delta)\mathcal T
 =\frac4\epsilon[1+o(1)].
 \label{eq:v7-terminal-optimal-coefficient}
\end{equation}
Equal counts in all twelve classes already attain this leading
coefficient, giving Eq.~\eqref{eq:v3-main-variance}.
For comparison, independently redrawing a uniform site for each run
and discarding its label gives exact variance
\begin{equation}
 \frac{2\rho_C(1-\rho_C)+2\rho_D(1-\rho_D)}M
 \sim\frac5{9M}.
 \label{eq:v3-joint-estimator-variance}
\end{equation}
Thus balancing or stratifying the introduction sites reduces the leading
variance--time coefficient from $20/(3\epsilon)$ to $4/\epsilon$.
With equal fixed counts in every site-and-type class, pooling afterward
is algebraically the same estimator and retains the smaller variance.
For independently randomized sites, stratifying by the observed class
counts attains the smaller coefficient asymptotically.

For target signal-to-noise ratio $\mathfrak s$, the peak signal in
Eq.~\eqref{eq:v3-joint-signal-peak} and equal class counts give
\begin{align}
 M_{\rm str}&\sim\mathfrak s^2\,11808^2C_0\,
                 \omega^{-3}\epsilon^{-5},\notag\\
 \E[T_{{\rm total},{\rm str}}]
 &\sim12\mathfrak s^2\,11808^2C_0\,
                 \omega^{-3}\epsilon^{-6}.
 \label{eq:v3-joint-detection-cost}
\end{align}
These costs assume independent completed runs and negligible reset
time, and count aggregate elapsed run time rather than wall time
under parallel experimentation. A known class-dependent reset cost
enters Eq.~\eqref{eq:v7-stratified-time-bound} by replacing
$t_i^{\pm}$ with the sum of the run and reset durations.
The lower bound concerns inference of unknown terminal
probabilities. Estimating the microscopic intensity with the game and graph known
uses additional structure and answers a different question.

\subsection{Rare introductions on a shrinking signal scale}
\label{app:v7-rare-signal}

To specify the introduction clock away from consensus, add independent
symmetric flips at each site at rate $\nu/6$ to the copying
process. The total spontaneous-flip rate is $\nu$, and its site
at either consensus is uniform. Let $\langle X\rangle_\nu$ be the
stationary cooperative fraction of this mutation-regularized process.
At fixed positive weights it tends to
$X_C^{\rm rare}=\rho_C/(\rho_C+\rho_D)$ as $\nu\to0$.
This convention differs from mutation at reproduction, which changes
the introduction-site distribution.

In the compact reflected optimization chart, put
$\vartheta=\nu q$ and use the centered intensity $\lambda$.
The stationary mean is analytic through $\vartheta=0$ after two
Schur eliminations. First eliminate fast configurations using their
bounded inverse. Divide the resulting harmonic four-state generator by $q^{-1}$,
which leaves its invariant measure unchanged, and then eliminate the
two mixed domain states in this slow clock. Each departure from consensus
contains a mutation factor, so the final two-state off-diagonal
rates contain a common factor $\vartheta$. After canceling it, the two remaining slow-time rates and their
normalization are positive and analytic at the corner. Reconstruction of the eliminated probabilities has the
same regularity. At $\vartheta=0$ this ratio is the exact
rare-introduction ratio at the same finite weights.

The difference between these two abundances vanishes at
$\vartheta=0$ and is odd in $\lambda$ by type complementation.
At $\epsilon=0$ the aligned domain payoffs vanish, so both limiting
means equal one half for every small $\lambda,\vartheta$.
Analytic division by these three zeros gives
\begin{equation}
 \langle X\rangle_\nu-X_C^{\rm rare}
 =\epsilon\lambda\vartheta\,
 \mathcal E(\epsilon,\lambda^2,\vartheta,\text{scaled weights}),
 \label{eq:v7-rare-signal-error}
\end{equation}
where $\mathcal E$ is bounded on a sufficiently small compact chart.
In the layer $d=\omega\epsilon$ at fixed $\omega>0$,
$\lambda_{\rm peak}\asymp\sqrt\epsilon$ and
$X_C^{\rm rare}-1/2\asymp\Delta_{\rm peak}
\asymp\epsilon^{5/2}$ because $\rho_C+\rho_D\to1/3$.
Thus $\vartheta=o(\epsilon)$ makes the mutation correction smaller
than the positive bias, or equivalently
\begin{equation}
 \nu=o(\epsilon^2),\qquad q\asymp\epsilon^{-1}.
 \label{eq:v7-rare-signal-condition}
\end{equation}
This is a sufficient condition for the specified symmetric
spontaneous-flip model.
\subsection{The reflected response radius for linear fecundity}
\label{par:v3-joint-radius-window}
For linear fecundity the reflected interval fast determinant at the
hierarchy corner is
\begin{equation}
 \det L_{FF}(0,\lambda)
 =-4(\lambda-3)(\lambda-1)^6(\lambda+1)^2(\lambda+3).
 \label{eq:v3-joint-fast-determinant}
\end{equation}
Together with the nonzero divided slow factor, this proves uniform
analyticity on every closed disk $|\lambda|\le r_0<1$, for sufficiently
small $\epsilon$, uniformly on compact positive $(\theta,\eta)$ sets.
The limiting normalized response is
$G(0,\theta,\eta,\lambda)=1-A_\lambda/\theta-B_\lambda\theta/\eta$,
and its pole at $\lambda=1$ is nonremovable, since 
\begin{equation}
 \lim_{\lambda\to1}(1-\lambda)^2G
 =-\frac2\theta-\frac{8\theta}\eta<0.
 \label{eq:v3-joint-limiting-pole}
\end{equation}
The finite interval committor is rational in $\lambda$ with degree
bounded independently of $\epsilon$. If a subsequence had a pole-free
disk of radius $1+\rho$, normalize its reduced denominator at zero.
Its bounded degree and roots outside that disk bound its coefficients.
Local convergence then bounds the numerator coefficients. A convergent
subsequence would represent an analytic limit across $\lambda=1$,
contradicting Eq.~\eqref{eq:v3-joint-limiting-pole}. Consequently the
centered Taylor radius tends to one. This full-disk result is proved
on the reflected family; the asymmetric estimates control a common
neighborhood of zero.

In the layer $d=\omega\epsilon$ with fixed $\omega>0$, $\lambda_{\rm edge}=O(\sqrt\epsilon)$ lies well inside the
reflected linear-response convergence disk, whose radius tends to one.
The adverse nonlinear response reverses the sign within a convergent
series. At fixed finite size, rationality gives coefficient growth
governed by poles. Applying WKB, thimble or Borel methods to large-domain
or rare-event limits would require a separate asymptotic formulation.

\section{Single occupancy and the operator description}
\label{sec:physics}
The one-particle constraint connects the stochastic Fock-space
operator in Sec.~\ref{sec:update} to a two-state spin at each site.
We spell out that equivalence and the role of the physical clock,
which is needed for the propagator reduction in
Appendix~\ref{app:v3-operator-response}.

On the single-occupancy sector, the bosonic copying operator in
Eq.~\eqref{eq:spin-generator} has the Pauli form
\labelalias{eq:doi}{eq:spin-generator}
\begin{equation}
 B_{ij}^{\rm copy}=(\sigma_j^x-I)
                  \frac{1-\sigma_i^z\sigma_j^z}{2},
 \label{eq:app-copy-pauli}
\end{equation}
where $\sigma_i^z|n_i\rangle=(2n_i-1)|n_i\rangle$.
The discordance projector acts before the target flip, and the diagonal
rate operator remains on the right.
This representation is used in stochastic many-body
theory \cite{AlcarazEtAl1994,Schutz2001,DelRazoLammaMerbis2026}.

A physical initial state satisfies $(N_i-1)|p(0)\rangle=0$,
where $N_i=a_i^\dagger a_i+d_i^\dagger d_i$. Every copying term
commutes with every $N_i$, so the constraint is conserved exactly.
Equivalently,
$\Pi_i=\int_0^{2\pi}e^{i\theta(N_i-1)}d\theta/(2\pi)$ projects
onto it~\cite{Sachdev2025}. Projected bosonic coherent states
give the spin-$1/2$ coherent-state integral. Fixing the mean coherent-state
occupation to one does not enforce single occupancy.

The diagonal rates are defined on the physical sector, where their
denominator is positive, and may be extended by zero outside it.
Their coherent-state matrix elements must respect that restriction,
operator order and a consistent discrete-time slicing
\cite{ShibataTakagi1999}. This is the same finite
probability dynamics, without a density or Gaussian approximation.

Two operations affecting its clock remain distinct. Multiplying all
outgoing rates at a configuration by a positive factor leaves the
harmonic equation unchanged but alters times and spectra; it is a
state-dependent time change. Replacing $\Sigma(z)$ by $\Sigma(0)$
instead removes fast-excursion durations. The physical Poisson source
in Appendix~\ref{app:p6-protected} retains those durations.
Neither operation is a similarity transformation of the original
physical-time generator.

The Pauli representation evolves probabilities rather than quantum
amplitudes. Reciprocal contacts do not impose configuration detailed
balance. Its two absorbing point masses and their convex mixtures are
stationary, with the reached mixture fixed by preparation. Resolvent
and path-integral calculations must retain these boundary conditions.

\section{Exact operator reductions and time-dependent response}
\label{app:v3-operator-response}

The pair response used in the main text belongs to an exact hierarchy
of hard-core correlations. This appendix establishes its degree closure,
the additional neutral reduction on paths and cycles, and the clock
terms required when the observable depends on time.

\subsection{A closed correlation space at fixed selection order}
\label{app:v3-degree-closure}

In the filtered Boolean algebra of Eq.~\eqref{eq:v4-boolean}, neutral
copying replaces an index or contracts two equal indices, preserving
each $\mathscr F_k$ on any graph. Its restriction to degree-at-most-$k$ observables vanishing
at both consensus states is invertible.  A zero-boundary harmonic
function vanishes by uniqueness of the absorbing Dirichlet problem, and
injectivity on this finite invariant space implies surjectivity.

With donation payoffs put
$\xi_i=[-c\sigma_i+b(P\sigma)_i]/2$ and
$u_c=\delta/[1+\delta(b-c)/2]$ for linear fecundity; for exponential
fecundity use $u_c=\delta$, as in Sec.~\ref{sec:v4-observable-algebra}.
Centering removes a common fecundity factor.
After removing the total-fecundity denominator from the harmonic equation,
write $L(u_c)=\sum_{p\ge0}u_c^pL_p$ and
$h=h_0+\sum_{m\ge1}u_c^mh_m$.
The coefficients obey
\begin{equation}
 \begin{gathered}
 L_0h_m=-\sum_{p=1}^mL_ph_{m-p},\\
 \deg h_m\le\min(N,m+1),\\
 h_m(-\sigma)=(-1)^{m+1}h_m(\sigma).
 \end{gathered}
 \label{eq:v3-harmonic-filtration}
\end{equation}
Only $L_1$ is present for centered linear fecundity. Each order-$p$
operator increases degree by at most $p$ and has complement parity
$(-1)^p$, so the result follows by induction; every copying difference
annihilates the constant part of $h_0$. For payoff degree $d$, the same
proof gives $\deg h_m\le\min(N,1+md)$ for analytic fecundity.
At fixed payoff degree and response order these spaces have polynomial
dimension in $N$, and graph symmetries reduce them further.
Before reflection, the associated graded space of the complement-even degree filtration
has Hilbert series $1+15t+15t^2+t^3$ on six sites, with $t$ counting
two spin indices.
For reflection-symmetric weights, the invariant count in
Eq.~\eqref{eq:v4-reflected-hilbert} gives the response dimensions quoted
in Sec.~\ref{sec:v4-observable-algebra}.

\subsection{The neutral path and cycle propagators}
\label{app:v3-neutral-cas}

Retain the physical clock $r_{ij}=P_{ij}/N$ and define the one-spin
coefficient matrix by $B_{ij}=r_{ij}$ for $i\ne j$ and
$B_{jj}=-\sum_i r_{ij}$. Thus
$\mathcal A_0\sigma_j=\sum_iB_{ij}\sigma_i$. This coefficient matrix
$B$ differs from the symmetric benefit matrix $\mathsf B$ in Eq.~\eqref{eq:JK}.
For $S=\{i_1<\cdots<i_k\}$ identify
$\sigma_S$ linearly with $e_S=e_{i_1}\wedge\cdots\wedge e_{i_k}$
in the exterior coefficient space $\Lambda\mathbb C^N$.
Introduce auxiliary fermions $\psi_i,\psi_i^\dagger$ with
$\{\psi_i,\psi_j^\dagger\}=\delta_{ij}$ and
$\{\psi_i,\psi_j\}=\{\psi_i^\dagger,\psi_j^\dagger\}=0$.
The physical observable algebra remains commutative; this identification
concerns its coefficient space.

On an open path set $D=d\Gamma(B)=\sum B_{ij}\psi_i^\dagger\psi_j$ and
$T=\sum_{i<j}\psi_j\psi_i$. Here $d\Gamma(B)$ applies $B$ to each
occupied index and sums the resulting terms. The exact spin coefficient matrix is
\begin{equation}
 M_0=e^{-T}De^T.
 \label{eq:v3-path-intertwiner}
\end{equation}
Nearest-neighbour moves into empty indices have no ordering sign.
Collisions remove consecutive indices and give
$\sum_i(r_{i,i+1}+r_{i+1,i})\psi_{i+1}\psi_i=[D,T]$.
Remote contributions to this commutator cancel. Pair-annihilation
bilinears commute, so the next commutator vanishes and the similarity
terminates. In particular $e^{-T}e_i\wedge e_j$ represents
$\sigma_i\sigma_j-1$.

A cycle requires its boundary twist. Let $B_o=B$ be periodic, and
let $B_e$ reverse the two wrap-edge off-diagonal entries, leaving the
diagonal unchanged. Even observables use $B_e$, and odd observables use
$B_o$. This is complement parity of observables, not parity of the
number of cooperators. The same even-antiperiodic/odd-periodic moment
sectors occur in the Glauber chain \cite{MayerSollich2004}; the
inhomogeneous free-fermion ancestry is discussed in
Ref.~\cite{GarrodEtAl2018}.

For antisymmetric $K$ write $T_K=\sum_{i<j}K_{ij}\psi_j\psi_i$.
Let $C_p$ have upper entries $r_{i,i+1}+r_{i+1,i}$ on ordinary edges,
with wrap entries $(C_e)_{1N}=r_{1N}+r_{N1}$ and
$(C_o)_{1N}=-(r_{1N}+r_{N1})$. In parity sector $p$,
\begin{equation}
 \begin{gathered}
 M_0=D_p+T_{C_p}=W_pD_pW_p^{-1},\\
 D_p=d\Gamma(B_p),\qquad W_p=e^{-T_{K_p}},\\
 B_p^{\mathsf T}K_p+K_pB_p=-C_p.
 \end{gathered}
 \label{eq:v3-cycle-intertwiner}
\end{equation}
A wrap hop crosses $k-1$ indices and a wrap collision has sign $(-1)^k$,
which proves the stated sector signs. The canonical anticommutation identity
$[D_p,T_K]=-T_{B_p^{\mathsf T}K+KB_p}$ then proves the similarity.

The even kernel is $(K_e)_{ij}=1$ for $i<j$. For a positive reversible
cycle, choose a reversible measure $\widetilde\pi_i>0$ with
$c_{ij}=r_{ij}\widetilde\pi_j=r_{ji}\widetilde\pi_i$ and set
$\mathcal R_i=1/c_{i,i+1}$. The odd kernel is
\begin{equation}
 (K_o)_{ij}=1-2\frac{\sum_{e=i}^{j-1}\mathcal R_e}
                         {\sum_{e=1}^N\mathcal R_e},\qquad i<j.
 \label{eq:v3-cycle-electrical}
\end{equation}
For the present rule, the unnormalized measure $\widetilde\pi_i=1/s_i$
is allowed and
$c_{ij}=w_{ij}/(Ns_is_j)$. Thus these resistances are proportional to
$s_is_j/w_{ij}$. To check the formula, set
$u_i=-2\sum_{e<i}\mathcal R_e/\sum_e\mathcal R_e$ and subtract
$u\boldsymbol1^{\mathsf T}-\boldsymbol1u^{\mathsf T}$ from the constant
upper-triangular kernel. Its residual is removed by
$B_o^{\mathsf T}u=v$, with $v_1=-2r_{N1}$, $v_N=2r_{1N}$.
Multiplication by $\widetilde\pi_i$ gives Kirchhoff's equation with constant
current on the ordered arc, proving Eq.~\eqref{eq:v3-cycle-electrical}.

Positive undirected weights make $B_o$ similar to a symmetric negative
Laplacian with one zero eigenvalue. The antiperiodic $B_e$ is similar
to a strictly negative signed Laplacian with flux $\pi$. Hence their
skew Sylvester sums are nonzero, since the sole possible zero-plus-zero
pair wedges a vector with itself. If $\mathcal K_e,\mathcal K_o$ contain the
even and odd degrees between zero and $N$, then in each sector
\begin{equation}
 e^{tM_0}=W_p\left[\bigoplus_{k\in\mathcal K_p}
                \Lambda^k(e^{tB_p})\right]W_p^{-1}.
 \label{eq:v3-neutral-exterior}
\end{equation}
On a path the same $B$ and $W$ serve both sectors. All neutral eigenvalues
are the appropriate subset sums, all degeneracies are semisimple, and
there are exactly two zero modes. This proves neutral diagonalizability
for positive undirected paths and cycles without assuming full-support
detailed balance of the absorbing configuration chain.

After subtracting $\operatorname{tr}B_p/2$, $D_p$ and the annihilation
terms are Clifford bivectors. For real rates, take $V=\mathbb R^N$
and the split quadratic form $q_{\rm split}(v+\alpha)=\alpha(v)$ on $V\oplus V^*$.
The quadratic evolution lies in a spin representation of
$\operatorname{Spin}(N,N)$, or $\operatorname{Spin}(2N,\mathbb C)$
after complexification. The scalar
trace factor multiplies its spinor evolution. The unipotent $W_p$ is
not unitary in the compact real spin group, and its entries need not
preserve positivity in the auxiliary basis.

The construction has a definite geometric limit. A hop across an edge
with both an interior and an exterior spectator vertex has opposite
fermionic signs in two configurations of the same parity. One quadratic
hopping coefficient cannot reproduce both positive copying rates.
Consequently this ordered, parity-only ansatz requires every edge to be
consecutive or the wrap edge, and connected admissible graphs are paths
or cycles. This does not exclude other representations on branching
graphs; the Boolean response filtration remains valid there.

\subsection{Pair response and the physical-time insertions}
\label{app:v3-sylvester-time}

Write the first physical-clock Poisson source and solution as
$j=\sum_{i<j}(J_{\rm src})_{ij}(\sigma_i\sigma_j-1)$ and
$h_1=\sum_{i<j}X_{ij}(\sigma_i\sigma_j-1)$, with
$\mathcal A_0h_1=-j$. Extending the coefficients antisymmetrically gives
\begin{equation}
 B_eX+XB_e^{\mathsf T}=-J_{\rm src},
 \label{eq:v3-pair-sylvester}
\end{equation}
where $B_e=B$ on a path. The transpose placement differs from the
kernel equation because $X$ is a two-vector coefficient matrix.
Diagonalizing the $N\times N$ matrix solves each skew entry by an
eigenvalue sum. The pair problem is thereby reduced to an $N\times N$
diagonalization, using $O(N^3)$ dense operations and $O(N^2)$ storage.

For time dependence define the copying difference
$\mathcal D_{ij}g(n)=g(n^{j\leftarrow i})-g(n)$,
$S=\sum_i\xi_i$, $L_0=N\mathcal A_0$, and
$V=\sum_{ij}\xi_iP_{ij}\mathcal D_{ij}$.
For centered linear fecundity the exact physical generator has vertices
\begin{equation}
 \begin{gathered}
 \mathcal A(u_c)=\frac{L_0+u_cV}{N+u_cS},\qquad
 \mathcal A_1=\frac VN-\frac{SL_0}{N^2},\\
 \mathcal A_k=\left(-\frac SN\right)^{k-1}\mathcal A_1\quad(k\ge1).
 \end{gathered}
 \label{eq:v3-physical-vertices}
\end{equation}
All powers of $S$ multiply on the left. The higher vertices enforce
the global event clock and cannot be discarded for timed observables.
If $e^{t\mathcal A(u_c)}g=\sum_{m\ge0}u_c^mg_m(t)$, the first two corrections are
\begin{equation}
 \begin{aligned}
 g_1(t)&=\int_0^t U_0(t-s)\mathcal A_1U_0(s)g\,ds,\\
 g_2(t)&=\int_0^t U_0(t-s)\mathcal A_2U_0(s)g\,ds\\
 &\quad+\int_{0<s_1<s_2<t}U_0(t-s_2)\mathcal A_1\\
 &\hspace{16mm}\times U_0(s_2-s_1)\mathcal A_1U_0(s_1)g\,ds_1ds_2.
 \end{aligned}
 \label{eq:v3-physical-duhamel}
\end{equation}
where $U_0(t)=e^{t\mathcal A_0}$. The triangular backward equation proves
$\deg g_m\le\min(N,\deg g+m)$, with alternating complement parity when $g$ has definite complement parity.
For payoff degree $d$, analytic fecundity gives $\deg g+md$.
For a pair $g=\sigma_a\sigma_b$, putting $Y_1=L_0g$, $Y_2=Vg$ gives
$\mathcal A_1g=Y_2/N-SY_1/N^2$ and
$\mathcal A_2g=S^2Y_1/N^3-SY_2/N^2$. Thus the second response already
contains $t\mathcal A_2g$ at short times. The first two pair corrections
require at most three- and four-spin correlations; neutral propagation
between the insertions is exact in Eq.~\eqref{eq:v3-neutral-exterior}.
No Gaussian factorization is assumed.

Let $M=\max_{n,i}|\xi_i(n)|\le(b+c)/2$ and
$\|g\|_\infty=\max_n|g(n)|$, with the induced norm for operators
on configuration observables. On a complex disk
$|u_c|\le R_c<1/M$,
$\|\mathcal A(u_c)\|_\infty\le C_{R_c}=2/(1-R_cM)$.
Cauchy's estimate gives, for $|u_c|\le r_c<R_c$ and $0\le t\le T$,
\begin{equation}
 \left\|e^{t\mathcal A(u_c)}g-\sum_{m=0}^{m_*}u_c^mg_m(t)\right\|_\infty
 \le\|g\|_\infty e^{TC_{R_c}}\frac{(r_c/R_c)^{m_*+1}}{1-r_c/R_c}.
 \label{eq:v3-physical-remainder}
\end{equation}
For exponential fecundity, $R_cM<\pi/2$ and
$C_{R_c}=1+e^{2R_cM}/\cos(R_cM)$ suffice, since the normalized denominator
then has positive real part. These finite-time bounds require no
spectral gap and are uniform in graph weights and $N$ in the global
attempted-event clock. Their growth with $T$ does not control a
diverging metastable time; the slow-state reduction is still required.

\section{The positive toric design chart}\label{app:v4-toric}
\label{v4:alg:segre}
The six ratios used in Sec.~\ref{sec:v4-toric} must describe the actual
four-dimensional positive design space, including its limiting boundary.
We prove their complete relations and graded count here. This prevents an
unphysical boundary component from entering the optimization.

Introduce
\begin{equation}
 z=(z_1,\ldots,z_6)=
   (B^{-1},D^{-1},B/A,D/A,B/E,D/E).
 \label{v4:alg:toric-coordinates}
\end{equation}
They are dependent coordinates. Their Laurent exponent matrix, with rows
ordered as $(A,B,D,E)$, is
\begin{equation}
 \mathsf E=\begin{pmatrix}
 0&0&-1&-1&0&0\\
 -1&0&1&0&1&0\\
 0&-1&0&1&0&1\\
 0&0&0&0&-1&-1
 \end{pmatrix}.
 \label{v4:alg:exponent-matrix}
\end{equation}
It has rank four. Take $k=\mathbb R$ for the real parameter algebra;
the exact identities are defined over $\mathbb Q$ and may be
complexified for the projective interpretation. Its monomial map induces
$S=k[z_1,\ldots,z_6]\to k[A^{\pm1},B^{\pm1},D^{\pm1},E^{\pm1}]$.
The kernel is the prime binomial ideal
\begin{align}
 I_{\rm tor}&=(f_1,f_2,f_3),\\
 f_1&=z_1z_3-z_2z_4,\quad
 f_2=z_1z_5-z_2z_6,\notag\\
 f_3&=z_4z_5-z_6z_3.
 \label{v4:alg:toric-ideal}
\end{align}
These are the minors of
\begin{equation}
 \mathsf Z=\begin{pmatrix}z_1&z_4&z_6\\z_2&z_3&z_5\end{pmatrix}.
 \label{v4:alg:segre-matrix}
\end{equation}
To prove the kernel assertion, use the factorization
$u=(B^{-1},D^{-1})^{\mathsf T}$ and
$v=(1,BD/A,BD/E)^{\mathsf T}$. Conversely every strictly positive
rank-one matrix has this normalization after the redundancy is fixed.
Under $Z_{ij}=u_i v_j$, two monomials have the same
image precisely when their $2\times3$ exponent tables have the same row
and column sums. Moving one unit between two columns in opposite rows
connects any two such tables, and each move is a $2\times2$ minor.
Thus the minors generate every binomial relation. Grouping terms with
the same image shows that they generate the full kernel, which is prime
because the target Laurent polynomial ring is a domain.
The locus $\operatorname{rank}\mathsf Z\le1$ is the affine cone over the Segre
embedding $\mathbb P^1\times\mathbb P^2\hookrightarrow\mathbb P^5$.
The Segre map is $(u,v)\mapsto u v^{\mathsf T}$, with the redundancy
$(u,v)\mapsto(\lambda u,\lambda^{-1}v)$; the projective construction is
given in Ref.~\cite{MathStacksSegre}. Its affine cone has dimension four,
matching the four physical weight ratios.

The inverse formulas in Sec.~\ref{sec:v4-toric} identify the positive
rank-one locus with the physical design chart. At the origin all six
small ratios vanish, encoding the forced hierarchy in
Sec.~\ref{sec:contrast-design}.

\subsection{A concrete saturation and its boundary component}
\label{v4:alg:toric-saturation}

There are two independent lattice relations but three minimal binomial
generators. If only $f_1=f_2=0$ were imposed, the plane $z_1=z_2=0$
would be an entire extraneous component with unrestricted $z_3,z_4,z_5,z_6$.
On the physical torus the third relation follows from
\begin{equation}
 z_1f_3=z_4f_2-z_6f_1.
 \label{v4:alg:toric-saturation-identity}
\end{equation}
Saturation by $z_1$ adjoins a polynomial whenever multiplication by
some nonnegative power of $z_1$ puts it in the original ideal. Here
\begin{equation}
 (f_1,f_2):z_1^\infty=I_{\rm tor}.
 \label{v4:alg:toric-saturation-eq}
\end{equation}
Indeed the first two equations solve for $z_3,z_5$ when $z_1\ne0$;
their closure is the irreducible rank-one locus. Saturation removes the
extra component without removing the legitimate rank-one boundary points
with $z_1=z_2=0$. This example is why a lattice basis and the full toric
ideal are different objects when zero coordinates are admitted.

\subsection{Hilbert series, syzygies, and an explicit resolution}
\label{v4:alg:toric-hilbert}

Give every $z_i$ degree one. A degree-$d$ function on the Segre cone is
bihomogeneous of degree $(d,d)$ in the two vector factors. The dimension
is consequently $(d+1)\binom{d+2}{2}$, so
\begin{equation}
 \begin{aligned}
 H_{S/I_{\rm tor}}(t)&=
 \sum_{d\ge0}(d+1)\binom{d+2}{2}t^d\\
 &=\frac{1+2t}{(1-t)^4}
 =\frac{1-3t^2+2t^3}{(1-t)^6}.
 \end{aligned}
 \label{v4:alg:toric-hilbert-eq}
\end{equation}
The pole order gives dimension four, and the numerator at $t=1$ in
the reduced form gives degree three. These Hilbert coefficients count
polynomial functions of the six design ratios subject to their relations.
They are distinct from the finite correlation count
$1+15t+15t^2+t^3$ in Appendix~\ref{app:v3-degree-closure}.

The second expression in Eq.~\eqref{v4:alg:toric-hilbert-eq} can be read
directly from the syzygies. Define
\begin{equation}
 \mathsf H=
 \begin{pmatrix}z_5&z_6\\-z_3&-z_4\\z_2&z_1\end{pmatrix}.
 \label{v4:alg:hilbert-burch}
\end{equation}
Then $(f_1,f_2,f_3)\mathsf H=0$, and the minimal graded resolution is
\begin{equation}
 0\longrightarrow S(-3)^2\xrightarrow{\mathsf H}S(-2)^3
 \xrightarrow{(f_1,f_2,f_3)}S\longrightarrow S/I_{\rm tor}
 \longrightarrow0.
 \label{v4:alg:resolution}
\end{equation}

The Hilbert--Burch form has three quadratic generators and two
linear syzygies among them.  Here a syzygy relates the generators, and $S(-d)$ places a free
generator in degree $d$, with Hilbert series $t^dH_S(t)$. To prove
exactness directly, suppose
$a f_1+b f_2+c f_3=0$. Modulo $z_1$, the coprimality of $z_2$ and
$f_3$ implies $c=z_2h+z_1k$. Subtract $h$ times the first column and
$k$ times the second column of $\mathsf H$. The remaining syzygy has
third component zero. Since $f_1$ and $f_2$ are coprime, it is
$\ell(f_2,-f_1,0)$, and this vector equals
$\ell(z_1\mathsf H_1-z_2\mathsf H_2)$. Thus the two columns generate
the syzygy module. They are independent.  A relation between them first
has coefficients $(z_1\ell,-z_2\ell)$ by its third component, and then
$f_2\ell=0$ forces $\ell=0$. This proves exactness. Alternating Hilbert
series in Eq.~\eqref{v4:alg:resolution} reproduces the numerator in
Eq.~\eqref{v4:alg:toric-hilbert-eq}.

The rank-one locus is smooth away from its zero matrix, as can also be
seen in any chart with one nonzero entry. At the origin all first
derivatives of the quadratic minors vanish, so the Zariski tangent
space has dimension six while the variety has dimension four. The
vertex is singular. The response in Eq.~\eqref{eq:v4-toric-unit}
extends analytically through it because its divided denominator is a
unit in the local analytic ring, being nonzero at the origin.

\section{The convex dual of the resource exponent}\label{app:v4-newton-dual}
\label{v4:payoff:newton-dual}

The balancing arguments admit one common variational statement. It is
useful on a new support because it determines the best possible exponent
of a positive comparison function before its rational response has been
optimized, and its dual variables can prove a sharp coefficient. The
positivity hypothesis makes this comparison possible. Signed
numerators require separate control of cancellations.

Let $\mathcal P\subset\mathbb R^d$ be a nonempty compact convex polytope
of admissible logarithmic weight exponents. For $H>1$ put
$\mathcal X_H=\{(H^{v_1},\ldots,H^{v_d}):v\in\mathcal P\}$ and consider
\begin{equation}\label{v4:payoff:posynomial}
 F_H(x)=\sum_{i=1}^m c_i H^{-\beta_i}x^{a_i},
 \qquad c_i>0,\quad a_i\in\mathbb R^d,\quad \beta_i\in\mathbb R.
\end{equation}
A fixed projective chart and a contrast cap give linear constraints on
$v$; toric relations impose further linear equalities. A finite union of
charts is treated chart by chart. The logarithmic scaling domain here is
specified exactly, not inferred from a trial family.

\begin{proposition}[Primal scaling and dual obstruction]
\label{v4:payoff:primal-dual}
Let $\Delta_{m-1}=\{\lambda_i\ge0,\ \sum_i\lambda_i=1\}$,
$\bar a(\lambda)=\sum_i\lambda_i a_i$, and
$\bar\beta(\lambda)=\sum_i\lambda_i\beta_i$. Define
\begin{align}
 \gamma_*&=\max_{v\in\mathcal P}\min_i(\beta_i-a_i\cdot v),
 \label{v4:payoff:gamma-primal}\\
 &=\min_{\lambda\in\Delta_{m-1}}
 \left\{\bar\beta(\lambda)-
       \min_{v\in\mathcal P}\bar a(\lambda)\cdot v\right\}.
 \label{v4:payoff:gamma-dual}
\end{align}
Then, with $c_{\min}=\min_i c_i$,
\begin{equation}\label{v4:payoff:exponent-bound}
 c_{\min}H^{-\gamma_*}\le\min_{x\in\mathcal X_H}F_H(x)
 \le\Big(\sum_i c_i\Big)H^{-\gamma_*}.
\end{equation}
In particular the optimal decay exponent is exactly $\gamma_*$. Every
$\lambda\in\Delta_{m-1}$ also gives the coefficient-sensitive lower bound
\begin{equation}\label{v4:payoff:amgm-dual}
 \begin{gathered}
 F_H(x)\ge C(\lambda)H^{-\gamma(\lambda)},\\
 C(\lambda)=\prod_{i:\lambda_i>0}
                 \left(\frac{c_i}{\lambda_i}\right)^{\lambda_i},\\
 \gamma(\lambda)=\bar\beta(\lambda)-
      \min_{v\in\mathcal P}\bar a(\lambda)\cdot v\,.
 \end{gathered}
\end{equation}
\end{proposition}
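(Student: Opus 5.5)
The plan is to pass to logarithmic coordinates and treat the whole statement as a finite-dimensional linear-programming duality together with an elementary bracketing of $F_H$ by its largest monomial. For $x=(H^{v_1},\ldots,H^{v_d})$ with $v\in\mathcal P$, each term of Eq.~\eqref{v4:payoff:posynomial} equals $c_iH^{-(\beta_i-a_i\cdot v)}$. Put $g(v)=\min_i(\beta_i-a_i\cdot v)$. It is a minimum of finitely many affine functions, hence concave and continuous, so it attains its maximum $\gamma_*$ on the compact set $\mathcal P$.

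\emph{Exponent bracket.} Since $H>1$, the largest term dominates:
\begin{equation*}
c_{\min}H^{-g(v)}\le\max_i c_iH^{-(\beta_i-a_i\cdot v)}\le F_H(x)\le\Big(\sum_ic_i\Big)H^{-g(v)}.
\end{equation*}
Minimizing over $v\in\mathcal P$ turns $H^{-g(v)}$ into $H^{-\gamma_*}$ on both sides, which gives Eq.~\eqref{v4:payoff:exponent-bound}. The minimum of $F_H$ exists because $\mathcal X_H$ is compact and $F_H$ is continuous.

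\emph{Duality.} For the equality of Eqs.~\eqref{v4:payoff:gamma-primal} and \eqref{v4:payoff:gamma-dual}, write $\min_i\ell_i(v)=\min_{\lambda\in\Delta_{m-1}}\sum_i\lambda_i\ell_i(v)$ with $\ell_i(v)=\beta_i-a_i\cdot v$. This gives
\begin{equation*}
\gamma_*=\max_{v\in\mathcal P}\min_{\lambda\in\Delta_{m-1}}\Phi(v,\lambda),\qquad
\Phi(v,\lambda)=\bar\beta(\lambda)-\bar a(\lambda)\cdot v .
\end{equation*}
$\Phi$ is bilinear, and both $\mathcal P$ and $\Delta_{m-1}$ are compact and convex. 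Von Neumann's minimax theorem (or Sion's) therefore allows the exchange of $\max$ and $\min$. Carrying out the inner maximization over $v$ gives $\max_v\Phi(v,\lambda)=\bar\beta(\lambda)-\min_{v\in\mathcal P}\bar a(\lambda)\cdot v$, which is exactly Eq.~\eqref{v4:payoff:gamma-dual}. Equivalently, one can write the primal as the LP $\max\{s:\ s\le\ell_i(v),\ v\in\mathcal P\}$ and invoke strong LP duality. Keeping $\mathcal P$ as a set-valued constraint, rather than writing out its facet inequalities, keeps the dual in the stated form.

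\emph{AM--GM bound.} Fix $\lambda\in\Delta_{m-1}$ and let $I=\{i:\lambda_i>0\}$. Dropping the terms outside $I$ (they are nonnegative) and applying the weighted AM--GM inequality gives
\begin{equation*}
F_H(x)\ge\sum_{i\in I}\lambda_i\frac{c_iH^{-\ell_i(v)}}{\lambda_i}\ge\prod_{i\in I}\Big(\frac{c_i}{\lambda_i}\Big)^{\lambda_i}H^{-\sum_{i\in I}\lambda_i\ell_i(v)}.
\end{equation*}
The exponent is $-\Phi(v,\lambda)$, and $\Phi(v,\lambda)\le\gamma(\lambda)$ for every $v\in\mathcal P$. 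Because $H>1$, this yields Eq.~\eqref{v4:payoff:amgm-dual}.

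None of the steps is deep. The only point needing care is the minimax exchange, and compactness of $\mathcal P$ is what makes it valid. If the admissible set were unbounded, as with an uncapped chart, the exchange could fail and the exponent could be $+\infty$. That is why the hypothesis requires a compact polytope.
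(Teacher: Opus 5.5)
Your proof is correct and follows the paper's argument essentially step for step: the largest-term bracket at $x=H^{v}$ (using a maximizing $v_*$ for the upper bound), rewriting $\min_i$ as a minimum over the simplex followed by a bilinear minimax/LP-duality exchange, and weighted AM--GM on the support of $\lambda$, with the exponent then bounded by $\gamma(\lambda)$. One small correction to your closing aside: the exchange does not actually need $\mathcal P$ to be compact, because Sion's theorem needs only the simplex compact and finite LP duality holds in the extended reals; what compactness of $\mathcal P$ secures is that $\gamma_*$, $\min_{v}\bar a(\lambda)\cdot v$ and $\min F_H$ are attained and finite.
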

\begin{proof}
At any $x=H^v$, at least one exponent
$a_i\cdot v-\beta_i$ is at least $-\gamma_*$; its positive term gives the
lower bound in Eq.~\eqref{v4:payoff:exponent-bound}. At a maximizing $v_*$,
every term is at most $c_iH^{-\gamma_*}$, which gives the upper bound.
The minimum of finitely many numbers is their minimum convex average.
The bilinear expression
$\sum_i\lambda_i(\beta_i-a_i\cdot v)$ has equal max--min and min--max
values on the two compact polytopes, equivalently by finite linear
programming duality. This proves Eq.~\eqref{v4:payoff:gamma-dual}.
Finally discard any terms with $\lambda_i=0$ and apply weighted AM--GM to
$c_iH^{-\beta_i}x^{a_i}/\lambda_i$. The remaining monomial is bounded
below by minimizing its exponent over $\mathcal P$, which proves
Eq.~\eqref{v4:payoff:amgm-dual}.
\end{proof}

This is an explicit geometric-programming duality calculation, with
established convex input~\cite{MathBoydGP2007}. Its coefficient factor satisfies
$\log C(\lambda)=\sum_i\lambda_i\log c_i-\sum_i\lambda_i\log\lambda_i$.
The entropy in this identity is the entropy of weights on competing
monomials. It is distinct from either the configuration entropy or the
trajectory entropy of the copying process. The resulting coefficient bound is sharp when its equality conditions
are feasible; optimal exponents alone do not fix the coefficient.

When all $\lambda_i>0$, AM--GM is saturated precisely when all
$c_iH^{-\beta_i}x^{a_i}/\lambda_i$ are equal. If
$\sum_i\lambda_i a_i=0$, the monomial bound is independent of $x$.
For the three-term bounds in Eqs.~\eqref{eq:c4-cost} and \eqref{eq:p5-cost}, after
saturating $C=H$ or $Q=H$, the three exponent vectors sum to zero.
Thus $\lambda_i=1/3$ gives respectively
\begin{equation}
 \begin{gathered}
 C(\lambda)^3=27(8\cdot8\cdot6)=10368,
 \\
 C(\lambda)^3=27\left(\frac{83}{2}\,16\,\frac{83}{3}\right)=496008.
 \end{gathered}
\end{equation}
For the reflected six-path, the two vectors are $-1,1$;
$\lambda_1=\lambda_2=1/2$ gives
$C(\lambda)^2=4\cdot6\cdot82/3=656$.
These dual bounds are saturated by positive monomial scalings for all
sufficiently large $H$. The global comparison and matching
response expansions are what transfer the three constants from $F_H$
to the actual stochastic threshold.

The same dual weights determine the new preparation-dependent constants.
For birth-associated introduction on $P_6$, the two monomials
$6/q$ and $50q^2/(3H)$ have exponents $-1$ and $2$ in $q$.
Dual weights $(2/3,1/3)$ cancel that exponent, so their sum is bounded
below by $C_{\rm b}H^{-1/3}$ with
\begin{equation}
 C_{\rm b}^3=\left(\frac{6}{2/3}\right)^2
                  \frac{50/3}{1/3}=4050.
 \label{eq:v7-birth-dual-p6}
\end{equation}
On $C_4$, equal weights on $8/A$, $8/q$ and $60Aq/H$ similarly
give $C_{\rm b}^3=27\cdot8\cdot8\cdot60=103680$.
Preparation changes the monomials and their balance while the convex
dual organizes both optimizations. The global inequalities in
Appendix~\ref{app:v7-preparation} justify using these balances over
all positive weights.

\section{Exact finite-budget optimization of absorption times}
\label{app:v4-exact-time}

We first prove the neutral-time result and then extend the same constrained minimum to the four selected absorption objectives in Proposition~\ref{prop:v7-selected-time}.
Proposition~\ref{prop:v4-exact-time} concerns the same exact finite
neutral chain for both objectives. Its proof needs derivative control
through the moving fold, not just convergence of the leading times.
In the neutral proof, all weights are positive and the initial
distribution is a uniform cooperator singleton. The constraint is the closed weak target
$(1+\epsilon)J-K\ge0$, which implies $J>0$ because $K>0$.
At fixed contrast cap, normalize the minimum weight to one. The
resulting domain is compact, the target inequality is closed, and both
interval-resolvent times are continuous there. A feasible minimum
therefore exists before its structure is determined below.

\subsection{An interval resolvent with the physical clock}
On a nearest-neighbor path a singleton's descendants remain a contiguous
interval $[l,r]$. There are $N(N+1)/2-1$ proper nonempty intervals,
twenty for $P_6$. When $l>1$, expansion to $[l-1,r]$ occurs at rate
$w_{l-1,l}/(Ns_l)$ and contraction at rate
$w_{l-1,l}/(Ns_{l-1})$. The right rates are analogous, with source $r$
for expansion and source $r+1$ for contraction. Missing exterior
boundaries give no such event; singleton contractions both lead to
extinction and add. These rates keep the unit global attempt clock.

Let $L_{\rm int}$ be the positive killed interval operator, $\mu$ the
uniform singleton row, and $h([l,r])=\sum_{i=l}^r\pi_i$. Then exactly
\begin{equation}
 T_U=\mu L_{\rm int}^{-1}\mathbf1,\qquad
 T_C=N\mu L_{\rm int}^{-1}h.
 \label{eq:v4-interval-times}
\end{equation}
For the conditional expression use
$\E[\tau\mathbf1_C]=\E\!\left[\int_0^\tau h(X_s)\,ds\right]$
and $\mu h=1/N$. Thus a distinguished algebraic threshold root and
this finite rational inverse specify the optimum exactly; a formula in
radicals is unnecessary.

Fix $M_b>656$ and $\eta=\epsilon^2H\le M_b$. The global inequality in
Eq.~\eqref{v3-eq-p6-global-bound} forces every feasible design into a compact
positive chart
\begin{equation}
 x=\epsilon^2A,\quad y=\epsilon^2E,\quad
 \theta=\frac{2\epsilon BD}{B+D},\quad p=\frac D{B+D}.
 \label{eq:v4-time-chart}
\end{equation}
The central edge becomes the unique minimum. At $\epsilon=0$ the
interval process has two slow aligned mixed states $[1,3]$ and $[4,6]$
and eighteen fast transient intervals. A uniformly positive sequence of
copies aligns the fast intervals with their pins, bounding the fast
inverse. The killed slow Schur complement is $\epsilon S$ with
$S(0)=I_2/(3\theta)$. The block inverse therefore makes
$\epsilon L_{\rm int}^{-1}$ analytic with two uniformly controlled
derivatives on the compact chart. Entrance mass is $1/3$ in total,
while the two success weights are $p$ and $1-p$. It follows that
 \begin{align}
 \epsilon \,T_U&\longrightarrow\theta,\qquad
 \epsilon\, T_C\longrightarrow3\theta,  \label{eq:v4-time-leading}\\
 g_0&=\frac6\theta+
 \frac{\theta}{12p(1-p)}
 \left(\frac{53-24p}{x}+\frac{29+24p}{y}\right)-1 \notag
 \end{align}
is the corresponding limiting threshold constraint. This derives the
needed $C^2$ control directly rather than differentiating an $O(1)$
remainder for a mean time.

\subsection{The lower branch away from the fold}
On compact scaled-budget intervals above 656, the leading time optimum
has $x=y=\eta$, $p=1/2$ and
\begin{equation}
 \theta_- =\frac{12}{1+\sqrt{1-656/\eta}},\qquad
 6<\theta_-<12.
 \label{eq:v4-time-lower-root}
\end{equation}
At this point $g_{0,\theta}=(\theta-12)/\theta^2<0$.
Both exact scaled times increase with $\theta$ nearby, so their target
face is active. Solving the exact constraint for that face gives
negative derivatives in both core weights, forcing $x=y=\eta$.
The constrained imbalance curvature is positive, since
$g_{0,pp}=656\theta/(3\eta)>0$ and
$\partial_p^2\theta_{\rm face}=-g_{0,pp}/g_{0,\theta}>0$.
Uniform $C^2$ convergence keeps these signs strict. Reversal makes the
unique imbalance minimum exactly $p=1/2$. Global localization and
convergence exclude other minima. This proves the lower-root
prescription uniformly away from the limiting fold.

\subsection{A sufficient-condition proof through the exact fold}
Near $H_*(\epsilon)$ use the exact-centered coordinates in
Eq.~\eqref{v3-eq-full-tolerance-coordinates}. Let
$G=g_\epsilon/(2p_\epsilon)$ be the normalized constraint of
Eq.~\eqref{v3-eq-full-tolerance-jet}, so feasibility is $G\le0$.
Its pin Hessian is positive and its core-deficit derivatives are
positive on a fixed small convex box. For either objective
$f=\epsilon\, T_U$ or $\epsilon \,T_C$, its first two derivatives are
bounded and $f_s=\partial_{\xi_s}f$ stays bounded away from zero.

For $d>0$ let $z_-$ be the exact reflected lower-root point, with
$\xi_s=\epsilon(q_--q_*)$ and the other three errors zero.
There $G_s<0$ with magnitude comparable to $\sqrt d$. Set
\begin{equation}
 \Lambda=-\frac{f_s(z_-)}{G_s(z_-)}>0,\qquad
 \mathcal L=f+\Lambda G.
 \label{eq:v4-fold-lagrangian}
\end{equation}
Since $\Lambda\gtrsim d^{-1/2}$, for sufficiently small $d$ the
pin Hessian and the core-deficit derivatives of $\mathcal L$ are
positive throughout the box. Reducing the core deficits first lowers
$\mathcal L$; at zero deficits strict pin convexity gives a unique
minimum at $z_-$. Its symmetric derivative vanishes by construction,
and its antisymmetric derivative vanishes by reversal. Every feasible
point consequently obeys
\begin{equation}
 f=\mathcal L-\Lambda G\ge\mathcal L
 \ge\mathcal L(z_-)=f(z_-).
 \label{eq:v4-fold-time-bound}
\end{equation}
Equality forces $z=z_-$. Global feasible-set exhaustion puts all
competitors in this box as $(\epsilon,d)\to(0,0)$.
At $d=0$ the exact critical design is the only feasible one. Combining
this neighborhood with compact budget intervals away from the fold
proves Proposition~\ref{prop:v4-exact-time} for every fixed finite
$M_b$. For $H>H_*$, strict feasible points approach the lower boundary,
so the same values are infima for the strict target.

The shared optimum applies to small weak-selection targets, bounded
$\epsilon^2H$, neutral times and uniform singleton preparation.
Strong selection and unbounded scaled contrast require control beyond
this compact chart; the selected extension below retains that chart. The
fixed-$\omega$ expansion in Eq.~\eqref{eq:finite-budget-time-layer-v2}
follows from the lower root. When $\omega=O(\epsilon)$, its square-root
shift is $O(1)$; the exact lower root determines the behavior at this
order and includes the fold itself.

\subsection{Stability of the optimizer under selection}
\label{app:v7-selected-time}
For an introduced type $\sigma\in\{C,D\}$, let
$L_{\rm int}^{\sigma}$ be its killed interval operator and
$h_{\sigma}$ its probability of fixation. The boundary rates now
include the actual source fecundity. For example, expansion across
the left boundary occurs at
$F_lP_{l,l-1}/\sum_iF_i$, while contraction occurs at
$F_{l-1}P_{l-1,l}/\sum_iF_i$. Payoffs and fecundities are evaluated
in that interval configuration, with the appropriate introduced
trait. With $\mu$ the uniform singleton row, the exact physical
means are
\begin{equation}
 \begin{aligned}
 T_U^{\sigma}&=\mu(L_{\rm int}^{\sigma})^{-1}\mathbf1,\\
 T_{\rm suc}^{\sigma}
 &=\frac{\mu(L_{\rm int}^{\sigma})^{-1}h_{\sigma}}
          {\mu h_{\sigma}}.
 \end{aligned}
 \label{eq:v7-selected-interval-times}
\end{equation}
In particular $T_U^C=T_U$ and $T_{\rm suc}^C=T_C$ in the notation
of the main text. The committor identity preceding
Eq.~\eqref{eq:v4-interval-times} proves the conditional expression
without a time change.

For bounded $\epsilon^2H$, the forest bound in
Eq.~\eqref{v3-eq-selected-global-forest-bound} and the global weak
inequality confine every selected-feasible design to a compact
positive version of the chart in Eq.~\eqref{eq:v4-time-chart}.
The central bridge is then its unique minimum edge. The eighteen
fast transient intervals have a uniformly bounded inverse on a
compact positive-fecundity set, and the two slow factors are exactly
of order $\epsilon$. At $\epsilon=0$, $b=c$ and both aligned
domains have zero payoff for every fixed admissible $\kappa$.
The limiting pin establishment probabilities are one. Block
inversion therefore makes all four scaled objectives analytic,
with two uniformly controlled derivatives and limits
\begin{equation}
 \epsilon T_U^C,\epsilon T_U^D\longrightarrow\theta,
 \qquad
 \epsilon T_{\rm suc}^C,\epsilon T_{\rm suc}^D
 \longrightarrow3\theta.
 \label{eq:v7-selected-time-limits}
\end{equation}
The divided selected constraint is a uniformly smooth perturbation
of $g_0$ in Eq.~\eqref{eq:v4-time-leading} as $\kappa\to0$.

Away from the fold, each objective increases in $\theta$, so its
constraint face is active. The negative constrained core derivatives
and positive constrained imbalance curvature in the neutral proof
remain strict for sufficiently small $\epsilon,\kappa$.
Thus both cores saturate, and reversal forces the unique pin minimum
to have $p=1/2$. Compactness and the strict limiting minimum exclude
other minimizers. These statements hold uniformly on compact scaled
budgets separated from the limiting fold.

To include arbitrarily small surplus, center on the exact selected
critical design, whose global uniqueness, core saturation and
reflection follow from
Appendix~\ref{v3-app-selected-global-localization}.
Normalize the selected constraint as $G\le0$. Its pin Hessian and
inward core-deficit derivatives are positive on a fixed small box.
For any of the four scaled objectives $f$, the symmetric pin
derivative $f_s$ stays positive and its first two derivatives remain
bounded. At the lower selected root $z_-$, let
$\Lambda=-f_s(z_-)/G_s(z_-)>0$. Since
$-G_s(z_-)\asymp\sqrt d$ for the selected surplus
$d=\epsilon^2[H-H_*(\epsilon,\kappa)]$, this multiplier grows as
$d^{-1/2}$. Hence $f+\Lambda G$ has a positive pin Hessian and
positive inward core derivatives throughout the box. Its symmetric
pin derivative vanishes at $z_-$ by construction, and its
antisymmetric derivative vanishes by reversal. For every feasible
point the same inequality as in Eq.~\eqref{eq:v4-fold-time-bound}
gives
\begin{equation}
 f\ge f+\Lambda G\ge f(z_-),
 \label{eq:v7-selected-fold-time-bound}
\end{equation}
with equality precisely at the lower reflected root. The global
localization excludes competitors outside the box near the fold.
At zero surplus the critical design is the sole feasible point.
Combining this neighborhood with the compact intervals above it
proves Proposition~\ref{prop:v7-selected-time}.

Finally, substituting $q=\theta/\epsilon$ and
$Q=\eta/\epsilon^2$ into Eq.~\eqref{eq:selected-boundary-v2}
gives the leading constraint
$1=A_\kappa/\theta+B_\kappa\theta/\eta$.
Its lower root is Eq.~\eqref{eq:v7-selected-time-frontier}.
Uniform differentiability gives the stated $O(1)$ time errors away
from its fold.

\section{Preparation-dependent thresholds and contact costs}
\label{app:v7-preparation}

The copying process fixes the ancestral-pair operator; preparation
determines which histories its inverse weights. We use this separation
to derive the birth-associated classification and contact costs in
Sec.~\ref{sec:v7-preparation-cost}. The source interpretation leads to
all-weight bounds and the global six-site optimum. The final comparison
with mixed introduction mechanisms explains the change of cost exponent.

\subsection{The preparation source and its ancestral interpretation}
\label{app:v7-preparation-source}
Let $\mu$ be a probability vector used for the introduction site of
either invading trait, independently of selection. At neutrality the
initial pair disagreement is $\mu_i+\mu_j$. Integrating its backward
equation gives
\begin{equation}
 \begin{aligned}
 (T_i+T_j)D_{ij}^{\mu}
 &-\sum_k(P_{ki}D_{kj}^{\mu}+P_{kj}D_{ik}^{\mu})\\
 &=N(\mu_i+\mu_j),\qquad D_{ii}^{\mu}=0.
 \end{aligned}
 \label{eq:v7-general-preparation-pairs}
\end{equation}
The factor $N$ comes from the global attempt clock. The matrix on the
left, denoted $A_{\rm pair}$, is the negative killed generator of two ancestral
lineages with the common factor $1/N$ removed. It is invertible and
$A_{\rm pair}^{-1}$ is nonnegative. The contractions in Eq.~\eqref{eq:JK}
define $K^\mu,J^\mu$ and give
$\mathcal S^\mu=2(bJ^\mu-cK^\mu)/(NZ)$.
The edge obstruction applies to every such common preparation, so
$K^\mu-J^\mu>0$ and every positive-benefit threshold exceeds one.

For $\mu_i^{\mathrm b}=T_i/N$, dividing a row of
Eq.~\eqref{eq:v7-general-preparation-pairs} by $T_i+T_j$ gives the
first-jump recursion. Thus $D_{ij}^{\mathrm b}$ is the expected
number of ancestral jumps up to and including coalescence, and
$D_{ij}^{\mathrm b}\ge1$. The neutral fixation normalization is
\begin{equation}
 \begin{aligned}
 \rho_0^{\mathrm b}
 &=\sum_i\mu_i^{\mathrm b}\pi_i
 =\frac{2}{NZ}\sum_{i<j}\gamma_{ij},\\
 \frac{\chi_c^{\mathrm b}}{\rho_0^{\mathrm b}}
 &=\frac{\sum_{i<j}\gamma_{ij}D_{ij}^{\mathrm b}}
        {\sum_{i<j}\gamma_{ij}}\ge1.
 \end{aligned}
 \label{eq:v7-birth-normalization}
\end{equation}
This relates the absolute response amplitude to an edge-weighted jump
count. A long residence time at a pin does not by itself increase
that count.

\subsection{Exact polynomial reconstruction on the small supports}
\label{app:v7-birth-polynomials}
The finite algebraic sign proofs below can be reconstructed directly
from the pair matrix, without enumerating the configuration space.
Order the unordered pairs lexicographically, put
$(\bm b_{\rm src})_{ij}=T_i+T_j$, $k_{ij}=\gamma_{ij}$ and
$j_{ij}=-\mathsf B_{ij}$, and form
\begin{equation}
 \frac{U}{V}
 =\frac{k^{\mathsf T}\operatorname{adj}(A_{\rm pair})\bm b_{\rm src}}
        {j^{\mathsf T}\operatorname{adj}(A_{\rm pair})\bm b_{\rm src}}.
 \label{eq:v7-birth-reconstruction}
\end{equation}
Here $\operatorname{adj}(A_{\rm pair})$ is the adjugate matrix. Clear rational
denominators, remove the common polynomial factor, and choose the
common sign so that $U$ is positive at unit weights. This defines
coprime integer polynomials up to a common positive multiplier.
The identity $UJ^{\mathrm b}=VK^{\mathrm b}$ is an exact consequence
of the reconstruction. In each case below, $U$ has positive
coefficients, so $V$ has the sign of $J^{\mathrm b}$ throughout
the positive domain.

For the five-site path with consecutive weights $(A,1,q,Q)$,
the $10\times10$ reconstruction has 728 nonzero coefficients in
$U$, all positive, and 726 in $V$, all negative. Consequently
\begin{equation}
 J^{\mathrm b}(A,1,q,Q)<0\qquad(A,q,Q>0).
 \label{eq:v7-birth-p5-sign}
\end{equation}
These are full polynomial coefficient signs over the three-variable
domain. On the hierarchy $(q^2,1,q,q^3)$, the exact susceptibilities
give the more explicit interpretation
\begin{equation}
 \chi_b^{\mathrm b}=-\frac{14}{5q^2}+O(q^{-3}),\qquad
 \chi_c^{\mathrm b}=\frac8{5q}+O(q^{-2}).
 \label{eq:v7-birth-p5-asymptotic}
\end{equation}
The rapidly overwritten dimer receives frequent introductions, while
the protected pin receives few. The negative benefit term in this
hierarchy is consistent with the global sign in
Eq.~\eqref{eq:v7-birth-p5-sign}.

For the four-cycle with cyclic weights $(1,A,Q,q)$, the
$6\times6$ reconstruction gives 996 monomials in each of $U,V$.
The cost numerator $U$ and the 988-term difference $U-3V$ have positive
coefficients, while $V$ has mixed signs. Thus
$R^{\mathrm b}>3$ on the positive-benefit domain. To obtain a
global contrast bound, rotate a smallest edge to the unit position
and reflect so that $A\ge q\ge1$. Define
\begin{equation}
 \mathcal P_{\mathrm b}
 =(U-3V)AqQ-[8Q(A+q)+60A^2q^2]V.
 \label{eq:v7-birth-c4-remainder}
\end{equation}
The three possible order chambers are parameterized by the ordered
weights $1+X$, $1+X+Y$, $1+X+Y+Z$, with $X,Y,Z\ge0$.
Complete coefficient expansion gives
\begin{equation}
 \begin{array}{c|c|c}
  \text{chamber}&\text{polynomial}&\text{nonzero coefficients}\\ \hline
  Q\ge A\ge q\ge1&\mathcal P_{\mathrm b}&2928\ \text{positive}\\
  A\ge Q\ge q\ge1&V&2041\ \text{negative}\\
  A\ge q\ge Q\ge1&V&2301\ \text{negative}.
 \end{array}
 \label{eq:v7-birth-c4-chambers}
\end{equation}
The constant terms have the indicated strict signs, so boundaries
are included. A promoter must therefore lie in the first chamber,
where division by $AqQV>0$ gives
\begin{equation}
 R^{\mathrm b}-3>
 \frac8A+\frac8q+\frac{60Aq}{Q}
 \ge\frac{16}{\sqrt{Aq}}+\frac{60Aq}{H}
 \ge\left(\frac{103680}{H}\right)^{1/3}.
 \label{eq:v7-birth-c4-global}
\end{equation}
The last minimum has $(Aq)^{3/2}=2H/15$. In the reflected chart
$x=q^{-1}$, $y=q^2/Q$, the exact ratio is regular at the origin
and equals $3+16x+60y+O((x+y)^2)$. Taking
$q=24/\epsilon+O(1)$ and
$Q=103680\epsilon^{-3}+O(\epsilon^{-2})$ matches the global bound.
Every competitive sequence is confined to the scaled chart
$(A,q,Q)=(a/t,d/t,h/t^3)$, $t=H^{-1/3}$, with positive scaled
variables in a compact set. Its leading gap divided by $t$ is
$8/a+8/d+60ad/h$. This decreases in $h$, has its unique minimum at
$h=1$, $a=d=(2/15)^{1/3}$, and has pin Hessian
$\left(\begin{smallmatrix}120&60\\60&120\end{smallmatrix}\right)$
there. Regularity with derivatives, local strict convexity and
reflection symmetry therefore establish eventual exact reflection
of the finite-weight optimum.

\subsection{The six-site grading and sharp global cost}
\label{app:v7-birth-p6-cost}
Normalize the path weights to $(A,B,1,D,E)$. For its
$15\times15$ reconstruction, set $C=U-V$ and use the grading
$(3,1,1,3)$ on $(A,B,D,E)$. The degrees of $U,V,C$ are
$75,75,74$, respectively. Normalize the common multiplier so that
their leading parts obey
\begin{equation}
 \begin{gathered}
 \relax[U]_{75}=[V]_{75}=U_0,
 \quad U_0=324A^{10}E^{10}B^7D^7(B+D),\\
 [C]_{74}=U_0F_{\mathrm b},\quad
 F_{\mathrm b}=\frac3B+\frac3D
  +\frac{25}{3}BD\left(\frac1A+\frac1E\right).
 \end{gathered}
 \label{eq:v7-birth-p6-leading-grade}
\end{equation}
The notation $[U]_{75}$ denotes the component of weighted degree
75. The complete polynomials $U$ and $C$ have positive
coefficients. Two coefficient comparisons connect their leading
graded parts to all positive weights,
\begin{equation}
 71C-18(U-U_0)\ge_{\rm coeff}0,\qquad
 C-U_0F_{\mathrm b}\ge_{\rm coeff}0.
 \label{eq:v7-birth-p6-coefficients}
\end{equation}
Here $\ge_{\rm coeff}0$ means every monomial coefficient is nonnegative.
Both expressions are polynomials; the negative powers in
$F_{\mathrm b}$ cancel against $U_0$. These finite comparisons,
obtained from Eq.~\eqref{eq:v7-birth-reconstruction}, involve the
complete 15346-term numerator and 15344-term gap polynomial.
For $V>0$ they imply
\begin{equation}
 R^{\mathrm b}-1\ge
 \frac{F_{\mathrm b}}{1+53F_{\mathrm b}/18}.
 \label{eq:v7-birth-p6-global-gap}
\end{equation}
Indeed, $V\le U_0+53C/18$ and $C/U_0\ge F_{\mathrm b}$.
The reduced response is rational over the positive pair domain;
the sign proof uses exact coefficient arithmetic, not sampling.

The central-edge normalization ensures $A,E\le H$ even before the
smallest edge is identified. With $q_g=\sqrt{BD}$,
\begin{equation}
 F_{\mathrm b}\ge\frac6{q_g}+\frac{50q_g^2}{3H}
 \ge\left(\frac{4050}{H}\right)^{1/3}.
 \label{eq:v7-birth-p6-resource}
\end{equation}
The second inequality is saturated at $q_g^3=9H/50$. Hence every
target-feasible weighting with $0<\epsilon<18/53$ satisfies
\begin{equation}
 H\ge4050\left(\frac{1-53\epsilon/18}{\epsilon}\right)^3.
 \label{eq:v7-birth-finite-gap-bound}
\end{equation}
The reflected expansion in Eq.~\eqref{eq:v7-preparation-objectives}
attains its leading coefficient.

The grading also controls the local expansion required for an exact
optimizer. Put $t=H^{-1/3}$ and
$(A,B,D,E)=(a/t^3,X/t,Y/t,e/t^3)$. Equation~\eqref{eq:v7-birth-p6-global-gap} confines every minimizing sequence
to a compact positive scaled domain. After the leading powers are
removed, the denominator tends to a positive multiple of
$a^{10}e^{10}X^7Y^7(X+Y)$. Thus
$(R^{\mathrm b}-1)/t$ is analytic there, with limiting function
\begin{equation}
 \frac3X+\frac3Y+
 \frac{25}{3}XY\left(\frac1a+\frac1e\right).
 \label{eq:v7-birth-scaled-objective}
\end{equation}
Negative core derivatives force $a=e=1$ exactly for sufficiently
small $t$. The unique limiting minimum has
$X=Y=(9/50)^{1/3}$ and Hessian
$\left(\begin{smallmatrix}100/3&50/3\\50/3&100/3\end{smallmatrix}\right)$.
Its positive eigenvalues, $50$ and $50/3$, persist locally. Global
localization and the implicit-function theorem leave one nearby
minimum, which reflection symmetry forces to have $B=D$ exactly.
This proves that the reflected subleading coefficients are also the
unrestricted global coefficients.

Their evaluation needs the quadratic part of the regular reflected
chart rather than its full rational expression,
 \begin{align}
 R^{\mathrm b}-1={}&6x+\frac{50}{3}y+10x^2
       +\frac{1478}{9}xy+\frac{100}{3}y^2  \label{eq:v7-birth-reflected-quadratic}\\
       &+O((x+y)^3),\qquad x=q^{-1},\quad y=q^2/Q. \notag
 \end{align}
Write $q=k/\epsilon$, $Q=L/\epsilon^3$ and
$(R^{\mathrm b}-1)/\epsilon=f(k,L)+\epsilon g(k,L)+O(\epsilon^2)$.
Equation~\eqref{eq:v7-birth-reflected-quadratic} gives
\begin{equation}
 \begin{aligned}
 f&=\frac6k+\frac{50k^2}{3L},\\
 g&=\frac{10}{k^2}+\frac{1478k}{9L}+
                         \frac{100k^4}{3L^2}.
 \end{aligned}
 \label{eq:v7-birth-optimizer-expansion}
\end{equation}
The leading equations $f=1$, $f_k=0$ have the solution
$(k,L)=(9,4050)$. For
$k=9+\epsilon k_1+O(\epsilon^2)$ and
$L=4050+\epsilon L_1+O(\epsilon^2)$, the next equations are
$f_LL_1+g=0$ and
$f_{kk}k_1+f_{kL}L_1+g_k=0$, evaluated at that point.
They give $L_1=6096$ and $k_1=337/90$, as used in
Eq.~\eqref{eq:v7-preparation-costs}.

The coefficient $25/3$ has a local ancestral explanation. Within a
fast triple the leading jump counts satisfy
\begin{equation}
 \begin{aligned}
 D^{\mathrm b,(0)}_{12}&=1+D^{\mathrm b,(0)}_{13}/3,\\
 D^{\mathrm b,(0)}_{13}&=1+D^{\mathrm b,(0)}_{23},\\
 D^{\mathrm b,(0)}_{23}&=1+D^{\mathrm b,(0)}_{13}/2.
 \end{aligned}
 \label{eq:v7-birth-fast-pairs}
\end{equation}
so $(D^{\mathrm b,(0)}_{12},D^{\mathrm b,(0)}_{13},D^{\mathrm b,(0)}_{23})=(7/3,4,3)$. In the edge contribution
\begin{equation}
 \begin{aligned}
 \mathcal E_{ij}&=D^{\mathrm b}_{ij}
 -\frac12\sum_k(P_{ik}-P_{jk})(D^{\mathrm b}_{jk}-D^{\mathrm b}_{ik}),\\
 K^{\mathrm b}-J^{\mathrm b}&=\sum_{i<j}\gamma_{ij}\mathcal E_{ij}.
 \end{aligned}
 \label{eq:v7-birth-edge-deficit}
\end{equation}
the two fast edges contribute $14/3$ and $11/3$ after extracting
their conductances. Their sum is $25/3$. The central edge has zero
leading deficit; its first correction is
$t(3/X+3/Y)+O(t^2)$. This accounts separately for the
pin waiting and follower establishment terms in
Eq.~\eqref{eq:v7-birth-scaled-objective}.

For comparison at a fixed design, substitution $Q=kq^2$ with fixed
$k>2$ gives
\begin{equation}
 R^{\mathrm b}(kq^2,q)=\frac{9k+132}{9k-18}+O(q^{-1}).
 \label{eq:v7-birth-uniform-design}
\end{equation}
The uniform optimum has $k=41/9$, giving $173/23$. At the
birth-associated optimum the two pin introduction masses are each
$\epsilon/54+O(\epsilon^2)$, their reproductive values tend to
$1/2$, and follower contributions to neutral fixation are
$O(\epsilon^2)$. Equation~\eqref{eq:v7-birth-normalization}
therefore gives $\rho_0^{\mathrm b}\sim\epsilon/54$.

\subsection{A fixed uniform fraction}
\label{app:v7-fixed-mixture}
For $\mu^\zeta=\zeta\mu^U+(1-\zeta)\mu^{\mathrm b}$,
linearity of Eq.~\eqref{eq:v7-general-preparation-pairs} gives
\begin{equation}
 K^\zeta=\zeta K^U+(1-\zeta)K^{\mathrm b},\qquad
 J^\zeta=\zeta J^U+(1-\zeta)J^{\mathrm b}.
 \label{eq:v7-mixture-exact}
\end{equation}
The threshold ratio is consequently not a linear interpolation.
Fix $0<\zeta\le1$, put $\mathcal G^\mu=K^\mu-J^\mu$ and
$g^\mu=\mathcal G^\mu/K^\mu$. Since $T_i\le5$ on $P_6$,
positivity of $A_{\rm pair}^{-1}$ gives
$D^{\mathrm b}\le5D^U$ and $K^{\mathrm b}\le5K^U$.
The edge obstruction makes every $\mathcal G^\mu$ positive. Therefore
\begin{equation}
 g^U\le\frac{5-4\zeta}{\zeta}\,g^\zeta.
 \label{eq:v7-mixture-localization}
\end{equation}
A mixed threshold approaching one thus also forces the uniform
threshold to approach one. The uniform global bound, together with
a trial design of contrast $O_\zeta(\epsilon^{-2})$, confines every
minimizer to its compact hierarchy chart.

Set $t=H^{-1/2}$ and
$(A,B,D,E)=(a/t^2,X/t,Y/t,e/t^2)$. On that chart,
\begin{equation}
 \begin{gathered}
 K^U=\frac{2t}{X+Y}+O(t^2),\qquad
 \mathcal G^U=\frac{2t^2}{X+Y}F_U+O(t^3),\\
 F_U=\frac3X+\frac3Y+
       \frac{53X+29Y}{6a}+
       \frac{29X+53Y}{6e},\\
 K^{\mathrm b}=O(t^2),\qquad
 \mathcal G^{\mathrm b}=\frac{25}{3}t^2(1/a+1/e)+O(t^3).
 \end{gathered}
 \label{eq:v7-mixture-leading-data}
\end{equation}
Thus $(R^\zeta-1)/t$ tends to
\begin{equation}
 F_\zeta=F_U+
 \frac{25(1-\zeta)}{6\zeta}(X+Y)(1/a+1/e).
 \label{eq:v7-mixture-objective}
\end{equation}
This decreases in both core variables, so its minimum has $a=e=1$.
There it reduces to
\begin{equation}
 F_\zeta=\frac3X+\frac3Y+
       \frac{16+25/\zeta}{3}(X+Y).
 \label{eq:v7-mixture-reduced-objective}
\end{equation}
Its unique minimum is $4\sqrt{16+25/\zeta}$ at
$X=Y=3/\sqrt{16+25/\zeta}$. The square of this minimum fixes the least contrast,
\begin{equation}
 \begin{aligned}
 H_{\min}^{\zeta}&=\left(256+\frac{400}{\zeta}\right)\epsilon^{-2}
       +O_\zeta(\epsilon^{-1}),\\
 q_\star^\zeta&=\frac{12}{\epsilon}+O_\zeta(1).
 \end{aligned}
 \label{eq:v7-mixture-cost-main}
\end{equation}
The negative core derivatives and positive pin Hessian persist in
the analytic finite-weight problem, giving eventual exact reflection
and saturated cores. The argument fixes $\zeta>0$ throughout;
it does not interchange the limits $\epsilon\downarrow0$ and
$\zeta\downarrow0$.

\section{Payoff, encounter and support extensions}
\label{app:v7-robustness}

The response contraction and the domain reduction test different aspects
of the model. The former determines how a general payoff matrix or a
changed encounter kernel modifies weak selection; the latter identifies
which enlarged supports retain protected competitors. This appendix
derives the extensions in Sec.~\ref{sec:v7-robustness} and states which
finite-selection conclusions follow from each change.

\subsection{Paired response for a general two-strategy game}
\label{app:v7-general-game}
For a symmetric two-strategy game, the row player's matrix is
$\left(\begin{smallmatrix}a_{CC}&a_{CD}\\a_{DC}&a_{DD}\end{smallmatrix}\right)$.
With the same averaged kernel $P$, its payoff at site $i$ is
 \begin{align}
 f_i={}&a_{DD}+(a_{CD}-a_{DD})n_i
 +(a_{DC}-a_{DD})(Pn)_i \notag\\
 &+\Gamma_{\rm game} n_i(Pn)_i, \notag \\
 \Gamma_{\rm game}={}&a_{CC}-a_{CD}-a_{DC}+a_{DD}.  \label{eq:v7-game-payoff}
 \end{align}
The constant term cancels from the first drift correction of $\Phi$.
The last term is quadratic, but the paired comparison reduces its
response to the same pair observables as the donation game.
To see the cancellation, put $x=n_i$, $y=n_j$,
$m=(Pn)_i$ and $z=(Pn)_j$, and write $f(x,m)$ for
Eq.~\eqref{eq:v7-game-payoff}. Direct expansion gives
\begin{equation}
 \begin{aligned}
 &(x-y)[f(x,m)-f(y,z)]\\
 &\quad +(y-x)[f(1-x,1-m)-f(1-y,1-z)]\\
 &=(a_{CC}+a_{CD}-a_{DC}-a_{DD})(x-y)^2\\
 &\quad +(a_{CC}-a_{CD}+a_{DC}-a_{DD})(x-y)(m-z).
 \end{aligned}
 \label{eq:v7-game-cancellation}
\end{equation}
Integrating this identity over the neutral process and using any
common selection-independent singleton preparation proves
Eq.~\eqref{eq:v7-general-game}, with that preparation's $K,J$.
Since $K-J>0$, the weak criterion is the established
structure-coefficient form~\cite{TarnitaEtAl2009},
\begin{equation}
 \sigma_{\rm str}(a_{CC}-a_{DD})>a_{DC}-a_{CD},\qquad
 \sigma_{\rm str}=\frac{K+J}{K-J}.
 \label{eq:v7-structure-coefficient}
\end{equation}
For donation promoters,
$\sigma_{\rm str}=(R+1)/(R-1)$. Its monotonic dependence on $R$
converts the sharp uniform support infima into the suprema stated
in Sec.~\ref{sec:v7-robustness}. When $a_{CC}>a_{DD}$, these
determine whether a contact design can overcome a prescribed
off-diagonal disadvantage. In particular, a six-site path can
overcome every fixed finite value of $a_{DC}-a_{CD}$.

At fixed selection, keep all payoffs fixed and the fecundities
positive and bounded away from zero. In the two-domain path limit,
the cooperative and defective aligned domains have fecundities
$\phi(\delta a_{CC})$ and $\phi(\delta a_{DD})$. The bounded fast
inverse preserves the two pin entrance probabilities $1/N$.
Consequently
\begin{equation}
 \Delta_\infty=\frac2N
 \frac{\phi(\delta a_{CC})-\phi(\delta a_{DD})}
      {\phi(\delta a_{CC})+\phi(\delta a_{DD})}.
 \label{eq:v7-general-game-domain}
\end{equation}
For strictly increasing fecundity and positive selection, its sign
is the diagonal payoff difference. The off-diagonal entries control finite-hierarchy
establishment corrections. Equation~\eqref{eq:v7-game-cancellation}
concerns the paired first-order response. The quadratic term
$\Gamma_{\rm game}$ remains in the full selected generator, so the
donation-specific higher-order selection expansion requires a new
calculation for a general game.

\subsection{Distinct interaction and replacement kernels}
\label{app:v7-encounter-kernel}
Let $Q_{\rm int}$ be a nonnegative row-stochastic interaction
kernel, while reciprocal replacement continues through $P$.
The payoff is $-cn_i+b(Q_{\rm int}n)_i$. Neutral ancestry and $K$
are unchanged, and the benefit contraction becomes
\begin{equation}
 J_{\rm int}=-\sum_{i<j}
 \left[\frac{Q_{\rm int}^{\mathsf T}L_\gamma+
                   L_\gamma Q_{\rm int}}2\right]_{ij}D_{ij}.
 \label{eq:v7-separate-benefit}
\end{equation}
On a replacement interface directed from C to D, the payoff
difference is at most $b-c$. Thus the reproductive-value drift
is adverse for $b<c$ and nonpositive at $b=c$. If interactions
are loopless, a cooperative singleton has payoff $-c$ whereas
every defective site has nonnegative payoff. Every mixed state
can reach a cooperative singleton before absorption, so the
integrated drift is strictly adverse also at $b=c$. The strict
weak barrier $R_{\rm int}>1$ therefore survives this change.
Allowing self-interaction preserves the non-strict barrier; for
example, $Q_{\rm int}=I$ gives exact neutrality at $b=c$.

For a quantitative finite-design bound, put $\mathsf E_{\rm int}=Q_{\rm int}-P$ and
\begin{equation}
 \omega_{\rm int}(\mathsf E_{\rm int})=\max_{w_{ij}>0}\frac12\sum_k|(\mathsf E_{\rm int})_{ik}-(\mathsf E_{\rm int})_{jk}|.
 \label{eq:v7-encounter-oscillation}
\end{equation}
Every row of $\mathsf E_{\rm int}$ sums to zero. Hence, for binary $n$,
$|(\mathsf E_{\rm int} n)_i-(\mathsf E_{\rm int} n)_j|\le\omega_{\rm int}(\mathsf E_{\rm int})$ on every replacement edge, and
\begin{equation}
 |(\mathsf E_{\rm int} n)^{\mathsf T}L_\gamma n|
 \le\omega_{\rm int}(\mathsf E_{\rm int})n^{\mathsf T}L_\gamma n.
 \label{eq:v7-encounter-form-bound}
\end{equation}
Integration gives
\begin{equation}
 \left|\frac{J_{\rm int}}K-\frac JK\right|
 \le\omega_{\rm int}(\mathsf E_{\rm int})
 \le2\max_i\operatorname{TV}((Q_{\rm int})_i,P_i),
 \label{eq:v7-encounter-stability}
\end{equation}
where $\operatorname{TV}(p,q)=\tfrac12\sum_k|p_k-q_k|$ is
total-variation distance between two probability rows. Thus a
finite promotion margin at $r>R$ survives whenever twice the
maximum row distance is less than $1/R-1/r$. The version in
Eq.~\eqref{eq:v7-encounter-stability} remains meaningful if the
changed benefit coefficient is nonpositive.

For $Q_{\rm int}^{\alpha}=(1-\alpha)P+\alpha W$,
$0\le\alpha\le1$, let $W_{ij}=(1-\delta_{ij})/(N-1)$ as in
Sec.~\ref{sec:v7-robustness}. The identity
$(Wn)_i-(Wn)_j=-(n_i-n_j)/(N-1)$ gives
\begin{equation}
 J_\alpha=(1-\alpha)J-\frac{\alpha K}{N-1}.
 \label{eq:v7-encounter-affine}
\end{equation}
This proves Eq.~\eqref{eq:v7-encounter-map-main}. The map is
strictly increasing in $R$ on the positive-benefit domain, so
each sharp support infimum $r_*$ transforms to
\begin{equation}
 R_{*,\alpha}=
 \frac{r_*}{1-\alpha-\alpha r_*/(N-1)},
 \quad 0\le\alpha<\frac{N-1}{N-1+r_*}.
 \label{eq:v7-encounter-infimum}
\end{equation}
At and above the endpoint, $J_\alpha\le0$ for every weighting.
For any positive finite target $r_{\rm tar}$ and $0\le\alpha<1$,
on the respective positive-benefit domains $J_\alpha>0$ and $J>0$,
the feasible set is exactly
\begin{equation}
 R_\alpha\le r_{\rm tar}\quad\Longleftrightarrow\quad
 R\le\frac{(1-\alpha)r_{\rm tar}}
              {1+\alpha r_{\rm tar}/(N-1)}.
 \label{eq:v7-encounter-target-map}
\end{equation}
The existing weak-threshold contrast, neutral-time and tolerance
optimizations therefore transfer by this target substitution.

There is also a selected-process identity for exponential fecundity.
Writing $M(n)=\sum_i n_i$, the payoff is
\begin{equation}
 f_i=-c_{\rm eff}n_i+b_{\rm eff}(Pn)_i
        +\frac{\alpha b}{N-1}M(n),
 \label{eq:v7-encounter-effective-payoff}
\end{equation}
with $b_{\rm eff}=(1-\alpha)b$ and
$c_{\rm eff}=c+\alpha b/(N-1)$. The final term multiplies all
exponential fecundities by the same factor, which cancels from
normalized source selection. The full generator and its physical
clock are thus unchanged after replacing $b,c$ by the effective
parameters. For linear fecundity, the common additive offset
does not cancel at finite intensity; the weak-response map remains
exact. On protected-domain supports the aligned payoff difference
is $g=b[1-N\alpha/(N-1)]-c$, and exponential fecundity gives
$\Delta_\infty=(2/N)\tanh(\delta g/2)$.

\subsection{Adding contacts and retaining protected domains}
\label{app:v7-support-extension}
Let connected supports $G_0,G_1$ have the same vertex set with
$E(G_0)\subset E(G_1)$. Retain a fixed positive weighting on
$G_0$ and give each new edge weight $\eta>0$. At $\eta=0$ the
retained support is connected, so the killed pair matrix is
invertible. The contractions are analytic there, and every
positive-benefit threshold on $G_0$ is approached from positive
weights on $G_1$. Thus $R_*(G_1)\le R_*(G_0)$.
Together with the universal lower bound, this proves
$R_*(G)=1$ whenever $G$ contains a spanning path with $N\ge6$.
The same continuity argument applies to birth-associated
preparation, whose source is analytic in the replacement kernel.

To control the contrast while retaining all extra edges, suppose
the vertices partition as
\begin{equation}
 V=\{\ell\}\sqcup F_L\sqcup\{r\}\sqcup F_R.
 \label{eq:v7-two-core-partition}
\end{equation}
The pins $\ell,r$ are adjacent. Each follower-induced subgraph is
connected, contains at least two vertices and contacts its own pin.
Assign all follower-core edges weight $Q$. Choose nonempty sets of
pin-to-own-follower edges with fixed positive proportions of a total
weight $q$ at each pin. Give every other edge weight one, including
$\ell r$. Put $u=1/q$ and $v=q/Q$.

At $u=v=0$, the pins are fixed and each follower core eventually
aligns with its pin. The four aligned configurations are the fast
closed classes. Because the graph is fixed and each follower core
is connected, a finite sequence of copies aligns it with a
probability bounded below on compact positive-fecundity domains.
The killed fast inverse is consequently bounded. Every exit from
an aligned configuration has a factor $u$, so its harmonic Schur
complement divided by $u$ is regular at the origin.

Direct pin-to-pin copies change the receiving domain. A pin copy
to an opposite follower occurs at order $u$, but leaves both pins
unchanged; fast relaxation returns to the same aligned state.
Copies from a follower to the opposite domain have rate $O(uv)$.
Thus extra edges contribute fast return excursions or higher-order
domain transitions, and the leading two-domain competition remains
the same. With $a=\phi(\delta(b-c))$, $d=\phi(0)$ and
$h=a/(a+d)$, either mixed domain has cooperative exit probability
$h$. Uniform preparation enters each such domain state with
probability $1/N+O(u+v)$, so
\begin{equation}
 \rho_C=\frac{2h}{N}+O(u+v),\quad
 \rho_D=\frac{2(1-h)}N+O(u+v).
 \label{eq:v7-broader-domain-probabilities}
\end{equation}
Near neutrality, the bounded inverse controls the differentiated
remainders and gives
$\mathcal S=(b-c)/N+O((u+v)(|b|+|c|))$.
Taking $Q=q^2$ therefore gives
$H_{\min}^G(\epsilon)=O_G(\epsilon^{-2})$ under uniform introduction.

For birth-associated introduction, the two pin masses are
$u/N+O(u^2+v)$. At $v=0$, a follower cannot overwrite a pin,
so its fixation response vanishes; regularity makes it $O(v)$.
Consequently
\begin{equation}
 \mathcal S_{\mathrm b}
 =\frac{u(b-c)}N+O((u^2+v)(|b|+|c|)).
 \label{eq:v7-broader-birth-response}
\end{equation}
Taking $v=u^2$, or $Q=q^3$, gives a threshold approaching one
and a constructive $O_G(\epsilon^{-3})$ contrast bound. The
no-go results for paths through four sites hold for every singleton
preparation. Together with Eq.~\eqref{eq:v7-birth-p5-sign}, this
completes the path classification in
Eq.~\eqref{eq:v7-preparation-classification}. None of these
support arguments assumes a limit in the number of vertices.
\labelalias{eq:p5-certificate}{app:path-p5-certificate}
\labelalias{eq:copy-operator}{eq:app-copy-pauli}
\labelalias{eq:main-memory}{eq:app-p6-memory}
\labelalias{eq:pgf}{eq:app-p6-pgf}
\labelalias{eq:v3-main-detection}{eq:v3-joint-detection-cost}
\bibliography{references}

\end{document}